\definecolor{ao(english)}{rgb}{0.0, 0.5, 0.0}
\newcommand{\R}{{\mathbb R}}
\newcommand{\N}{{\bm N}}
\newcommand{\be}{\begin{eqnarray}}
\newcommand{\ee}{\end{eqnarray}}
\renewcommand{\O}{\Omega}
\renewcommand{\det}{{\rm det}\,}
\newcommand{\cof}{{\rm Cof }\,}
 \numberwithin{equation}{section}
\newtheorem{theorem}{Theorem}[section]
\newtheorem{proposition}{Proposition}[section]
\newcommand{\Red}[1]{\textcolor[rgb]{1.00,0.00,0.00}{#1}}
\newcommand{\Cross}{\mathbin{\tikz [x=1.4ex,y=1.4ex,line width=.25ex] \draw (0.1,0.1) -- (0.9,0.9) (0.1,0.9) -- (0.9,0.1);}}
\newcommand{\vect}[1]{\boldsymbol{#1}}
\journal{\Red{}} 
\begin{document}
\begin{frontmatter}
\title{A polyconvex transversely-isotropic invariant-based formulation for electro-mechanics: stability, minimisers and computational implementation}

\author{Martin Hor\'{a}k$^\dagger$\fnref{cor1}} 
\author{Antonio J. Gil$^\ddagger$\fnref{cor2}}  
\author{Rogelio Ortigosa$^{\star}$\fnref{cor3}}
\author{Martin  Kru\v{z}\'{\i}k$^{\ast,\dagger}$\fnref{cor4}}

\fntext[cor1]{Corresponding author: martin.horak@cvut.cz}
\fntext[cor2]{Corresponding author: a.j.gil@swansea.ac.uk}
\fntext[cor3]{Corresponding author: rogelio.ortigosa@upct.es}
\fntext[cor4]{Corresponding author: kruzik@utia.cas.cz}

\address{
$\dagger$ Faculty of Civil Engineering, Czech Technical
University, Th\'{a}kurova 7, CZ-166~ 29~Praha~6, Czech Republic\\
$\ddagger$ Zienkiewicz Centre for Computational Engineering, Faculty of Science and Engineering \\ Swansea University, Bay Campus, SA1 8EN, United Kingdom\\
$\star$Computational Mechanics and Scientific Computing Group,\\
Technical University of Cartagena, Campus Muralla del Mar, 30202, Cartagena (Murcia), Spain\\
$\ast$ Czech Academy of Sciences, Institute of Information Theory and Automation, Pod vod\'{a}renskou
v\v{e}\v{z}\'{\i}~4, CZ-182~00~Praha~8, Czech Republic
}

\begin{abstract}
The fabrication of evermore sophisticated miniaturised soft robotic components made up of Electro-Active Polymers (EAPs) is constantly demanding parallel development from the in-silico simulation point of view. The incorporation of crystallographic anisotropic micro-architectures, within an otherwise nearly uniform isotropic soft polymer matrix, has shown great potential in terms of advanced three-dimensional actuation (i.e. stretching, bending, twisting), especially at large strains, that is, beyond the onset of geometrical pull-in instabilities. To accommodate for this in-silico response, this paper presents a phenomenological invariant-based polyconvex transversely isotropic framework for the simulation of EAPs at large strains. This research expands previous work developed by Gil and Ortigosa for isotropic EAPs \cite{Gil2016_NewFramework} with the help of the pioneering work by Schr\"{o}eder and Neff \cite{schroder2003invariant} in the context of polyconvexity for materials endowed with crystallographic architectures in single physics mechanics. The paper also summarises key important results both in terms of the existence of minimisers and material stability. In addition, a series of numerical examples is presented in order to demonstrate the effect that the anisotropic orientation and the contrast of material properties, as well as the level of deformation and electric field, have upon the response of the EAP when subjected to large three-dimensional stretching, bending and torsion, including the possible development of wrinkling.
\end{abstract}
\begin{keyword} 
	polyconvexity, transversely isotropic, dielectric elastomers, electro-elasticity, finite element method
\end{keyword}
\end{frontmatter}
\section{Introduction}\label{sec:intro}

Electro-Active Polymers (EAPs) are a type of soft smart materials capable of displaying significant change in shape in response to electrical stimuli. EAPs have been thoroughly studied over the years \cite{Bar-Cohen2001, Kornbluh2000,Pelrine1992,deBotton2007} and dielectric elastomers are recognised as one of the most popular EAPs \cite{Pelrine1998, Pelrine2000_High-speed, Lacour2004} due to their outstanding actuation capabilities\footnote{It is worth noting that the maximum achieved voltage-induced area deformation of a dielectric elastomer has been 2200\% \cite{an2015experimental}.} (i.e., lightweight, fast response time, flexibility, low stiffness properties), which makes them ideal for their use as soft robots \cite{Bar-Cohen2004, Bortot2016, Carpi2011, Kornbluh2000, OHalloran2008, Bar-Cohen2001, Carpi2008_Dielectric,Rudykh2012, Li2013_Giant}, flexible energy generators \cite{McKay2010, Kornbluh2011} or tunable optics \cite{aschwanden2006polymeric, rosset2016small}. However, a large electric field is still generally required in order to access the large actuation regime in EAPs, an aspect that very often places them at risk of electromechanical pull-in instabilities or even electrical breakdown \cite{Bertoldi2011}.

It was shown in \cite{xiao2016suppression} that fibre reinforcement can suppress the electro-mechanical pull-in instability and improve EAPs' applicability. A dielectric elastomer stiffened with fibers in the hoop direction leading to large, voltage-induced deformations was proposed in \cite{huang2012large}. In \cite{he2018voltage}, a voltage-driven deformation of dielectric elastomer reinforced with a family of helical fibres leading to torsional behaviour is studied and the effect of fibre orientation on the actuation of a fibre-reinforced bending actuator is analysed in \cite{ahmadi2020nonlinear}. 

In order to reduce the operational voltage required for actuation in EAPs, some authors have advocated for the design of composite-based EAPs \cite{Zhang2002, Huang2004_AllOrganic, Huang2004_Enhanced}, typically combining an ultra-soft and low-permittivity elastomer matrix with a stiffer and high-permittivity inclusion \cite{PonteCastaneda2012, Goshkoderia2017, Huang2004_Enhanced, Stoyanov2011}. Due to the continuous improvement in layer-by-layer fabrication techniques \cite{Galich2017_Shear}, multi-layered laminated EAP composites have gained popularity. From the modelling point of view, References \cite{Li2003, Li2004,deBotton2006,Tian2012,Gei2018,Gei2013, Tian2012, deBotton2007, Bertoldi2011, Rudykh2011} have demonstrated that the contrast between the properties of the composite constituents is one of their critical design factors, where the actuation performance of an EAP composite subjected to simple in-plane stretching can be amplified by several orders of magnitude with respect to that of a single-phase EAP, even in the linearised regime. 

The use of computational methods constructed on the basis of variational principles is nowadays acknowledged as the preferred method for the in-silico simulation of complex actuation. Building upon the works of Toupin \cite{Toupin1956,Toupin1960} and Dorfmann, Ogden, MacMeeking, Suo, and co-workers in
\cite{Dorfmann2005, Dorfmann2006, McMeeking2004, Suo2008, dorfmann2014nonlinear,landis2002new,vu2007numerical,vu20102}, recent contributions in the field of computational electro-mechanics can be found in \cite{Vu2007,Vu2012,Jabareen2015,Franke2019,Bishara2018}. In these works, the constitutive behaviour of a single-phase electro-mechanical material is encoded within a carefully (phenomenologically) defined energy functional which depends upon appropriate strain measures, a Lagrangian electric variable and, if dissipative effects are considered, an electromechanical internal variable \cite{Liao2020}. Bustamante et al. in \cite{bustamante2009transversely, bustamante2010simple} presents an irreducible invariant basis for transversely isotropic EAPs (without piezoelectric effects) as well as several constitutive restrictions that the final energy density must fulfilled, such as the Baker–Ericksen inequality \cite{bustamante2010simple}.

However, in order to design and optimise EAPs for a specific application, a robust and reliable physics-based mathematical theory is needed. Despite great recent progress in the field of constitutive modeling of electro-active materials (see above), the polyconvexity condition \cite{ball1976convexity}, which represents a sufficient condition for a mathematically well-posed formulation, has been introduced only recently\footnote{Note that, in the context of a boundary value problem, polyconvexity together with coercivity guarantees the existence of global minimizers.}, see \cite{gil2016new, ortigosa2016new, siboni2015electromechanical}. In previous publications \cite{Gil2016_NewFramework, Ortigosa2016_ComputationalFramework, Ortigosa2016_NewFramework_ConservationLaws,Poya2018}, the authors put forward a new computational framework for single-phase reversible electro-mechanics, where material stability is always ensured via the selection Convex Multi-Variable (CMV) energy densities, that is, convex with respect to the minors of the deformation gradient tensor $(\vect{F},\vect{H},J)$, the Lagrangian electric displacement $\vect{D}_0$, and the spatial electric displacement $\vect{d}=\vect{F}\vect{D}_0$. CMV energy densities (generally referred to as \textit{polyconvex} or $\mathcal{A}$\textit{-polyconvex} \cite{Silhavy2017}) guarantee existence of minimisers \cite{Silhavy2017}, ellipticity \cite{Gil2016_NewFramework} in the quasi-static case and hyperbolicity in the dynamic case \cite{Ortigosa2016_NewFramework_ConservationLaws}, thus precluding anomalous mesh dependency effects. The theory of polyconvexity for mechanical anisotropic energy densities was magnificently explored by Schr\"{o}eder and Neff \cite{schroder2003invariant} and it was also initially studied for electro-mechanics anisotropic energies in \cite{itskov2016polyconvex}\footnote{However, the differential constraint imposed by the Gauss's law was not accounted for, leading to an incomplete polyconvexity condition, omitting the coupled electro-mechanical contributions.}.

As a result, the aim of this paper is three-fold. \textbf{\textit{First}}, the in-silico modelling of realistic three-dimensional deformation scenarios (namely, combined bending/torsion/stretching), way beyond the onset of geometrical instabilities and without the need to resort to any simplifications in the kinematics of the problem (e.g., plane strain, exact incompressibility). \textbf{\textit{Second}}, the consideration of internal micro-architectures beyond the simpler single-phase isotropy, with focus on the very relevant transverse isotropy groups, with the idea of developing computationally efficient (phenomenological, invariant-based) constitutive models for EAPs, which could be in the future enriched by state-of-the-art data assimilation techniques~ \cite{vetra2018state}. \textbf{\textit{Third}}, to ensure the existence and material stability (ellipticity) of computer simulations by unifying and enhancing key results scattered in the literature.  From the macroscopic point of view, the loss of ellipticity leads to a mathematically and numerically ill-posed problem that needs to be regularised, e.g., by introducing higher-order derivatives into the constitutive law resulting in the so-called higher-order continua, see \cite{benevsova2018note, horak2020gradient} for a recently proposed theory of gradient-polyconvexity. Such regularisation is out of the present article's scope and will be studied in a forthcoming publication.

From the \textit{\textbf{constitutive modelling standpoint}}, this paper will apply  polyconvexity, objectivity, and \textit{isotropicisation}  \cite{vsilhavy2018variational}, in order to develop energy densities constructed on the basis of polyconvex invariants forming an irreducible basis for the characterisation of transversely isotropic EAPs. From the \textbf{\textit{numerical implementation standpoint}}, Finite Element analysis of transversely isotropic EAPs will be presented using an enhanced mixed Hu-Washizu three-field type of formulation $(\vect{\phi},\vect{D}_0,\varphi)$. The high nonlinearity of the quasi-static electro-mechanical problem will be addressed via a monolithic Newton-Raphson scheme, with an arc length technique used to bypass geometrical instabilities. A tensor cross product operation between vectors and tensors \cite{Bonet2015,Bonet2016_TensorCrossProduct} will be used to reduce the complexity of the algebra.

The outline of this paper is as follows. Section \ref{sec:background_theory} describes the necessary elements of nonlinear continuum reversible electro-mechanics and some key variational principles. Section \ref{sec:poly} focusses on the concept of polyconvexity as a basis for the description of phenomenological constitutive models for EAPs. The section starts with some generalised convexity conditions, before introducing the actual definition of polyconvexity or $\mathcal{A}$-polyconvexity (as referred to by \v{S}ilhavý \cite{vsilhavy2018variational}, also referred as Multi-Variable Convexity by Gil and Ortigosa \cite{Gil2016_NewFramework}). The section summarises and puts into context key results in terms of existence of minimisers \cite{Silhavy2017} and material stability for coupled reversible electro-mechanics. Section \ref{sec:invariants} describes the basis for the anisotropic invariant-based constitutive models adopted in this work, where special attention is paid to the important crystallographic groups $\mathcal{D}_{\infty h}$ and $\mathcal{C}_{\infty}$. The section concludes by listing so-called polyconvex basis of invariants for the creation of polyconvex invariant-based constitutive models. Section \ref{sec:examples} presents a series of numerical examples in order to assess the capabilities of the proposed model. Specifically, in a  first example, a local analysis is conducted at a quadrature (Gauss point) level where the effect of orientation of anisotropy upon purely mechanical and electro-mechanical tests is presented. In the second example, complex three-dimensional bending/torsion/stretching combined modes of deformation are studied for a soft robot actuator, monitoring macroscopic stability, and observing the effect that anisotropy has in the predominance of a deformation mode against others. This will be observed only when the material is subjected to extreme deformation as it is the case of the three-dimensional bending of a multi-layered EAP. Finally, in the third example, the effect of anisotropy orientation on the wrinkling patterns is studied. Eventually, section \ref{sec:conclusions} provides some concluding remarks about the paper.

\section{Nonlinear continuum electromechanics}\label{sec:background_theory}
This section briefly summarises the fundamental electro-mechanical equations governing the response of an Electro-Active Polymer (EAP) at large strains.

\subsection{Kinematics: motion and deformation}

Let us consider the deformation of an Electro-Active Polymer (EAP) (see Figure \ref{fig:contact kinematics}) with reference configuration given by a bounded Lipschitz domain $\Omega_0\subset \mathbb{R}^3$ with boundary $\partial\Omega_0$, and its unit outward normal $\vect{N}$. After the deformation, the EAP occupies an actual configuration given by a set $\Omega\subset\mathbb{R}^3$ with boundary $\partial \Omega$ and unit outward normal $\vect{n}$. If the deformation is continuous and bijective  then  $\Omega$ is also open. The deformation of the EAP is defined by a mapping $\vect{\phi}:\Omega_0\to\mathbb{R}^3$ linking material particles $\vect{X}\in\Omega_0$ to  $\vect{x}\in\Omega = \vect{\phi}(\Omega_0)$.
Associated with $\vect{\phi}\left(\vect{X}\right)$, the deformation gradient tensor  $\vect{F}$ is defined as\footnote{Lower case indices $\{i,j,k\}$ will be used to represent the spatial configuration, whereas capital case indices $\{I,J,K\}$ will be used to represent the material description. Zero as the  lower index next to the nabla operator indicates that it is the referential gradient.}
\begin{equation}\label{eqn:definition of F}
\vect{F} = \vect{\nabla}_0\vect{\phi}\left(\vect{X}\right);\qquad
F_{iI}=\frac{\partial \phi_i}{\partial X_I}.
\end{equation}

Related to $\vect{F}$, its co-factor $\vect{H}$ and its Jacobian $J$ \cite{deBoer1982,Bonet2016_TensorCrossProduct} are defined as
\begin{equation}\label{eqn:H and J tensor cross product expression}
	\begin{aligned}
	\vect{H}=\cof \vect{F}=\frac{1}{2}\vect{F}\Cross\vect{F};\qquad  J=\det \vect{F}=\frac{1}{3}\vect{H}:\vect{F}, \end{aligned}
\end{equation}
with $\left(\vect{A}\Cross\vect{B}\right)_{iI}=\mathcal{E}_{ijk}\mathcal{E}_{IJK}A_{jJ}B_{kK}$, $\forall \vect{A},\vect{B}\in\mathbb{R}^{3\times 3}$,
where $\mathcal{E}_{ijk}$ (or $\mathcal{E}_{IJK}$) symbolises the third-order alternating tensor components and the use of repeated indices implies summation\footnote{In addition, throughout the paper, the symbol $(\cdot)$ indicates the scalar product or contraction of a single index $\vect{a}\cdot \vect{b}=a_i b_i$; the symbol $(:)$, double contraction of two indices $\vect{A}:\vect{B}=A_{ij}B_{ij}$; the symbol $(\times)$, the cross product between vectors $(\vect{a} \times \vect{b})_i=\mathcal{E}_{ijk}a_jb_k$; and the symbol $(\otimes)$, the outer or dyadic product $(\vect{a}\otimes \vect{b})_{ij}=a_ib_j$.}. The deformation mapping is assumed to be sufficiently smooth, bijective and orientation
preserving, so that $J>0 \;a.e.$ to avoid interpenetrability.

\begin{figure}[h!]
	\centering
	\begin{tabular}{c}
		\includegraphics[width=0.55\textwidth]{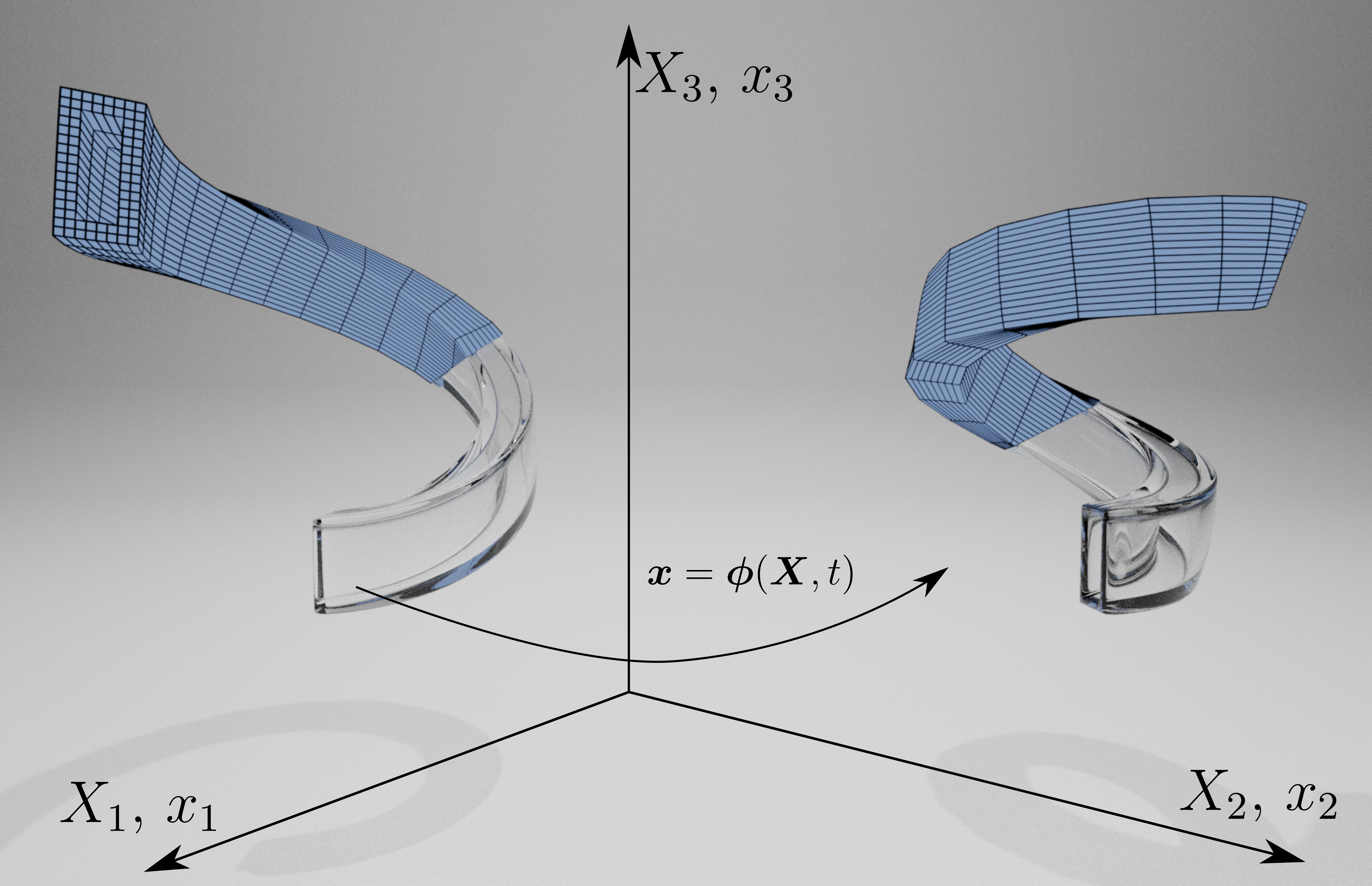} 
	\end{tabular}
	\caption{Deformation mapping $\vect{\phi}\left(\vect{X}\right)$.}
	\label{fig:contact kinematics}
\end{figure}

\subsection{Governing equations in electro-mechanics}

In the absence of inertial effects, the conservation of linear momentum (i.e. equilibrium) can be written in a Total Lagrangian fashion as
\begin{eqnarray}\label{eq:DivP}
{\rm DIV}\; \bm{P} + \bm{b}_0 &=& 0 \quad  \text{in} \quad {\Omega_0}, 
\end{eqnarray}
where ${\rm DIV}$ denotes the divergence operator computed in the material configuration, $\bm{P}$ represents the first Piola-Kirchhoff stress tensor, $\vect{b}_0$ is a body force per unit undeformed volume, and the accompanying boundary conditions are given by
\begin{subequations}
\begin{align}
\bm{P} \bm{N} &= \bar{\bm{t}} \quad  \text{on} \quad {\partial\Omega^t_0}; \\
\bm{\phi} &= \bar{\bm{\phi}}  \quad  \text{on} \quad {\partial\Omega^u_0},\label{eqn:Dirich_displacements}
\end{align}
\end{subequations}
where $\bar{\bm{t}}$ denotes the prescribed traction over part of boundary $\partial \Omega_0^t$, and $\bar{\bm{\phi}}$ are prescribed positions over $\partial \Omega_0^u$, where $\partial \Omega_0^t$ and $\partial \Omega_0^u$ are  two complementary subsets, that is, $\partial\Omega^t \cap \partial\Omega^u_0 = \emptyset$ and $\partial\Omega^t \cup \partial\Omega^u = \partial \Omega$. In addition, conservation of angular momentum leads to the well-known symmetry condition  $\bm{P}\bm{F}^T = \bm{F}\bm{P}^T$ \cite{Bonet2016_Nonlinear_Statics}.

In the absence of magnetic and time-dependent effects, the description of the EAP response is completed with Faraday's and Gauss' laws. Although, in general, the presence of electric fields in the surrounding media, that is, the vacuum, must be accounted for, in the case of EAPs this can be safely neglected due to its negligible influence\footnote{The reader is referred to \cite{Gil2016_NewFramework} for the consideration of the vacuum at a continuum level and \cite{vu2007numerical,vu20102} for its numerical implementation.}. With this in mind, Faraday's law and associated boundary conditions are expressed as 
\begin{subequations}
\begin{align}
\bm{E}_0  &= -\nabla_0 \varphi \quad  \text{in} \quad {\Omega_0};\label{eqn:faraday}
\\
{\varphi} &= \bar{\varphi} \;\;\; \qquad  \text{on} \quad {\partial\Omega^\varphi_0},\label{eqn:Dirichlet_potential}
\end{align}
\end{subequations}
and Gauss' law and associated boundary condition as
\begin{subequations}\label{eq:DivD}
\begin{align}
{\rm DIV}\;\bm{D}_0  &= \rho_0 \qquad  \text{in} \quad {\Omega_0}; \label{eq:DivD_a}
\\ 
\bm{D}_0\cdot\bm{N} &= -\omega_0 \quad  \text{on} \quad {\partial\Omega^\omega_0}, \label{eq:DivD_b}
\end{align}
\end{subequations}
where $\bm{E}_0$ and  $\bm{D}_0$ are the Lagrangian electric field and Lagrangian electric displacement, respectively, $\varphi$ is the electric potential subjected to prescribed value $\bar{\varphi}$ on the boundary $\partial\Omega^\varphi_0$, $\rho_0$ is the electric density charge per unit undeformed volume and  $\omega_0$ is the electric charge per unit undeformed area applied over part of boundary $\partial\Omega_0^\omega$, such that $\partial\Omega^\omega_0 \cap \partial\Omega^\varphi_0 = \emptyset$ and $\partial\Omega^\omega_0 \cup \partial\Omega^\varphi_0 = \partial \Omega$. To close the system of governing equations \eqref{eqn:definition of F} to \eqref{eq:DivD_b}, constitutive equations describing the behaviour of the EAP need to be introduced, which are presented in the following section.

\subsection{Constitutive equations: reversible electro-mechanics}

In the present paper, we resort to reversible behaviour described by electro-hyperelasticity~\cite{DorOgd14NTEM} and introduce an internal energy  $e$ per unit  volume in $\Omega_0$, typically defined in terms of the two-point deformation gradient tensor $\vect{F}$ and the Lagrangian electric displacement vector $\vect{D}_0$, that is, $e=e(\vect{F},\vect{D}_0)$, where $e:\R^{3\times 3}\times \R^{3}\to\R\cup\{+\infty\}$. For simplicity, we write $e=e(\boldsymbol{\mathcal{U}})$ with $\boldsymbol{\mathcal{U}}=\left(\vect{F},\vect{D}_0\right)$. 

Following standard thermodynamic arguments, that is the so-called Coleman--Noll procedure \cite{tadmor2012continuum}, the first directional derivative of the internal energy with respect to variations $\delta \boldsymbol{\mathcal{U}}=(\delta \vect{F},\delta \vect{D}_0)$ gives the first Piola-Kirchhoff stress tensor $\vect{P}$ and the electric field $\vect{E}_0$ as work (dual) conjugates of $\vect{F}$ and $\vect{D}_0$ as
\begin{equation}\label{eqn:directional derivative}
De[\delta \boldsymbol{\mathcal{U}}]=\partial_{\boldsymbol{\mathcal{U}}} e \bullet \delta \boldsymbol{\mathcal{U}}=\vect{P}:\delta\vect{F}+\vect{E}_0\cdot\delta\vect{D}_0, \qquad
\end{equation}
with
\begin{equation}\label{eq:P_and_E}
\vect{P} (\boldsymbol{\mathcal{U}})= {\partial_{\vect{F}} e}(\boldsymbol{\mathcal{U}});\qquad 
\vect{E}_0 (\boldsymbol{\mathcal{U}}) = {\partial_{\vect{D}_0} e}(\boldsymbol{\mathcal{U}}),
\end{equation}
where $\partial_{\vect{A}} e$ denotes the partial derivative of $e$ with respect to the field $\vect{A}$ and $\bullet$ denotes a suitable inner (dual) product. Similarly, assuming sufficient smoothness of the internal energy $e$, the second directional derivative of the internal energy yields the Hessian operator $\mathbb{H}_e$, that is,
\begin{equation}\label{tangent_operator_1}
D^2e[\delta \boldsymbol{\mathcal{U}};\delta \boldsymbol{\mathcal{U}}]=\delta \boldsymbol{\mathcal{U}} \bullet \mathbb{H}_e \bullet \delta \boldsymbol{\mathcal{U}}.
\end{equation}

For the sake of material characterisation, it is convenient to rewrite the Hessian $\mathbb{H}_e$ in terms of a fourth order type of elasticity tensor  $\vect{\mathcal{C}}_e\in\mathbb{R}^{3\times 3\times 3\times 3}$, a third order type of piezoelectric tensor $\vect{\mathcal{Q}}\in\mathbb{R}^{3\times 3\times 3}$ and a second-order type of dielectric tensor $\vect{\theta}\in\mathbb{R}^{3\times 3}$ defined as
\begin{equation}\label{constitutive_tensors_0}
D^2e[\delta \boldsymbol{\mathcal{U}};\delta \boldsymbol{\mathcal{U}}]=\left[\delta\vect{F}:\delta\vect{D}_0\cdot\right]\mathbb{H}_e
\left[\begin{matrix}
:\delta\vect{F}\\
\delta\vect{D}_0
\end{matrix}\right];\qquad%
\mathbb{H}_e=
\left[\begin{matrix}
\vect{\mathcal{C}}_e & \vect{\mathcal{Q}}^T\\
\vect{\mathcal{Q}} & \vect{\mathcal{\theta}}\\
\end{matrix}\right],
\end{equation}
where 
\begin{equation}\label{constitutive_tensors}
\vect{\mathcal{C}}_e(\boldsymbol{\mathcal{U}})=\partial^2_{\vect{FF}}e(\boldsymbol{\mathcal{U}});\qquad
\vect{\mathcal{Q}}(\boldsymbol{\mathcal{U}})=\partial^2_{\vect{D}_0\vect{F}}e(\boldsymbol{\mathcal{U}});\qquad
\vect{\mathcal{\theta}}(\boldsymbol{\mathcal{U}})=\partial^2_{\vect{D}_0\vect{D}_0}e(\boldsymbol{\mathcal{U}}),
\end{equation}
with $\left(\vect{\mathcal{Q}}^T\right)_{jJI}=\left(\vect{\mathcal{Q}}\right)_{IjJ}$. 

\subsubsection{The Helmholtz's free energy density function}

For completeness, it is possible to introduce an alternative Helmholtz's type of energy density $\Psi=\Psi(\boldsymbol{\mathcal{W}})$ with $\boldsymbol{\mathcal{W}}=(\vect{F},\vect{E}_0)$, where $\Psi:\R^{3\times 3}\times \R^{3}\to\R\cup\{+\infty\}$, as
\begin{equation}\label{eqn:Helmholtz_energy}
\Psi(\boldsymbol{\mathcal{W}})=-\sup_{\vect{D}_0}
\left\{\vect{E}_0\cdot\vect{D}_0 - e\left(\boldsymbol{\mathcal{U}}\right)\right\},
\end{equation}
via the use of a partial Legendre transform in \eqref{eqn:Helmholtz_energy}\footnote{A sufficient condition for the existence of the Helmholt'z energy density \eqref{eqn:Helmholtz_energy} is the convexity of the internal energy density $e$ with respect to the field $\vect{D}_0$.}, where the first Piola-Kirchhoff stress tensor and the electric displacement can be alternatively defined as
\begin{equation}
\vect{P}(\boldsymbol{\mathcal{W}})=\partial _{\vect{F}} \Psi(\boldsymbol{\mathcal{W}}) \qquad \vect{D}_0(\boldsymbol{\mathcal{W}})=-\partial_{\vect{E}_0}\Psi(\boldsymbol{\mathcal{W}}).
\end{equation}

Similarly to \eqref{tangent_operator_1}, the second directional derivative of the Helmholtz's free energy density gives rise to the Hessian operator $\mathbb{H}_{\Psi}$ of $\Psi$ as
\begin{equation}
D^2 \Psi[\delta \boldsymbol{\mathcal{W}};\delta \boldsymbol{\mathcal{W}}]=\delta \boldsymbol{\mathcal{W}} \bullet \mathbb{H}_{\Psi} \bullet \delta \boldsymbol{\mathcal{W}},
\end{equation}
which can be re-expressed in terms of the elasticity tensor  $\vect{\mathcal{C}}_{\Psi}\in\mathbb{R}^{3\times 3\times 3\times 3}$, the piezoelectric tensor $\vect{\mathcal{P}}\in\mathbb{R}^{3\times 3\times 3}$ and the  dielectric tensor $\vect{\epsilon}\in\mathbb{R}^{3\times 3}$ defined as
\begin{equation}
D^2\Psi[\delta \boldsymbol{\mathcal{W}};\delta \boldsymbol{\mathcal{W}}]=\left[\delta\vect{F}:\delta\vect{E}_0\cdot\right]\mathbb{H}_{\Psi}
\left[\begin{matrix}
:\delta\vect{F}\\
\delta\vect{E}_0
\end{matrix}\right];\qquad%
\mathbb{H}_{\Psi}=
\left[\begin{matrix}
\vect{\mathcal{C}}_{\Psi} & -\vect{\mathcal{P}}^T\\
-\vect{\mathcal{P}} &   -\vect{\mathcal{\epsilon}}\\
\end{matrix}\right],
\end{equation}
where 
\begin{equation}\label{constitutive_tensors_2}
\vect{\mathcal{C}}_{\Psi}(\boldsymbol{\mathcal{W}})=\partial^2_{\vect{FF}}\Psi(\boldsymbol{\mathcal{W}});\qquad
\vect{\mathcal{P}}(\boldsymbol{\mathcal{W}})=-\partial^2_{\vect{E}_0\vect{F}}\Psi(\boldsymbol{\mathcal{W}});\qquad
\vect{\mathcal{\epsilon}}(\boldsymbol{\mathcal{W}})=-\partial^2_{\vect{E}_0\vect{E}_0}\Psi(\boldsymbol{\mathcal{W}}),
\end{equation}
with $\left(\vect{\mathcal{P}}^T\right)_{jJI}=\left(\vect{\mathcal{P}}\right)_{IjJ}$. It is possible \cite{Ortigosa2016_ComputationalFramework} to relate the components of the Hessian operators of $\Psi$ \eqref{constitutive_tensors} and $e$ \eqref{constitutive_tensors_2} as
\begin{equation}\label{eqn:constitutive tensors relationship}
\begin{aligned}
\vect{\epsilon} =\vect{\theta}^{-1};\qquad
\vect{\mathcal{P}}^T=-\vect{\mathcal{Q}}\cdot\vect{\epsilon};\qquad
\vect{\mathcal{C}}_{\Psi}=\vect{\mathcal{C}}_e + \vect{\mathcal{Q}}^T\cdot\vect{\mathcal{P}},
\end{aligned}
\end{equation}
where the inner product $\cdot$ above indicates contraction of the indices placed immediately before and after it.

\subsection{Variational principles in reversible electro-mechanics}

The solution of the boundary value problem defined by the governing equations \eqref{eqn:definition of F} to \eqref{eq:DivD_b}, in conjunction with a suitable definition of the internal energy density $e$ or Helmholtz's free energy density $\Psi$, is typically solved based on the direct methods of the calculus of variations. With that in mind, the \textit{\textbf{first mixed variational principle}} can be introduced as
\begin{equation}\label{eqn:mixed variational principle I}
(\vect{\phi}^{\ast},\vect{D}_0^{\ast})=\text{arg} \inf_{\mathcal{A}_{\Pi_e}} \Pi_e\left(\vect{\phi},\vect{D}_0\right) ,
\end{equation}
where
\begin{equation}\label{eqn:mixed variational principle Ib}
\Pi_e\left(\vect{\phi},\vect{D}_0\right)=\int_{{\Omega}_0}e\left(\vect{\vect{F}},\vect{D}_0\right)\,dV - \Pi_{\text{ext}}^m\left(\vect{\phi}\right),
\end{equation}
$\Pi_{\text{ext}}^m(\vect{\phi})$ representing the external work done by the mechanical actions, defined as
\begin{equation}\label{eqn:mechanical_work}
\Pi_{\text{ext}}^m\left(\vect{\phi}\right) = \int_{\Omega_0}\vect{b}_0\cdot\vect{\phi}\,dV + 
\int_{\partial\Omega_0^t}\bar{\vect{t}}\cdot\vect{\phi}\,dA,
\end{equation}
and with the functional space $\mathcal{A}_{\Pi_e}$ given by
\begin{equation}
\mathcal{A}_{\Pi_e}=\left\{\vect{\phi} \subset \mathcal{G}_{\vect{\phi}}(\Omega_0;\R^{3}),\vect{D}_0 \subset \mathcal{G}_{\vect{D}_0}(\Omega_0;\R^{3}); \;\text{s.t}. \eqref{eqn:Dirich_displacements},\eqref{eq:DivD_a},\eqref{eq:DivD_b} \right\}.
\end{equation}
with $\mathcal{G}_{\vect{\phi}},\mathcal{G}_{\vect{D}_0}$ suitable functional spaces to be defined. The use of the asterisk $(\cdot)^{\ast}$ in \eqref{eqn:mixed variational principle I} denotes the solution (minimiser) of the variational principle. As it is well-known, for the existence of minimisers of the above variational problem, some mathematical restrictions must be imposed on 1) $\mathcal{A}_{\Pi_e}$ and 2) the form that the internal energy density $e$ must adopt \cite{ball1976convexity,vsilhavy2018variational}, which will be the focus of the following section. 

From the computational standpoint, it is convenient to resort to an alternative variational principle in order to by-pass the strong imposition of conditions \eqref{eq:DivD} and allow the possibility of imposing boundary conditions directly on the electric potential (i.e. via electrodes attached to the EAP). Indeed, after the introduction of the electric potential $\varphi$ (which can be understood as a Lagrange multiplier for the weak enforcement of Gauss' law \eqref{eq:DivD}), the \textit{\textbf{second mixed variational principle}} emerges as
\begin{equation}\label{eqn:mixed variational principle II}
(\vect{\phi}^{\ast},\varphi^{\ast},\vect{D}_0^{\ast})=\text{arg} \inf_{\mathcal{A}_{\tilde{\Pi}_{e,\vect{\phi}}}}
\inf_{\mathcal{A}_{\tilde{\Pi}_{e,\vect{D}_0}}}
\sup_{\mathcal{A}_{\tilde{\Pi}_{e,\varphi}}}
\tilde{\Pi}_e\left(\vect{\phi},\varphi,\vect{D}_0\right) ,
\end{equation}
where
\begin{equation}
\tilde{\Pi}_e\left(\vect{\phi},\varphi,\vect{D}_0\right)=\int_{{\Omega}_0}e\left(\vect{F},\vect{D}_0\right)\,dV + \int_{{\Omega}_0}\vect{D}_0 \cdot \vect{\nabla}_0 \varphi\,dV- \Pi_{\text{ext}}\left(\vect{\phi},\varphi\right),
\end{equation}
where  $\Pi_{\text{ext}}(\vect{\phi},\varphi)$ represents the combined external work done by the mechanical and electrical actions, defined as
\begin{equation}\label{eqn:total_work}
\Pi_{\text{ext}}\left(\vect{\phi},\varphi\right)=\Pi_{\text{ext}}^m\left(\vect{\phi}\right)+\Pi_{\text{ext}}^e\left({\varphi}\right);\qquad 
\Pi_{\text{ext}}^e\left({\varphi}\right) = -\int_{\Omega_0}\rho_0 {\varphi}\,dV - 
\int_{\partial\Omega_0^t}\omega_0{\varphi}\,dA,
\end{equation}
and with 

\begin{equation}\label{eqn:spaces}
\begin{aligned}
\mathcal{A}_{\tilde{\Pi}_{e,\vect{\phi}}}&=\left\{\vect{\phi}\subset \tilde{\mathcal{G}}_{\vect{\phi}}(\Omega_0;\R^{3});\;\text{s.t}. \eqref{eqn:Dirich_displacements} \right\};\;\\
\mathcal{A}_{\tilde{\Pi}_{e,\vect{D}_0}}&=\left\{\vect{D}_0 \subset \tilde{\mathcal{G}}_{\vect{D}_0}(\Omega_0;\R^{3})\right\};\; \\
\mathcal{A}_{\tilde{\Pi}_{e,\varphi}}&=\left\{\varphi \subset \tilde{\mathcal{G}}_{\varphi}(\Omega_0;\R);\;\text{s.t}. \eqref{eqn:Dirichlet_potential}\right\},
\end{aligned}
\end{equation}
with $\tilde{\mathcal{G}}_{\vect{\phi}},\tilde{\mathcal{G}}_{\vect{D}_0},\tilde{\mathcal{G}}_{\varphi}$ suitable functional spaces to be defined. This variational principle \eqref{eqn:mixed variational principle II} will be the one selected for the numerical examples presented later on in this paper, due to its excellent accuracy as a result of its extended nature. 

Furthermore, use of the Helmholtz's free energy density functional \eqref{eqn:Helmholtz_energy} in \eqref{eqn:mixed variational principle II} leads to the \textit{\textbf{third mixed variational principle}} as follows,
\begin{equation}\label{eqn:mixed variational principle III}
(\vect{\phi}^{\ast},\varphi^{\ast})=\text{arg} \inf_{\mathcal{A}_{\Psi_{\vect{\phi}}}}\sup_{\mathcal{A}_{\Psi_{\varphi}}} \Pi_{\Psi}\left(\vect{\phi},\varphi\right) ,
\end{equation}
where
\begin{equation}\label{eqn:displacement based formulation}
\Pi_{\Psi}\left(\vect{\phi},\varphi\right)=\int_{{\Omega}_0}\Psi\left(\vect{F},\vect{E}_0\right)\,dV - \Pi_{\text{ext}}\left(\vect{\phi},\varphi\right),
\end{equation}
where $\mathcal{A}_{\Psi_{\vect{\phi}}}$ and $\mathcal{A}_{\Psi_{\varphi}}$ are suitably adapted functional spaces analogous to their counterparts $\mathcal{A}_{\tilde{\Pi}_{e,\vect{\phi}}}$ and $\mathcal{A}_{\tilde{\Pi}_{e,\varphi}}$ in \eqref{eqn:spaces}. The latter variational principle is typically preferred for numerical simulations due to its compact form and reduced number of degrees of freedom, described solely in terms of the mapping $\phi^{\ast}$ and the electrical potential $\varphi^{\ast}$. 

\section{Polyconvexity in electromechanics}\label{sec:poly}
To ensure the existence of solutions of the above variational principles $\Pi_e$ \eqref{eqn:mixed variational principle I}, $\tilde{\Pi}_e$ \eqref{eqn:mixed variational principle II} or $\Pi_{\Psi}$ \eqref{eqn:mixed variational principle III}, the energy density $e$ must fulfil certain mathematical requirements. In particular, we require, that the energy functionals are lower semicontinuous in a suitable topology.  This relates to the convexity properties of the integrands.   Prior to introducing the notion of polyconvexity in the context of electromechanics, the section starts by introducing some alternative mathematical restrictions (the reader is referred to the  monograph \cite{Silhavy_book} for an in-depth discussion in the context of  mechanics). The section continues by introducing the requirement of polyconvexity and its most relevant implications, namely, the existence of minimisers and the assurance of material stability.

\subsection{Some notions on generalised convexity conditions}

One of the simplest conditions is that of \textit{\textbf{convexity}} of $e(\boldsymbol{\mathcal{U}})$, that is 
\begin{equation}\label{convexity}
e(\lambda\boldsymbol{\mathcal{U}}_1+(1-\lambda)\boldsymbol{\mathcal{U}}_2)\leq\lambda e(\boldsymbol{\mathcal{U}}_1) +(1-\lambda)e(\boldsymbol{\mathcal{U}}_2);\quad \forall\; \boldsymbol{\mathcal{U}}_1,\boldsymbol{\mathcal{U}}_2 ;\quad \lambda \in [0,1], 
\end{equation}
which for functions with first order differentiability can be alternatively written as
\begin{equation}
e(\boldsymbol{\mathcal{U}}+\delta \boldsymbol{\mathcal{U}})-e(\boldsymbol{\mathcal{U}})-De(\boldsymbol{\mathcal{U}})[\delta \boldsymbol{\mathcal{U}}]\geq0;\qquad \forall\, \boldsymbol{\mathcal{U}}, \delta \boldsymbol{\mathcal{U}},
\end{equation}
and for functions with second order differentiability (in terms of the Hessian $[\mathbb{H}_e]$) as
\begin{equation}\label{eqn:convexity2}
D^2e(\boldsymbol{\mathcal{U}})[\delta \boldsymbol{\mathcal{U}};\delta \boldsymbol{\mathcal{U}}]=\delta \boldsymbol{\mathcal{U}} \bullet [\mathbb{H}_e] \bullet \delta \boldsymbol{\mathcal{U}}
\geq 0;\qquad \forall \,\boldsymbol{\mathcal{U}},\delta \boldsymbol{\mathcal{U}},
\end{equation}
which requires (refer to \eqref{constitutive_tensors_0}, \eqref{constitutive_tensors}) positive semi-definiteness of $\vect{\mathcal{C}}_e$ and $\vect{\theta}$, independently, and further restrictions on the form that the third-order tensor $\vect{\mathcal{Q}}$ can adopt. However, convexity away from the origin (i.e. $\vect{F} \approx \vect{I}$ and $\vect{D}_0\approx \vect{0}$) is not a suitable physical restriction as it precludes the realistic behavior of materials such as buckling in the purely mechanical context, as well as the possibility of voltage-induced buckling, inherent to soft dielectric materials \cite{Vera_thesis}. An alternative mathematical restriction is that of $\mathcal{A}$-quasiconvexity of $e$ \cite{FoMu99}. Unfortunately, $\mathcal{A}$-quasiconvexity is a nonlocal condition that is very difficult even impossible to be verified. 

A necessary restriction implied by $\mathcal{A}$-quasiconvexity is that of \textbf{\textit{generalised rank-one convexity}} of $e$. A generalised  rank-one convex energy density verifies 
\begin{equation}\label{rank-convexity}
e(\lambda\boldsymbol{\mathcal{U}}+(1-\lambda)\tilde{\boldsymbol{\mathcal{U}}})\leq\lambda e(\boldsymbol{\mathcal{U}}) +(1-\lambda)e(\tilde{\boldsymbol{\mathcal{U}}}); \quad \forall\; \boldsymbol{\mathcal{U}} ;\quad \lambda \in [0,1], 
\end{equation}
and with $\tilde{\boldsymbol{\mathcal{U}}}=\boldsymbol{\mathcal{U}}+\delta\boldsymbol{\mathcal{U}}$ and $\delta \boldsymbol{\mathcal{U}}=(\vect{u}\otimes \vect{V},\vect{V}_{\perp})$, where $\vect{V}\cdot \vect{V_{\perp}}=0$ and with $\vect{u}$, $\vect{V}$ and $\vect{V}_{\perp}$ any arbitrary vectors. For the case of energies with first order differentiability, generalised rank-one convexity can alternatively be written as
\begin{equation}\label{eqn:rank1_der1}
e(\boldsymbol{\mathcal{U}}+\delta \boldsymbol{\mathcal{U}})-e(\boldsymbol{\mathcal{U}})-De(\boldsymbol{\mathcal{U}})[\delta \boldsymbol{\mathcal{U}}]\geq0;\qquad\delta \boldsymbol{\mathcal{U}}=(\vect{u}\otimes \vect{V},\vect{V}_{\perp});\qquad \forall\, \boldsymbol{\mathcal{U}},\vect{u},\vect{V},\vect{V}_{\perp},
\end{equation}
and for energies with second order differentiability,
\begin{equation}\label{eqn:ellipticity condition_1}
D^2e(\boldsymbol{\mathcal{U}})[\delta \boldsymbol{\mathcal{U}};\delta \boldsymbol{\mathcal{U}}]=\delta \boldsymbol{\mathcal{U}} \bullet \mathbb{H}_e \bullet \delta \boldsymbol{\mathcal{U}}
\geq 0;\qquad \delta \boldsymbol{\mathcal{U}}=(\vect{u}\otimes \vect{V},\vect{V}_{\perp}); \qquad \forall \,\boldsymbol{\mathcal{U}},\vect{u},\vect{V},\vect{V}_{\perp}.
\end{equation}

The generalised rank-one convexity restriction in its form \eqref{eqn:rank1_der1} plays a key role in the suitable definition of jump conditions across a travelling interface defined by a normal vector $\vect{V}$ moving with speed $c$ (i.e. singular surface \cite{Silh97MTCM}). Analogously, condition \eqref{eqn:ellipticity condition_1} is referred to as the Legendre-Hadamard condition or \textbf{\textit{ellipticity}} of $e$, linked to the propagation of travelling plane wave within the material defined by a vector $\vect{V}$ and speed $c$. These two aspects will be presented in the following two subsections.

\subsubsection{Generalised rank-one convexity conditions and jump conditions}
Consider a travelling discontinuity within an EAP, moving at speed $c$ and defined by a normal vector $\vect{V}$ separating states $(\bullet)^a$ and $(\bullet)^b$,  where $\llbracket\bullet\rrbracket =(\bullet)^a-(\bullet)^b$ denotes the jump across the interface and $(\bullet)^{{\text{Av}}}=1/2((\bullet)^a+(\bullet)^b)$ the average state. Following \cite{Ortigosa2016_NewFramework_ConservationLaws}, the jump or Rankine-Hugoniot conditions for the combined set of Cauchy-Maxwell (electro-magneto-mechanical) equations are given by
\begin{subequations}
\begin{align}
-c\llbracket  \vect{F}\rrbracket&= \llbracket \vect{v}\rrbracket \otimes \vect{V};\label{eqn:jump1}\\
-c\rho_R \llbracket  \vect{v}\rrbracket&= \llbracket \vect{P}\rrbracket  \vect{V};\label{eqn:jump2}\\
-c\llbracket \vect{B}_0\rrbracket&= \llbracket \vect{E}_0\rrbracket \times \vect{V};\label{eqn:jump3}\\
c\llbracket  \vect{D}_0\rrbracket&= \llbracket \vect{H}_0\rrbracket \times \vect{V};\label{eqn:jump4}\\
-c\llbracket e +1/2\rho_R \vect{v}\cdot \vect{v}\rrbracket&=\llbracket\vect{v}\cdot \vect{P} \rrbracket \cdot \vect{V} -\llbracket \vect{E}_0 \times \vect{H}_0\rrbracket \cdot \vect{V},\label{eqn:jump5}
\end{align}
\end{subequations}
where, in addition to previously defined fields, $\rho_R$, $\vect{v}$, $\vect{B}_0$ and $\vect{H}_0$ denote the density of the material per unit undeformed volume, the velocity, the magnetic induction vector and the magnetic field, respectively. Substitution of \eqref{eqn:jump1}-\eqref{eqn:jump4} into \eqref{eqn:jump5} renders
\begin{equation} \label{eqn:jump5b}
\begin{aligned}
\llbracket e \rrbracket&=\vect{P}^{\text{Av}}:\llbracket\vect{F} \rrbracket + \vect{E}_0^{\text{Av}}\cdot \llbracket\vect{D}_0 \rrbracket +\vect{H}_0^{\text{Av}}\cdot\llbracket\vect{B}_0 \rrbracket,
\end{aligned}
\end{equation}
which, by re-defining $(\bullet)^{\text{Av}}=(\bullet)^a+\llbracket \bullet \rrbracket/2$ and removing the upper index $(\bullet)^a$ for simplicity, can be rewritten as
\begin{equation} \label{eqn:jump5c}
\begin{aligned}
\llbracket e \rrbracket-\vect{P}:\llbracket\vect{F} \rrbracket - \vect{E}_0\cdot\llbracket\vect{D}_0 \rrbracket -\vect{H}_0\cdot\llbracket\vect{B}_0 \rrbracket &= \llbracket\vect{P}\rrbracket:\llbracket\vect{F} \rrbracket + \llbracket\vect{E}_0\rrbracket\cdot\llbracket\vect{D}_0 \rrbracket +\llbracket\vect{H}_0\rrbracket\cdot\llbracket\vect{B}_0 \rrbracket.
\end{aligned}
\end{equation}

For the case when magnetic effects are neglected (i.e. $\llbracket \vect{B}_0 \rrbracket \approx \vect{0}$) the last term on both sides of above equation \eqref{eqn:jump5c} can be neglected. Moreover, notice that the term $\llbracket\vect{E}_0\rrbracket\cdot\llbracket\vect{D}_0 \rrbracket \approx \vect{0}$ as $\vect{E}_0$ can only jump tangentially across the discontinuity (refer to \eqref{eqn:jump3}) whilst $\vect{D}_0$ can only jump along $\vect{V}$ (refer to \eqref{eqn:jump4}), that is, $\llbracket\vect{D}_0 \rrbracket=\vect{V}_{\perp}$, $\vect{V}\cdot \vect{V}_{\perp}=0$. Thus, we arrive at the simpler equation 
\begin{equation} \label{eqn:jump5d}
\begin{aligned}
\llbracket e \rrbracket-\vect{P}:\llbracket\vect{F} \rrbracket - \vect{E}_0\cdot \vect{V}_{\perp}   &= \llbracket\vect{P}\rrbracket:\llbracket\vect{F} \rrbracket.
\end{aligned}
\end{equation}

Redefining $-\llbracket  \vect{v}\rrbracket/c=\vect{u}$, that is $\llbracket\vect{F}\rrbracket=\vect{u} \otimes \vect{V}$, and making use of \eqref{eqn:jump1} and \eqref{eqn:jump2}, it is straightforward to show that $\llbracket\vect{P}\rrbracket:\llbracket\vect{F} \rrbracket=\rho_R c^2 \vect{u}\cdot \vect{u}$, which finally gives
\begin{equation} \label{eqn:jump5e}
\begin{aligned}
\llbracket e \rrbracket-\vect{P}:(\vect{u} \otimes \vect{V}) - \vect{E}_0\cdot \vect{V}_{\perp}   &= \rho_R c^2 \vect{u}\cdot \vect{u}\geq 0,
\end{aligned}
\end{equation}
which can be equivalently written as
\begin{equation}\label{eqn:rank-1}
e(\vect{F}+\vect{u}\otimes \vect{V},\vect{D}_0 + \vect{V}_{\perp})-e(\vect{F},\vect{D}_0)-\partial_{\vect{F}}e(\vect{F}, \vect{D}_0):(\vect{u}\otimes \vect{V})-\partial_{\vect{D}_0}e(\vect{F}, \vect{D}_0) \cdot \vect{V}_{\perp}\geq 0.
\end{equation}

Notice that above inequality \eqref{eqn:rank-1} holds for an energy density $e$ defined as generalised rank-one convexity (refer to equation \eqref{eqn:rank1_der1}). 

\subsubsection{Ellipticity and Legendre-Hadamard condition}

Condition \eqref{eqn:ellipticity condition_1} is referred to as the Legendre-Hadamard condition or ellipticity of $e$ and is related to the propagation of  travelling plane waves\footnote{This concept is also strongly related to that of the propagation of \textbf{\textit{acceleration waves}} within a discontinuity defined by a normal vector $\vect{V}$ with speed $c$.} within the material defined by a vector $\vect{V}$ and speed $c$. Notice that the above expression \eqref{eqn:ellipticity condition_1} is a generalisation of the concept of ellipticity from elasticity to the more general case of electro-elasticity, which can be easily verified by neglecting the piezoelectric and dielectric components of the Hessian operator, that is
\begin{equation}\label{eqn:ellip}
\left(\vect{u}\otimes\vect{V}\right):
\boldsymbol{\mathcal{C}}_e(\boldsymbol{\mathcal{U}})
:\left(\begin{matrix}
\vect{u}\otimes\vect{V}
\end{matrix}\right)\geq 0; \qquad \forall \, \boldsymbol{\mathcal{U}},\vect{u},\vect{V}.
\end{equation}

Similarly, particularising for $\vect{u}=\vect{0}$ in either \eqref{eqn:rank1_der1} or \eqref{eqn:ellipticity condition_1}, it reduces to the condition of convexity with respect to $\vect{D}_0$. This condition is crucial in order to guarantee the existence of the Helmholtz's free energy density function \eqref{eqn:Helmholtz_energy} and thus that of the variational principle $\Pi_{\Psi}$ \eqref{eqn:mixed variational principle III}.

The derivation of the above Legendre-Hadamard condition \eqref{eqn:ellipticity condition_1} has its roots in the study of the hyperbolicity (or stability in the quasi-static case) of the system of the generalised Cauchy-Maxwell equations \cite{Ortigosa2016_NewFramework_ConservationLaws} in order to ensure the existence of real wave speeds propagating throughout the domain. Crucially, the \textit{ab initio} existence of real wave speeds for the specific system of governing equations \eqref{eqn:definition of F} to \eqref{eq:DivD_b} (i.e. magnetic and time-dependent effects are neglected in the Maxwell equations) can be monitored via the study of the so-called electro-mechanical acoustic tensor $\vect{Q}$ defined as
\begin{equation}\label{eqn: Exp1 acoustic tensor}
\vect{Q}=\vect{\mathcal{C}_{e,\vect{V}\vect{V}}}+\vect{\mathcal{Q}_{\vect{V}}}^{T}\vect{\mathcal{\theta}}^{-1}\left[\frac{\vect{\vect{V}}\otimes\vect{\mathcal{\theta}}^{-1}\vect{V}}{\vect{V}\cdot\vect{\mathcal{\theta}}^{-1}\vect{V}}-\vect{I}\right]\vect{\mathcal{Q}_{\vect{V}}},
\end{equation}
where
\begin{equation}
\left[\vect{\mathcal{C}_{e,\vect{V}\vect{V}}}\right]_{ij}=\mathcal{C}_{e,iIjJ}V_{I}V_{J},\qquad
\left[\vect{\mathcal{Q}_{\vect{V}}}\right]_{Ij}=\mathcal{Q}_{IjJ}V_{J}, \qquad
\left[\vect{I}\right]_{IJ}=\delta_{IJ}.
\end{equation}

Specifically, the eigenvalues of the acoustic tensor $\vect{Q}$ are proportional to the squared of the volumetric and shear wave speeds of the electroactive material. Hence, the above tensor $\vect{Q}$ can be used as a suitable localisation measure for the onset of material instabilities by ensuring that the wave speeds are kept real throughout the entire electro-deformation process. This can be achieved by ensuring that
\begin{equation}\label{eqn:eqb:acoustic_tensor_check1}
\vect{u}\cdot \vect{Q}(\boldsymbol{\mathcal{U}}) \vect{u}\geq0; \qquad \forall\, \boldsymbol{\mathcal{U}}, \vect{u}, \vect{V}
\end{equation}
or alternatively written as
\begin{equation}\label{eqb:acoustic_tensor_check2}
\left(\vect{u}\otimes \vect{V} \right):\left[\vect{\mathcal{C}_{e}}+\underbrace{\vect{\mathcal{Q}}^{T}\vect{\mathcal{\theta}}^{-1}\left[\frac{\vect{\vect{V}}\otimes\vect{\mathcal{\theta}}^{-1}\vect{V}}{\vect{V}\cdot\vect{\mathcal{\theta}}^{-1}\vect{V}}-\vect{I}\right]\vect{\mathcal{Q}}}_{\boldsymbol{\mathcal{C}_{e}'}}\right] :\left(\vect{u}\otimes \vect{V} \right)\geq 0; \qquad \forall\, \vect{u},\vect{V}.
\end{equation}

Comparison of \eqref{eqb:acoustic_tensor_check2} against \eqref{eqn:ellip} permits to identify the modification of the traditional ellipticity condition in elasticity with that of electro-elasticity via the additional contribution $\boldsymbol{\mathcal{C}}'_e$, playing a key role in the modification of the dilatational and shear wave speeds from those of simple elasticity. The reader is referred to \cite{Ortigosa2016_NewFramework_ConservationLaws} for a comprehensive derivation of the acoustic tensor for the combined electro-magneto-mechanical case. A more compact derivation for the specific electro-mechanical case of interest in this paper is included in Appendix \ref{appendix:acoustic_tensor}.

\subsection{$\mathcal{A}$-polyconvexity}

Motivated by considerations of material stability (i.e. existence of real wave speeds) first and existence of minimisers subsequently, Gil and Ortigosa in \cite{Gil2016_NewFramework, Ortigosa2016_NewFramework_FiniteElements, Ortigosa2016_NewFramework_ConservationLaws, Ortigosa2016_ComputationalFramework} and \v{S}ilhav\'{y} in \cite{Silhavy2017, vsilhavy2018variational} extended the concept of polyconvexity \cite{Ball1976, Ball2002, Schroeder2003, Schroeder2005,kambouchev2007polyconvex, Itskov2016} from elasticity to electro-magneto-elasticity and proposed a new convexity restriction on the form that the internal energy can adopt, that is, \textbf{$\mathcal{A}$-\textit{polyconvexity}} of $e$. Following \cite{Silhavy2017}, the internal energy $e:\R^{3\times 3}\times \R^{3}\to\R\cup\{+\infty\}$ is defined as  $\mathcal{A}$-polyconvex if there exists a convex and lower semicontinuous function $W :\mathbb{R}^{3\times3}\times\mathbb{R}^{3\times3}\times\mathbb{R}\times\mathbb{R}^3\times\mathbb{R}^3 \to \mathbb{R}\cup\{+\infty\}$ (in general non-unique) defined as
\begin{align}\label{polyconvexity_emm}
e(\boldsymbol{\mathcal{U}}) = W(\boldsymbol{\mathcal{V}}); \quad \boldsymbol{\mathcal{V}}=(\vect{F},\vect{H},J,\vect{D}_0,\vect{d}); \quad \vect{d}=\vect{FD}_0.
\end{align}

Notice that convexity of $W(\boldsymbol{\mathcal{V}})$ implies that
\begin{equation}
W(\lambda\boldsymbol{\mathcal{V}}_1+(1-\lambda)\boldsymbol{\mathcal{V}}_2)\leq\lambda W(\boldsymbol{\mathcal{V}}_1) +(1-\lambda)W(\boldsymbol{\mathcal{V}}_2);\quad \forall\; \boldsymbol{\mathcal{V}}_1,\boldsymbol{\mathcal{V}}_2;\quad \lambda \in [0,1]. 
\end{equation}
which for functions with sufficient differentiability, can be alternatively written as
\begin{equation}\label{eqn:pol2}
W(\boldsymbol{\mathcal{V}}+\delta \boldsymbol{\mathcal{V}})-W(\boldsymbol{\mathcal{V}})-DW(\boldsymbol{\mathcal{V}})[\delta \boldsymbol{\mathcal{V}}]\geq0;\qquad \forall\, \boldsymbol{\mathcal{V}}, \delta \boldsymbol{\mathcal{V}},
\end{equation}
or even in terms of the Hessian operator $[\mathbb{H}_W]$ as
\begin{equation}\label{eqn:polyconvexity2}
D^2W(\boldsymbol{\mathcal{V}})[\delta \boldsymbol{\mathcal{V}};\delta \boldsymbol{\mathcal{V}}]=\delta \boldsymbol{\mathcal{V}} \bullet [\mathbb{H}_W] \bullet \delta \boldsymbol{\mathcal{V}}
\geq 0;\qquad \forall \,\boldsymbol{\mathcal{V}},\delta \boldsymbol{\mathcal{V}}.
\end{equation}

As presented by \v{S}ilhav\'{y} in \cite{Silhavy2017},  $\mathcal{A}$-polyconvexity in conjunction with suitable growth conditions, ensures the existence of minimisers in nonlinear electro-magneto-elasticity, which is briefly recalled in the following section. 

\subsection{$\mathcal{A}$-polyconvexity and existence of minimisers}

The extended set of arguments in $W$, that is $\boldsymbol{\mathcal{V}}$, has a special property; namely, they are weakly sequentially continuous \cite{Ciar88ME1}. 
\begin{proposition}\label{Prop:det-cof}
Let $\Omega_0\subset\R^3$ be a bounded Lipschitz domain. Let $(\bm{\phi}_k)_{k\in\N}\subset W^{1,p}(\Omega_0;\R^3)$ for some $p>3$ and let $\bm{\phi}_k\to \bm{\phi}$ weakly for some $\bm{\phi}\in W^{1,p}(\Omega_0;\R^3)$. 
Then 
\begin{subequations}
\begin{align}
    \cof\bm{\nabla}_0 \bm{\phi}_k &\to \cof\bm{\nabla}_0 \bm{\phi}
    \,\text{weakly in }\, L^{p/2}(\Omega_0;\R^{3\times 3})\ ,\\
    \det\bm{\nabla}_0 \bm{\phi}_k &\to \det\bm{\nabla}_0 \bm{\phi}
    \,\text{weakly in }\, L^{p/3}(\Omega_0)\ .
\end{align}
\end{subequations}
\end{proposition}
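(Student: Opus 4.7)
The plan is to combine two classical ingredients: Morrey's compact embedding $W^{1,p}(\Omega_0;\R^3)\hookrightarrow\hookrightarrow C^0(\bar{\Omega}_0;\R^3)$, available because $p>3$, which upgrades the weak convergence $\bm{\phi}_k\rightharpoonup\bm{\phi}$ in $W^{1,p}$ to uniform convergence of the $\bm{\phi}_k$ on $\bar{\Omega}_0$; and the fact that both $\cof\bm{\nabla}_0\bm{\phi}$ and $\det\bm{\nabla}_0\bm{\phi}$ admit a ``null-Lagrangian'' rewriting as pure divergences of polynomial expressions in $\bm{\phi}$ and $\bm{\nabla}_0\bm{\phi}$.

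I would first establish uniform boundedness via H\"older: $\|\cof\bm{\nabla}_0\bm{\phi}_k\|_{L^{p/2}}\le C\,\|\bm{\nabla}_0\bm{\phi}_k\|_{L^p}^2$ and $\|\det\bm{\nabla}_0\bm{\phi}_k\|_{L^{p/3}}\le C\,\|\bm{\nabla}_0\bm{\phi}_k\|_{L^p}^3$. Since $p/2,p/3>1$, both target spaces are reflexive, and along a subsequence $\cof\bm{\nabla}_0\bm{\phi}_k\rightharpoonup \bm{H}^{\ast}$ in $L^{p/2}$ and $\det\bm{\nabla}_0\bm{\phi}_k\rightharpoonup J^{\ast}$ in $L^{p/3}$. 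It then suffices to identify $\bm{H}^{\ast}=\cof\bm{\nabla}_0\bm{\phi}$ and $J^{\ast}=\det\bm{\nabla}_0\bm{\phi}$, after which uniqueness of weak limits propagates the convergence from the subsequence to the whole sequence.

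For the cofactor identification I would use the identity
\[
(\cof\bm{\nabla}_0\bm{\phi})_{iI} \;=\; \tfrac{1}{2}\,\mathcal{E}_{IJK}\mathcal{E}_{ijk}\,\partial_J\!\left(\phi_j\,\partial_K\phi_k\right),
\]
valid because the extra term $\phi_j\,\partial^2_{JK}\phi_k$ appearing upon expansion of the right-hand side is symmetric in $JK$ and hence annihilated by the antisymmetric $\mathcal{E}_{IJK}$. Applying the identity to $\bm{\phi}_k$, testing against $\psi\in C_c^{\infty}(\Omega_0)$, and integrating by parts yields, up to sign and epsilon factors, an integral whose integrand is a product of a uniformly convergent factor (a component of $\bm{\phi}_k$), a weakly $L^p$-convergent factor (a first partial derivative of a component of $\bm{\phi}_k$), and the smooth $\partial_J\psi$. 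The classical ``strong times weak'' convergence rule then identifies $\bm{H}^{\ast}$ as $\cof\bm{\nabla}_0\bm{\phi}$ first in $\mathcal{D}'(\Omega_0)$ and, through the uniform $L^{p/2}$ bound and reflexivity, in $L^{p/2}$.

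For the determinant I would combine the algebraic identity $3\det\bm{F}=\bm{F}\!:\!\cof\bm{F}$ with Piola's identity $\mathrm{DIV}\,\cof\bm{\nabla}_0\bm{\phi}=\bm{0}$, which together yield the divergence form $\det\bm{\nabla}_0\bm{\phi}=\tfrac{1}{3}\,\partial_I(\phi_i\,(\cof\bm{\nabla}_0\bm{\phi})_{iI})$. Testing against $\psi\in C_c^{\infty}(\Omega_0)$ and integrating by parts, the resulting integrand factorises as a product of a uniformly convergent component of $\bm{\phi}_k$, an $L^{p/2}$-weakly convergent component of $\cof\bm{\nabla}_0\bm{\phi}_k$ (supplied by the previous step), and the smooth $\partial_I\psi$; ``strong times weak'' again identifies $J^{\ast}$ and closes the argument. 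The principal obstacle is that the two null-Lagrangian divergence identities are immediate only for smooth $\bm{\phi}$; their extension to $\bm{\phi}\in W^{1,p}(\Omega_0;\R^3)$ requires a standard but delicate density argument relying on continuity of the polynomial maps $\bm{\phi}\mapsto\phi_j\,\partial_K\phi_k$ and $\bm{\phi}\mapsto\phi_i\,(\cof\bm{\nabla}_0\bm{\phi})_{iI}$ in the strong topologies of $L^{p/2}$ and $L^{p/3}$ respectively.
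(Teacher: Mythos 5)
Your argument is correct and is essentially the standard proof of weak continuity of minors that the paper itself does not reproduce but delegates to Ciarlet's monograph \cite{Ciar88ME1}: the divergence-form (null-Lagrangian) identities for $\cof$ and $\det$, the uniform $L^{p/2}$ and $L^{p/3}$ bounds from H\"older, and identification of the weak limits by a strong-times-weak passage in the distributional formulation. The only cosmetic difference is that you invoke the compact Morrey embedding $W^{1,p}(\Omega_0;\R^3)\hookrightarrow\hookrightarrow C^0(\bar\Omega_0;\R^3)$, available since $p>3$, where the classical argument needs only the Rellich embedding into $L^q$; both suffice under the stated hypotheses.
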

%


\begin{proposition}\label{prop:div-curl}
Let $\Omega_0\subset\R^3$ be a bounded Lipschitz domain. Let $\bm{\phi}_k\to \bm{\phi}$ weakly  in $W^{1,p}(\Omega_0;\R^3)$ for some $1<p<+\infty$ and let $\bm{D}_{0,k}\to \bm{D}_0$ weakly in $L^q(\Omega_0;\R^3)$  where $1/q+1/p<1$  and $\mathrm{DIV}\, \bm{D}_{0,k}=\mathrm{DIV}\, \bm{D}_0=\rho_0$ for every $k\in\N$ and for some $\rho_0\in W^{-1,1}(\Omega)$. Then $\vect{d}_k=\vect{F}_k\bm{D}_{0,k} \to \vect{d}=\vect{F}\bm{D}_0$ weakly in $L^1(\Omega_0;\R^3)$, where $\vect{F}_k=\bm{\nabla} \bm{\phi}_k$ and $\vect{F}=\bm{\nabla} \bm{\phi}$.
\end{proposition}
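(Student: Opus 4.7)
The plan is to exploit the div-curl structure: each row of $\vect{F}_k = \vect{\nabla}_0 \vect{\phi}_k$ is a gradient, hence curl-free, while $\vect{D}_{0,k}$ has divergence equal to the fixed distribution $\rho_0$ for every $k$. This is precisely the setting of the Murat--Tartar div-curl lemma, applied componentwise to the dot products $\vect{\nabla}_0 \phi_k^i \cdot \vect{D}_{0,k}$. I would prefer to argue by a direct integration by parts rather than quoting the abstract lemma, since it makes the role of the strict inequality $1/p + 1/q < 1$ transparent: this inequality yields $q' < p$, the regularity margin that allows all the pairings to be taken.

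First I would fix a test vector $\vect{\eta} \in C_c^\infty(\Omega_0;\R^3)$. Using the distributional product rule, integration by parts, and $\mathrm{DIV}\,\vect{D}_{0,k} = \rho_0$, one obtains
\[
\int_{\Omega_0} \eta_i F_{k,iI} D_{0,k,I}\, dV = -\int_{\Omega_0} (\partial_I \eta_i)\, \phi_k^i\, D_{0,k,I}\, dV - \langle \rho_0,\, \eta_i \phi_k^i \rangle,
\]
with summation over $i$ and $I$; the essential feature is that the derivative has moved off $\vect{\phi}_k$ onto the smooth test function. For the first integral, Rellich--Kondrachov gives $\vect{\phi}_k \to \vect{\phi}$ strongly in $L^p(\Omega_0)$, hence in $L^{q'}(\Omega_0)$; combined with $\vect{D}_{0,k} \to \vect{D}_0$ weakly in $L^q$, strong-times-weak yields weak $L^r$ convergence of the product with $1/r = 1/p + 1/q$, and pairing against the smooth $\partial_I \eta_i$ passes to the limit. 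For the second term, $\eta_i \phi_k^i \to \eta_i \phi^i$ weakly in $W^{1,q'}_0(\Omega_0)$ (again by $q' < p$), and the fixed functional $\rho_0$ is continuous on this space, so that pairing converges too.

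Combining the two limits yields distributional convergence $\vect{F}_k \vect{D}_{0,k} \to \vect{F}\vect{D}_0$. Since the products are uniformly bounded in $L^r$ with $r > 1$, they are equi-integrable, so Dunford--Pettis upgrades distributional convergence to the required weak $L^1$ convergence. The step I expect to need the most care is matching the regularity class of $\rho_0$ to the admissible test multipliers $\eta_i \phi_k^i$: the stated \textquotedblleft$\rho_0 \in W^{-1,1}(\Omega)$\textquotedblright\ must be understood in a dual sense compatible with these multipliers, e.g.\ $\rho_0 \in W^{-1,q}(\Omega_0) = (W^{1,q'}_0(\Omega_0))^*$, and the strict inequality $1/p + 1/q < 1$ is precisely what makes this compatibility automatic.
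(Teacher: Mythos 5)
Your proof is correct. Note first that the paper itself contains no proof of Proposition \ref{prop:div-curl}: the statement is quoted without argument (it is essentially borrowed from \v{S}ilhav\'{y}'s existence framework \cite{Silhavy2017} and only invoked later in the proof of Theorem \ref{th:Existence}), so there is no in-paper derivation to match line by line. Your integration-by-parts rendition of the Murat--Tartar div-curl lemma is the standard self-contained route to exactly this statement, and each step checks out: the product rule and the extension of the pairing $\langle \mathrm{DIV}\,\bm{D}_{0,k},\cdot\rangle$ from $C_c^\infty$ to $W^{1,q'}_0(\Omega_0)$ are legitimate because $\bm{D}_{0,k}\in L^q$ forces $\rho_0\in W^{-1,q}(\Omega_0)$ --- your reading of the loosely stated ``$\rho_0\in W^{-1,1}$'' is the right one, and $1/p+1/q<1$, i.e.\ $q'<p$, is precisely what makes both pairings pass to the limit; Rellich--Kondrachov combined with strong-times-weak handles the first integral; and the uniform bound in $L^r$ with $1/r=1/p+1/q<1$ upgrades the distributional convergence. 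Two cosmetic remarks only: the closing Dunford--Pettis step can be shortened, since boundedness in the reflexive space $L^r$ with $r>1$ plus identification of the limit against $C_c^\infty$ already yields weak convergence in $L^r$, hence in $L^1$ on the bounded domain; and it is worth stating explicitly that the same integration-by-parts identity applied to the limit pair $(\bm{\phi},\bm{D}_0)$, using $\mathrm{DIV}\,\bm{D}_0=\rho_0$, is what converts the limit of your right-hand side back into $\int_{\Omega_0}\vect{\eta}\cdot(\vect{F}\bm{D}_0)\,dV$.
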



Referring now to the variational principle $\Pi_e$ previously presented in \eqref{eqn:mixed variational principle Ib}, it is now possible to introduce the following theorem of existence of minimisers. Indeed,

\begin{theorem}\label{th:Existence}
Let $\Omega_0\subset\R^3$ be a bounded Lipschitz domain. Let $p>2$, $p_c\ge p/(p-1)$, $r>1$, and  $q>p/(p-1)$, and let the following \textbf{growth conditions} be satisfied for some constant $C>0$.
\begin{align}\label{growth}
    e(\boldsymbol{\mathcal{U}})\begin{cases}\ge C(||\bm{F}||^p+ ||\vect{H}||^{p_c}+J^r+||{\bm D}_0||^q)  \text{ if } J>0\ ,\\
    =+\infty \text{ otherwise.}
    \end{cases}
    \end{align}
    Moreover, assume that $e$ is \textbf{$\mathcal{A}$-polyconvex}, i.e., that \eqref{polyconvexity_emm} holds for some convex and lower semicontinuous function $W:\mathbb{R}^{3\times3}\times\mathbb{R}^{3\times3}\times\mathbb{R}\times\mathbb{R}^3\times\mathbb{R}^3 \to \mathbb{R}\cup\{+\infty\}$ and that 
    \begin{align}
    e(\boldsymbol{\mathcal{U}})=W(\boldsymbol{\mathcal{V}})\to +\infty \text{ if } J\to 0.
    \end{align}
    Let 
\begin{equation}
\begin{aligned}
\mathcal{A}=\{&\bm{\phi}\in  W^{1,p}(\Omega_0;\R^3),\;\; \bm{D}_0\in L^q(\Omega_0;\R^3); \nonumber\\
&s.t. \;\vect{H}\in L^{p_c}(\Omega_0;\R^{3\times 3});\;\;J\in L^r(\Omega_0)\, ,\,J>0 \text{ a.e.};\nonumber\\
&\text{s.t}.\; \eqref{eqn:Dirich_displacements},\eqref{eq:DivD_a},\eqref{eq:DivD_b}\}
\end{aligned}
\end{equation}
be non-empty and such that ${\rm inf}_{\mathcal{A}}\Pi_e<+\infty$. Then there is a \textbf{minimiser} of $\Pi_e$ on $\mathcal{A}$.
\end{theorem}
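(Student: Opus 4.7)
The plan is the direct method of the calculus of variations. First, I would pick a minimising sequence $(\bm{\phi}_k,\bm{D}_{0,k})\subset\mathcal{A}$ for $\Pi_e$. Since $\inf_{\mathcal{A}}\Pi_e<+\infty$ and $\Pi^m_{\text{ext}}$ is linear in $\bm{\phi}$ (and hence controlled by $\|\bm{\phi}\|_{W^{1,p}}$ thanks to the Dirichlet data on $\partial\Omega_0^u$ and Poincar\'e's inequality), the coercivity in \eqref{growth} yields uniform bounds on $\|\bm{F}_k\|_{L^p}$, $\|\bm{H}_k\|_{L^{p_c}}$, $\|J_k\|_{L^r}$, and $\|\bm{D}_{0,k}\|_{L^q}$. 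Extracting subsequences (not relabelled), I obtain weak limits $\bm{\phi}_k\rightharpoonup\bm{\phi}$ in $W^{1,p}(\Omega_0;\R^3)$, $\bm{D}_{0,k}\rightharpoonup\bm{D}_0$ in $L^q(\Omega_0;\R^3)$, $\bm{H}_k\rightharpoonup\bar{\bm{H}}$ in $L^{p_c}(\Omega_0;\R^{3\times 3})$, and $J_k\rightharpoonup\bar{J}$ in $L^r(\Omega_0)$. Continuity of the trace preserves the Dirichlet condition \eqref{eqn:Dirich_displacements} in the limit, and weak continuity of the distributional divergence transfers \eqref{eq:DivD_a}--\eqref{eq:DivD_b} to $\bm{D}_0$.

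Second, I would identify the weak limits with the nonlinear quantities built from the limit. By Proposition \ref{Prop:det-cof} (which applies directly when $p>3$, and whose conclusion extends to the present range $p>2$ via the separate $L^{p_c}$- and $L^r$-bounds combined with the Piola identity and a standard div--curl argument), $\bar{\bm{H}}=\cof\bm{\nabla}_0\bm{\phi}=\bm{H}$ and $\bar{J}=\det\bm{\nabla}_0\bm{\phi}=J$. For the electromechanical coupling, the hypothesis $q>p/(p-1)$ gives $1/p+1/q<1$, and $\mathrm{DIV}\,\bm{D}_{0,k}=\rho_0$ is constant along the sequence, so Proposition \ref{prop:div-curl} furnishes $\bm{d}_k=\bm{F}_k\bm{D}_{0,k}\rightharpoonup\bm{F}\bm{D}_0=\bm{d}$ weakly in $L^1(\Omega_0;\R^3)$. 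Consequently the entire extended vector satisfies $\boldsymbol{\mathcal{V}}_k\rightharpoonup\boldsymbol{\mathcal{V}}$ in the natural product topology.

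Third, I would invoke lower semicontinuity. Because $W$ is convex, lower semicontinuous, and bounded below by \eqref{growth}, the integral functional $\bm{v}\mapsto\int_{\Omega_0}W(\bm{v})\,dV$ is sequentially weakly lower semicontinuous on the product space where the $\boldsymbol{\mathcal{V}}_k$ lie. Combined with the continuity of $\Pi^m_{\text{ext}}$ under weak $W^{1,p}$ convergence, this gives
\begin{equation*}
\Pi_e(\bm{\phi},\bm{D}_0)\le\liminf_{k\to\infty}\Pi_e(\bm{\phi}_k,\bm{D}_{0,k})=\inf_{\mathcal{A}}\Pi_e.
\end{equation*}
What remains is to check admissibility of the limit, and in particular that $J>0$ almost everywhere. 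I expect this to be the main technical obstacle, since only weak $L^r$-convergence of $J_k$ is available, which does not by itself guarantee pointwise positivity of $J$. The resolution leans on the assumed blow-up $W(\boldsymbol{\mathcal{V}})\to+\infty$ as $J\to 0$: were $J\le 0$ on a set of positive measure, applying Fatou's lemma to the non-negative sequence $W(\boldsymbol{\mathcal{V}}_k)+C$ would force $\liminf_k \Pi_e(\bm{\phi}_k,\bm{D}_{0,k})=+\infty$, contradicting the minimising property. Hence $J>0$ a.e., $(\bm{\phi},\bm{D}_0)\in\mathcal{A}$, and it is the sought minimiser.
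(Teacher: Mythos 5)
Your proposal follows essentially the same route as the paper's proof, which is only a three-line sketch: it cites Ciarlet, extracts a bounded minimising sequence from the growth condition \eqref{growth} and the Dirichlet data, passes to weak limits, and invokes Proposition \ref{prop:div-curl} for the weak continuity of $\vect{d}_k=\vect{F}_k\vect{D}_{0,k}$; the identification of $\cof$ and $\det$, the convexity-based lower semicontinuity, and the admissibility of the limit are all left implicit there. Your write-up supplies exactly those missing steps, and your remark that the conclusion of Proposition \ref{Prop:det-cof} must be extended from $p>3$ to $p>2$ using the separate a priori bounds on $\vect{H}_k$ in $L^{p_c}$ and $J_k$ in $L^r$ is the correct reading of why the theorem lists those exponent hypotheses at all. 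The one step you should repair is the final Fatou argument for $J>0$ a.e.: Fatou's lemma needs pointwise (or in-measure) convergence of $W(\boldsymbol{\mathcal{V}}_k)$, which weak convergence of $\boldsymbol{\mathcal{V}}_k$ does not provide. The conclusion follows more directly from what you have already proved: weak lower semicontinuity gives $\int_{\Omega_0}W(\boldsymbol{\mathcal{V}})\,dV<+\infty$, and since \eqref{growth} forces $W(\boldsymbol{\mathcal{V}}(x))=+\infty$ wherever $J(x)\le 0$, the set $\{J\le 0\}$ must be null. With that substitution the argument is complete and, if anything, more rigorous than the paper's own.
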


\begin{proof}
The proof follows  the same lines as in \cite{Ciar88ME1}. If $(\vect{\phi}_k,(\bm{D}_0)_k)\subset W^{1,p}(\Omega_0;\R^3)\times L^q(\Omega_0;\R^3)$ is a minimizing sequence then \eqref{growth} and the Dirichlet boundary conditions ensure that this minimizing sequence is bounded  in $W^{1,p}(\O;\R^3)\times L^q(\O;\R^3)$. Therefore, it holds for a non-relabeled subsequence that $\vect{\phi}_k\to \vect{\phi}$ weakly and $(\bm{D}_0)_k\to\bm{D}_0$ weakly. Further,   
$\vect{F}_k(\bm{D}_0)_k\to \vect{F}(\bm{D}_0)$ weakly in $L^{pq/(p+q)}(\O;\R^3)$ due to Proposition~\ref{prop:div-curl}. 
\end{proof}

\subsection{$\mathcal{A}$-polyconvexity and material stability: extended set of work conjugates and the tangent operator}

As presented in \cite{Gil2016_NewFramework, Ortigosa2016_NewFramework_FiniteElements, Ortigosa2016_NewFramework_ConservationLaws, Ortigosa2016_ComputationalFramework}, the consideration of the function $W$ (instead of its equivalent energetic expression $e)$ permits the introduction of the set $\boldsymbol{\Sigma}_{\boldsymbol{\mathcal{V}}}$ of work conjugate fields $\boldsymbol{\Sigma}_{\boldsymbol{\mathcal{V}}}=(\boldsymbol{\Sigma}_{\vect{F}},\boldsymbol{\Sigma}_{\vect{H}},\Sigma_J,\boldsymbol{\Sigma}_{\vect{D_0}},\boldsymbol{\Sigma}_{\vect{d}})$ defined as
\begin{equation}\label{eqn:work_conjugates}
DW[\delta \boldsymbol{\mathcal{V}}]=\boldsymbol{\Sigma}_{\boldsymbol{\mathcal{V}}} \bullet \delta \boldsymbol{\mathcal{V}}; \qquad\boldsymbol{\Sigma}_{\boldsymbol{\mathcal{V}}}(\boldsymbol{\mathcal{V}})=\partial_{\boldsymbol{\mathcal{V}}}W(\boldsymbol{\mathcal{V}}).
\end{equation}

Moreover, for the particular case when $\delta \boldsymbol{\mathcal{V}}=D\boldsymbol{\mathcal{V}}[\delta \boldsymbol{\mathcal{U}}]$, comparing equations \eqref{eqn:directional derivative} and \eqref{eqn:work_conjugates} yields \cite{Gil2016_NewFramework}
\begin{equation}
\vect{P}=\boldsymbol{\Sigma}_{\vect{F}} + \boldsymbol{\Sigma}_{\vect{H}} \Cross \vect{F} + \Sigma_J \vect{H} + \boldsymbol{\Sigma}_{\vect{d}} \otimes \vect{D}_0;\qquad \vect{E}_0=\boldsymbol{\Sigma}_{\vect{D_0}} +\vect{F}^T \boldsymbol{\Sigma}_{\vect{d}},
\end{equation}
which, for the case of a suitably defined convex energy function $W$, facilitates the evaluation of $\vect{P}$ and $\vect{E}_0$ in terms of the sets $\boldsymbol{\mathcal{V}}$ and $\boldsymbol{\Sigma_{\boldsymbol{\mathcal{V}}}}$. Assuming sufficient regularity of the function $W$, its second directional derivative yields the Hessian operator $\mathbb{H}_{W}$ as
\begin{equation}
D^2W[\delta \boldsymbol{\mathcal{V}};\delta \boldsymbol{\mathcal{V}}]=\delta \boldsymbol{\mathcal{V}} \bullet \mathbb{H}_{W} \bullet \delta \boldsymbol{\mathcal{V}},
\end{equation}
By definition, $\mathcal{A}$-polyconvexity implies semi-positive definiteness of the Hessian operator $\mathbb{H}_{W}$ (refer to \eqref{eqn:polyconvexity2}). Moreover, as also shown in \cite{Gil2016_NewFramework, Ortigosa2016_NewFramework_FiniteElements, Ortigosa2016_NewFramework_ConservationLaws, Ortigosa2016_ComputationalFramework}, the second directional derivative of the internal energy $e$ \eqref{tangent_operator_1} can be equivalently expressed in terms of its extended representation $W$ as 
\begin{equation}\label{tangent_operator}
\begin{aligned}
D^2e\left[\delta\boldsymbol{\mathcal{U}};\delta\boldsymbol{\mathcal{U}}\right]&= 
D^2 W[D\boldsymbol{\mathcal{V}}[\delta \boldsymbol{\mathcal{U}}];D\boldsymbol{\mathcal{V}}[\delta \boldsymbol{\mathcal{U}}]]+
\left(\boldsymbol{\Sigma}_{\vect{H}}+\Sigma_{J}\vect{F}\right):\left(\delta\vect{F}\Cross\delta\vect{F}\right)
+2\boldsymbol{\Sigma}_{\vect{d}}\cdot\delta\vect{F}\,\delta\vect{D}_0,
\end{aligned}
\end{equation}
where
\begin{equation}
D^2 W[D\boldsymbol{\mathcal{V}}[\delta \boldsymbol{\mathcal{U}}];D\boldsymbol{\mathcal{V}}[\delta \boldsymbol{\mathcal{U}}]]=\mathbb{S}^T\mathbb{H}_{W}\mathbb{S},
\end{equation}
with $\mathbb{S}$ defined as 
\begin{equation}
\mathbb{S}=\left[\begin{matrix}
: \delta\vect{F}\\
: (\delta\vect{F}\times\vect{F})\\
\delta\vect{F}:\vect{H}\\
\cdot \delta\vect{D}_0\\
\cdot \left(\delta\vect{F}\vect{D}_0+\vect{F}\delta\vect{D}_0\right)\ 
\end{matrix}\right].
\end{equation}

As presented in \cite{Ortigosa2016_ComputationalFramework}, it is possible to relate the components of the Hessian $\mathbb{H}_e$ in \eqref{tangent_operator_1} to those of the Hessian $\mathbb{H}_W$ via appropriate algebraic transformations, which is advantageous for the case of energetic expressions defined in terms of $W$. Moreover, replacing $\delta\vect{F}$ and $\delta\vect{D}_0$ in \eqref{tangent_operator} with 
$\delta\vect{F}=\vect{u}\otimes\vect{V}$ and $\delta\vect{D}_0=\vect{V}_{\perp}$, respectively, and noticing that $De[\delta \boldsymbol{\mathcal{U}}]=DW[D\boldsymbol{\mathcal{V}}[\delta \boldsymbol{\mathcal{U}}]]$, implies that equation \eqref{eqn:pol2} reduces to  \eqref{eqn:rank1_der1}, which shows that $\mathcal{A}$-polyconvexity implies generalised rank-one convexity (in the more generic sense of electromechanics). Moreover, replacing $\delta\vect{F}$ and $\delta\vect{D}_0$ in \eqref{tangent_operator} with 
$\delta\vect{F}=\vect{u}\otimes\vect{V}$ and $\delta\vect{D}_0=\vect{V}_{\perp}$, respectively, permits to cancel the last two terms (also known as geometric terms) on the right-hand side of \eqref{tangent_operator}, and leads to
\begin{equation}\label{eqn:ellipticity extended}
D^2 e[\vect{u}\otimes\vect{V},\vect{V}_{\perp};\vect{u}\otimes\vect{V},\vect{V}_{\perp}] = \mathbb{S}^T\mathbb{H}_{W}\mathbb{S} \geq 0.
\end{equation}

Relation \eqref{eqn:ellipticity extended} clearly illustrates that smooth $\mathcal{A}$-polyconvex internal energy functionals, characterised by a semi-positive definite Hessian operator $\mathbb{H}_{W}$, guarantee semi-positiveness of the left-hand side of equation \eqref{eqn:ellipticity extended}, and hence the fulfilment of the ellipticity condition in \eqref{eqn:ellipticity condition_1}.

\section{Polyconvex transversely isotropic invariant-based electromechanics}\label{sec:invariants}
In addition to the requirement of $\mathcal{A}$-polyconvexity (i.e. existence of minimisers and material stability), the internal energy density $e$ must be defined satisfying additional requirements. The following subsection summarises them leading to an invariant-based representation of the energy density.

\subsection{Invariant energy representation}

Objectivity or frame invariance implies independence of the energy density with respect to arbitrary rotations $\vect{Q}$ of the spatial configuration, which can be formulated as
\begin{equation}\label{eqn:frame_invariance}
e(\vect{Q}\vect{F},\vect{D}_0)=e(\vect{F},\vect{D}_0);\qquad \forall \,\vect{Q}\in \text{SO}(3),
\end{equation}
with $\text{SO}(3)$ the special (proper) orthogonal group \cite{Gurt81ICM}. Notice that in above objectivity requirement, the rotation tensor $\vect{Q}$ is only applied to the two-point tensor $\vect{F}$ and not to the Lagrangian vector $\vect{D}_0$. As it is well-known, the requirement of objectivity implies that the internal energy $e$ must be expressible in terms of an objective set of arguments as  $e (\vect{F},\vect{D}_0) = \tilde{e} (\vect{C},\vect{D}_0)$, where $\vect{C}=\vect{F}^T\vect{F}$ denotes the right Cauchy-Green strain tensor and with $e$ and $\tilde{e}$ denoting alternative functional representations of the same internal energy density. 

In addition, for the case of transverse isotropy\footnote{Or any other kind of anisotropy characterised by a possibly more complex crystal symmetry structure.} characterised by a material unit vector $\vect{M}$, the energy must be independent with respect to rotations/reflections $\vect{Q}\in \mathcal{G}_{\vect{m}}\subset \text{O}(3)$ of the material configuration, where $ \mathcal{G}_{\vect{m}}$ denotes the corresponding symmetry group and $\text{O}(3)$ the full orthogonal group.  This anisotropic restriction is formulated as
\begin{equation}\label{eqn:transverse_iso}
\tilde{e}(\vect{Q}\vect{C}\vect{Q}^T,\vect{Q}\vect{D}_0)=\tilde{e}(\vect{C},\vect{D}_0);\qquad \forall \,\vect{Q}\in \mathcal{G}_{\vect{m}} \subset \text{O}(3).
\end{equation}

As stated e.g.~in \cite{SchroederGross04}, there exist five types of transverse isotropy. In this work, we restrict ourselves to those that affect most EAPS of interest, that is, the symmetry $\mathcal{D}_{\infty h}$, which corresponds to the usual definition of transverse isotropy (e.g. applicable to electro-active polymers), and the symmetry $\mathcal{C}_{\infty}$, the so-called rotational symmetry (e.g. applicable to electro-active materials exhibiting piezoelectric effects). Please refer to \cite{Zheng_invariants} for further details regarding the various types of transverse isotropy. Moreover, these two symmetry groups can be characterised by appropriate \textit{structural tensors} which encapsulate the symmetry attributes of their corresponding groups, which in this case are given by the second-order tensor $\vect{M}\otimes \vect{M}$ (for $\mathcal{D}_{\infty h}$) and the first-order tensor (vector) $\vect{M}$ (for $\mathcal{C}_{\infty}$).  

By making use of the \textit{isotropicisation theorem} \cite{Zheng_invariants}, it is now possible to re-express above anisotropic restriction \eqref{eqn:transverse_iso} for every $\vect{Q}\in \text{O}(3)$ as
\begin{subequations}
\begin{align}
\bar{e}_{\mathcal{D}_{\infty h}}(\vect{Q}\vect{C}\vect{Q}^T, \vect{Q}(\vect{D}_0\otimes\vect{D}_0)\vect{Q}^T, \vect{Q}(\vect{M}\otimes \vect{M})\vect{Q}^T)&=\bar
{e}_{{\mathcal{D}_{\infty h}}}(\vect{C},\vect{D}_0\otimes\vect{D}_0,\vect{M}\otimes \vect{M})
\\
\bar{e}_{\mathcal{C}_{\infty}}(\vect{Q}\vect{C}\vect{Q}^T, \vect{Q}\vect{D}_0, \vect{Q}\vect{M})&=\bar{e}_{\mathcal{C}_{\infty}}(\vect{C},\vect{D}_0,\vect{M})
\end{align}
\end{subequations}
where $\bar{e}_{\mathcal{D}_{\infty h}}$ and $\bar{e}_{\mathcal{C}_{\infty}}$ represent alternative isotropic functional representations of the internal energy density $e$ particularised for their corresponding transverse isotropy groups ${\mathcal{D}_{\infty h}}$ and ${\mathcal{C}_{\infty}}$, respectively. It is now straightforward to obtain an irreducible list of isotropic invariants (a so-called \textit{integrity basis}) for the characterisation of above energy densities $\bar{e}_{\mathcal{D}_{\infty h}}$ and $\bar{e}_{\mathcal{C}_{\infty}}$.  For the case of $\bar{e}_{\mathcal{D}_{\infty h}}$ (refer to Table 12 in \cite{Zheng_invariants}), we obtain
\begin{equation}\label{eqn:basis1}
\begin{aligned}
\text{isotropic elasticity:} \quad &\text{tr} [\vect{C}];  \text{tr} [\vect{C}^2]; \text{tr} [\vect{C}^3]; \\
+\text{isotropic electro:} \quad &\text{tr} [\vect{D}_0 \otimes \vect{D}_0]; \text{tr} [\vect{C}(\boldsymbol{D}_0 \otimes \vect{D}_0)]; \text{tr} [\vect{C}^2(\boldsymbol{D}_0 \otimes \vect{D}_0)];\\
+\text{transverse-isotropic elasticity:} \quad&\text{tr} [\vect{C}(\vect{M}\otimes \vect{M})]; \text{tr} [\vect{C}^2(\vect{M}\otimes \vect{M})];\\
+\text{transverse-isotropic electro:} \quad&\text{tr}[(\boldsymbol{D}_0\otimes \vect{D}_0)(\vect{M}\otimes \vect{M})];\text{tr}[(\vect{D}_0 \otimes \vect{D}_0)\vect{C}(\vect{M}\otimes \vect{M})],
\end{aligned}
\end{equation}
and for the case of  $\bar{e}_{\mathcal{C}_{\infty}}$ (refer to Table 12 in \cite{Zheng_invariants}), we obtain\footnote{In the original Table 12 in \cite{Zheng_invariants}, the invariant $\text{tr} [\vect{C}^2(\boldsymbol{D}_0 \otimes \vect{D}_0)]$ is missing whilst the invariant $\text{tr} [\vect{C}^2(\boldsymbol{D}_0 \otimes \vect{M})]$ is included in the basis. Notice that it is possible to show (refer to \cite{Bustamante2008} and references therein) that either invariant can be expressed in terms of the rest of the invariants of the basis.}
\begin{equation}\label{eqn:basis2}
\begin{aligned}
\text{isotropic elasticity:} \quad &\text{tr} [\vect{C}];  \text{tr} [\vect{C}^2]; \text{tr} [\vect{C}^3]; \\
+\text{isotropic electro:} \quad &\text{tr} [\vect{D}_0 \otimes \vect{D}_0]; \text{tr} [\vect{C}(\boldsymbol{D}_0 \otimes \vect{D}_0)]; \text{tr} [\vect{C}^2(\boldsymbol{D}_0 \otimes \vect{D}_0)]\\
+\text{transverse-isotropic elasticity:} \quad&\text{tr} [\vect{C}(\vect{M} \otimes \vect{M})]; \text{tr} [\vect{C}^2(\vect{M}\otimes \vect{M})];\\
+\text{transverse-isotropic electro:} \quad& \text{tr} [\vect{D}_0 \otimes \vect{M}];  \text{tr} [\vect{C}(\boldsymbol{D}_0 \otimes \vect{M})].
\end{aligned}
\end{equation}

Notice that both integrity bases \eqref{eqn:basis1} and \eqref{eqn:basis2} are comprised of ten invariants, only differing in the expression of the two invariants responsible for the transversely isotropic electro-mechanic effect. It is now possible to re-express the above energy densities in terms of the invariants of the above integrity bases. Specifically, for the case of $\bar{e}_{\mathcal{D}_{\infty h}}$, it yields
\begin{equation}
\bar{e}_{\mathcal{D}_{\infty h}}(\vect{C},\vect{D}_0\otimes\vect{D}_0,\vect{M}\otimes \vect{m})=\hat{e}_{\mathcal{D}_{\infty h}}(J_1,J_2,J_3,J_4,J_5,J_6,J_7,J_8,K_1^{\mathcal{D}_{\infty h}},K_2^{\mathcal{D}_{\infty h}}),
\end{equation}
where $J_i,\{i=1\ldots8\}$ denote the following set of invariants 
\begin{equation}
\begin{gathered}
J_1=\text{tr} [\vect{C}];\quad
J_2=\text{tr} [\vect{C}^2];\quad
J_3=\text{tr} [\vect{C}^3];\quad
J_4=\text{tr} [\vect{C}(\vect{M}\otimes \vect{M})];\quad
J_5=\text{tr} [\vect{C}^2(\vect{M}\otimes \vect{M})];\\
J_6=\text{tr} [\vect{D}_0 \otimes \vect{D}_0]; \quad
J_7=\text{tr} [\vect{C}(\boldsymbol{D}_0 \otimes \vect{D}_0)];\quad
J_8=\text{tr} [\vect{C}^2(\vect{D}_0\otimes \vect{D}_0)],
\end{gathered}
\end{equation}
and $K_1^{\mathcal{D}_{\infty h}},K_2^{\mathcal{D}_{\infty h}}$ are
\begin{equation}
\begin{gathered}
K_1^{\mathcal{D}_{\infty h}}=\text{tr} [(\boldsymbol{D}_0\otimes \vect{D}_0)(\vect{M}\otimes \vect{M})];\quad
K_2^{\mathcal{D}_{\infty h}}=\text{tr}[(\vect{D}_0 \otimes \vect{D}_0)\vect{C}(\vect{M}\otimes \vect{M})],
\end{gathered}
\end{equation}
note that $J_{1}\ldots J_5$ denote the mechanical invariants, $J_6\ldots J_8$ denote the invariants introducing (isotropic) electro-mechanical effects and $K_1^{\mathcal{D}_{\infty h}}$ and $K_2^{\mathcal{D}_{\infty h}}$ are the two invariants responsible for introducing fibre dependent electro-mechanical effects. 

Similarly, in the case of $\bar{e}_{\mathcal{C}_{\infty}}$, it yields
\begin{equation}
\bar{e}_{\mathcal{C}_{\infty}}(\vect{C},\vect{D}_0,\vect{M})=\hat{e}_{\mathcal{C}_{\infty}}(J_1,J_2,J_3,J_4,J_5,J_6,J_7,J_8,K_1^{\mathcal{C}_{\infty }},K_2^{\mathcal{C}_{\infty }}),
\end{equation}
where the alternative fibre dependent electro-mechanical invariants are
\begin{equation}
K_1^{\mathcal{C}_{\infty }}=\text{tr} [\vect{D}_0 \otimes \vect{M}];\quad
K_2^{\mathcal{C}_{\infty }}=\text{tr} [\vect{C}(\boldsymbol{D}_0 \otimes \vect{M})].
\end{equation}

A further restriction, albeit not strictly necessary, is that of zero energy at the origin (i.e. $\boldsymbol{\mathcal{U}}_0=(\vect{F}=\vect{I},\vect{D}_0=\vect{0})$), that is, 
$e(\boldsymbol{\mathcal{U}}_0)=0$. However, an important restriction is that of zero stresses and electric field at the origin, namely,
\begin{equation}
\vect{P}(\boldsymbol{\mathcal{U}}_0)=\left.\partial_{\vect{F}} e\right|_{\boldsymbol{\mathcal{U}}_0}=\vect{0};\qquad
\vect{E}_0(\boldsymbol{\mathcal{U}}_0)=\left.\partial_{\vect{D}_0} e\right|_{\boldsymbol{\mathcal{U}}_0}=\vect{0}.
\end{equation}

It is interesting to observe that for EAPs exhibiting symmetry of the group $\mathcal{D}_{\infty h}$, the electric field $\vect{E}_0$ evaluated at the origin of electric displacements, that is,  $\vect{D}_0=\vect{0}$ (regardless of the value of deformation), is zero, which is indeed the case as these materials do not experience any piezoelectric effects. This is obtained implicitly due to the dependence of the integrity basis on the second order tensor $\vect{D}_0 \otimes \vect{D}_0$ (refer to invariants $J_6\ldots J_{10}$), that is,
\begin{equation}\label{eqn:piezo}
\vect{E}_0(\vect{F},\vect{D}_0=\vect{0})=\sum_{i=6}^{10}\left.[(\partial_{I_i} \bar{e}_{\mathcal{D}_{\infty h}}) (\partial_{\vect{D}_0 \otimes \vect{D}_0} I_i)]\right|_{(\vect{F},\vect{0})} 2\underbrace{\vect{D}_0}_{=\vect{0}}=\vect{0}.
\end{equation}

However, for electro-active materials of the group $\mathcal{C}_{\infty}$, this restriction does not apply. Indeed, invariants $K^{\mathcal{C}_{\infty }}_1, K^{\mathcal{C}_{\infty }}_2$ depend linearly on $\vect{D}_0$ and not on $\vect{D}_0 \otimes \vect{D}_0$, which permits lifting of the restriction \eqref{eqn:piezo}.

Finally, material characterisation typically requires calibration of material parameters at $\boldsymbol{\mathcal{U}}_0$, that is
\begin{equation}
\vect{\mathcal{C}}_e(\boldsymbol{\mathcal{U}}_0)=\left.\partial^2_{\vect{FF}}e\right|_{\boldsymbol{\mathcal{U}}_0}=\vect{\mathcal{C}}_e^{\text{lin}};\quad
\vect{\mathcal{Q}}(\boldsymbol{\mathcal{U}}_0)=\left.\partial^2_{\vect{D}_0\vect{F}}e\right|_{\boldsymbol{\mathcal{U}}_0}=\vect{\mathcal{Q}}^{\text{lin}};\quad
\vect{\mathcal{\theta}}(\boldsymbol{\mathcal{U}}_0)=\left.\partial^2_{\vect{D}_0\vect{D}_0}e\right|_{\boldsymbol{\mathcal{U}}_0}=\vect{\theta}^{\text{lin}}, 
\end{equation}
or in case of using the Hessian components of the Helmholtz's free energy function $[\mathbb{H}_{\Psi}]$
\begin{equation}
\vect{\mathcal{C}}_{\Psi}(\boldsymbol{\mathcal{W}}_0)=\left.\partial^2_{\vect{FF}}\Psi\right|_{\boldsymbol{\mathcal{W}}_0}=\vect{\mathcal{C}}_{\Psi}^{\text{lin}};\quad
\vect{\mathcal{P}}(\boldsymbol{\mathcal{W}}_0)=\left.\partial^2_{\vect{D}_0\vect{F}}\Psi\right|_{\boldsymbol{\mathcal{W}}_0}=\vect{\mathcal{P}}^{\text{lin}};\quad
\vect{\mathcal{\epsilon}}(\boldsymbol{\mathcal{W}}_0)=\left.\partial^2_{\vect{E}_0\vect{E}_0}\Psi\right|_{\boldsymbol{\mathcal{W}}_0}=\vect{\mathcal{\epsilon}}^{\text{lin}},
\end{equation}
where $\boldsymbol{\mathcal{W}}_0=(\vect{F}=\vect{I},\vect{E}_0=\vect{0})$.

\subsection{$\mathcal{A}$-polyconvex transversely isotropic invariant representations}

It is possible to re-express the above integrity bases of invariants $J_i,\{i=1\ldots8\}$, $K_j^{\mathcal{D}_{\infty h}}$, $K_j^{\mathcal{C}_{\infty }}$, $\{j=1,2\}$ in terms of an alternative basis of invariants more amenable to the study of $\mathcal{A}$-polyconvexity. The purely mechanical isotropic invariants can be re-defined as
\begin{equation}
I_1=J_1=\vect{F}:\vect{F};\qquad I_2=\vect{H}:\vect{H};\qquad I_3=J,
\end{equation}
where
\begin{equation}
J_2=I_1^2 -2I_2; \qquad J_3=I_1^3-3I_1I_2+3I_3^2.
\end{equation}

The transversely isotropic mechanical invariants can be re-defined as
\begin{equation}
I_4=J_4=(\vect{F}\vect{M}) \cdot (\vect{F}\vect{M});\qquad I_5=(\vect{H}\vect{M}) \cdot (\vect{H}\vect{M});
\end{equation}
where
\begin{equation}
J_5=I_1I_4-I_2+I_5.
\end{equation}

Similarly, the isotropic electro-mechanical invariants can be re-defined as
\begin{equation}
I_6=J_6=\vect{D}_0\cdot \vect{D}_0; \qquad I_7=J_7=\vect{d}\cdot \vect{d};\qquad I_8=(\vect{H}\vect{D}_0) \cdot (\vect{H} \vect{D}_0),
\end{equation}
where
\begin{equation}
J_8=I_1I_4-I_2+I_8,
\end{equation}
and the transversely isotropic electro-mechanical contributions as
\begin{equation}
K_1^{\mathcal{D}_{\infty h}}=(\vect{D}_0\cdot \vect{M})^2;\qquad 
K_2^{\mathcal{D}_{\infty h}}=(\vect{d} \cdot \vect{F}\vect{M})^2,
\end{equation}
and
\begin{equation}\label{eq:K^C}
K_1^{\mathcal{C}_{\infty }}=\vect{D}_0\cdot \vect{M};\qquad 
K_2^{\mathcal{C}_{\infty }}=\vect{d}\cdot \vect{F}\vect{M}.
\end{equation}

Thus, energies $\bar{e}_{\mathcal{D}_{\infty h}}$ and $\bar{e}_{\mathcal{C}_{\infty}}$ can be re-written as
\begin{subequations}\label{eqn:energies1}
\begin{align}
\bar{e}_{\mathcal{D}_{\infty h}}(\vect{C},\vect{D}_0,\vect{M})&=\hat{\hat{e}}_{\mathcal{D}_{\infty h}}(I_1,I_2,I_3,I_4,I_5,I_6,I_7,I_8,K_1^{\mathcal{D}_{\infty h}},K_2^{\mathcal{D}_{\infty h}});\\
\bar{e}_{\mathcal{C}_{\infty}}(\vect{C},\vect{D}_0,\vect{M})&=\hat{\hat{e}}_{\mathcal{C}_{\infty}}(I_1,I_2,I_3,I_4,I_5,I_6,I_7,I_8,K_1^{\mathcal{C}_{\infty}},K_2^{\mathcal{C}_{\infty}}).
\end{align}
\end{subequations}

It is customary to split the energy density into mechanical and electro-mechanical components, further splitting the former into deviatoric and volumetric components, via suitable modification of the invariants $I_1$ and $I_2$. This can be formulated as follows
\begin{subequations}\label{eqn:energies2}
\begin{align}
\bar{e}_{\mathcal{D}_{\infty h}}(\vect{C},\vect{D}_0,\vect{m})&=\tilde{\hat{e}}^{\text{dev,mec}}_{\mathcal{D}_{\infty h}}(I_1^{\text{dev}},I_2^{\text{dev}},I_4,I_5) + e^{\text{vol}}(I_3)\nonumber\\
&+\tilde{\hat{e}}^{\text{ele}}_{\mathcal{D}_{\infty h}}(I_1,I_2,I_3,I_4,I_5,I_6,I_7,I_8,K_1^{\mathcal{D}_{\infty h}},K_2^{\mathcal{D}_{\infty h}});\\
\bar{e}_{\mathcal{C}_{\infty }}(\vect{C},\vect{D}_0,\vect{m})&=\tilde{\hat{e}}^{\text{dev,mec}}_{\mathcal{C}_{\infty }}(I_1^{\text{dev}},I_2^{\text{dev}},I_4,I_5) + e^{\text{vol}}(I_3) \nonumber\\
&+ \tilde{\hat{e}}^{\text{ele}}_{\mathcal{C}_{\infty }}(I_1,I_2,I_3,I_4,I_5,I_6,I_7,I_8,K_1^{\mathcal{C}_{\infty }},K_2^{\mathcal{C}_{\infty }}),
\end{align}
\end{subequations}
where the different energetic contributions are identified by corresponding upper indices and where
\begin{equation}
I_1^{\text{dev}}=I_3^{-2/3}I_1=J^{-2/3}\vect{F}:\vect{F};\qquad I_2^{\text{dev}}=I_3^{-4/3}I_2=J^{-4/3}\vect{H}:\vect{H}.
\end{equation}

It is now the objective to derive  that is, 1) the volumetric energy $e^{\text{vol}}$ is selected as a convex function and 2) the deviatoric energy densities are selected as a linear combination of invariants with suitable coefficients such that the overall energy density is $\mathcal{A}$-polyconvex. One possibility how to construct such an energy is to consider only $\mathcal{A}$-polyconvex invariants  $I_1,I_2,I_1^{\text{dev}}$, $I_4$, $I_5$, $I_6$, $I_7$, $K_1^{\mathcal{D}_{\infty h}}$, and $K_1^{\mathcal{C}_{\infty}}$ with non-negative weights. However, invariants $I_2^{\text{dev}}$, $I_8$, $K_2^{\mathcal{D}_{\infty h}}$, and $K_2^{\mathcal{C}_{\infty}}$ are not rank-one convex and, hence, not polyconvex. To incorporate also non $\mathcal{A}$-polyconvex invariants into the energy, one needs to make sure that the overall energy is $\mathcal{A}$-polyconvex.

Yet another possibility is to polyconvexify the non $\mathcal{A}$-polyconvex invariants, cf \cite{vsilhavy2018variational}. However, polyconvexification is in general difficult and may lead to non-coercive terms. Thus, some authors, e.g., \cite{Gil2016_NewFramework}, propose ad-hoc nonlinear modification of non-polyconvex invariants leading to polyconvex terms. Such modification has to be designed case by case, e.g., 
$I_{2}^{\text{dev}}$ can be modified to $I_{2,\text{pol}}^{\text{dev}} = \left(I_{2}^{\text{dev}}\right)^{3/2} =  J^{-2}\left(\vect{H}:\vect{H}\right)^{3/2}$ which is polyconvex. Similarly, a possible polyconvex invariant expression for $I_{8}$ is  
\begin{equation}
I_{8,\text{pol}}=\alpha^2 \left(\vect{H}:\vect{H}\right)^2 + \beta^2 \left(\vect{D}_0\cdot \vect{D}_0\right)^2 +\alpha\beta (\vect{H}\vect{D}_0)\cdot(\vect{H}\vect{D}_0)=\alpha^2 (I_2)^2+\beta^2 (I_6)^2+\alpha\beta (I_8),
\end{equation}
where $\alpha,\beta$ are positive material constants. Analogously, a possible modification of $K_{2}^{\mathcal{C}_{\infty}}$ is
\begin{equation}
K_{2,\text{pol}}^{\mathcal{C}_{\infty}}=(\eta \vect{d} +\psi \vect{F}\vect{M})\cdot (\eta \vect{d} +\psi \vect{F}\vect{M})=\eta^2 I_7 +\psi^2 I_4 +2\eta \psi K_{2}^{\mathcal{C}_{\infty}},
\end{equation}
where $\eta,\psi$ are positive material constants. The proof of polyconvexity of invariants $I_{8,\text{pol}}$ and $K_{2,\text{pol}}^{\mathcal{C}_{\infty}}$ is included in Appendix \ref{appendix:polyconvex_invariants}.
Modification of $K_2^{\mathcal{D}_{\infty h}}$ can be obtained by adding suitable convex terms depending on $\vect{F}$ and $\vect{D}_0$. However, the physical relevance of the resulting energy is questionable.

 Additionally, in order to fulfill assumptions of Theorem \ref{th:Existence}, we need to ensure that the energy is coercive, cf. (\ref{growth}). For example, invariant $K_1^{\mathcal{C}_{\infty }}=\vect{D}_0\cdot \vect{M}$ is $\mathcal{A}$-polyconvex, but not coercive. Notice that $\vect{D}_0\cdot\vect{M} = 0$ if $\vect{D}_0$ is perpendicular to $\vect{M}$. To ensure coercivity of the overall energy $e$, we must combine non-coercive invariants with coercive ones, e.g., $aJ_6 + bK_{1}^{\mathcal{C}_{\infty}}$ is coercive for $a > 0$.

\section{Numerical examples}\label{sec:examples}

This section presents a series of numerical examples modelling  the performance of transversely isotropic EAPs at large strains. The first numerical example, restricted to the case of homogeneous deformation, circumvents the need to use a Finite Element (FE) spatial discretisation and studies the behaviour, at a local level, of the response of the EAP. A comprehensive study will be conducted where the influence of the deformation and the level of transverse anisotropy has on the stability of the model. In addition, polyconvex and non-polyconvex constitutive models will be studied and compared for a range of deformations and electric fields, emphasising that seemingly similar models might lead to unexpected results. The second numerical example abandons the
assumption of uniform deformation and explores the use of three-dimensional Finite Elements in the case of multi-layered transversely isotropic EAPs in the form of thin rectangular films subjected to appropriate mechanical and electrical boundary conditions. The influence of the orientation of the vector of anisotropy and electro-mechanical properties will be studied, especially into their effect on the stability of the model, so as to prevent the onset of spurious mesh dependence results. From the FE standpoint we adopt an enhanced $(\vect{\phi},\vect{D}_0,\varphi)$ formulation (briefly recalled in Appendix \ref{appendix:FEM} for completeness).   
\subsection{Numerical example 1}
Through this example we aim to:
\begin{itemize}
\item\label{obj Exp1: different performance} To analyse the response of $\mathcal{A}$-polyconvex and non-$\mathcal{A}$-polyconvex energy functionals in uhomogeneous states of deformation and electric fields.

\item\label{obj Exp1: arc-length} Appreciate the importance of using arc-length techniques in order to bypass instability regions and harness actuator performance beyond the moderate regime.

\end{itemize}


\begin{figure}[h!]
	\centering
	\begin{tabular}{c}
		\hspace{-0.6cm}		\includegraphics[width=0.95\textwidth]{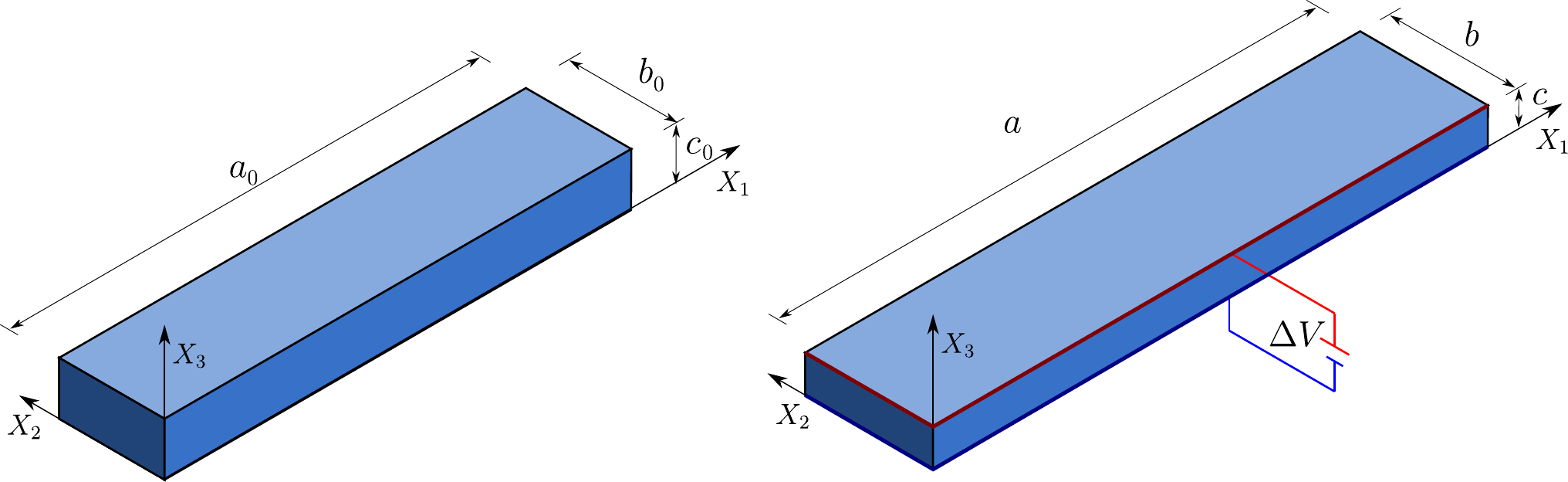}  
	\end{tabular}
	\caption{Numerical example 1. Experimental set-up. The application of a uniform electric field along the $OX_3$ direction causes a stretch of the DE laminated composite along $OX_1$ direction.}
	\label{fig: Exp1 boundary conditions}
\end{figure}

We consider a prototypical set-up very similar to that already explored by several authors in the past, both numerically \cite{Galich2017_Shear_finitely, Wu2018} and experimentally \cite{Zhao2014}. This consists of a Dielectric Elastomer film, such as the one depicted in Figure \ref{fig: Exp1 boundary conditions}, which is subjected to a homogeneous state of deformation and electric field. Two electrodes are placed at opposite faces of the film and an externally controlled Lagrangian electric field $\vect{E}_0$ is applied across and orientated along the $X_3$ axis, whilst maintaining stress-free conditions, in turn generating a state of uniform deformation and electric displacement across the film. This uniform state of deformation and electric field is exploited in order to study the response of the elastomer from a local point of view, without the need to resort to a finite element discretisation. As a result, in the absence of any further loads and electric charges, the homogeneous solution to this problem corresponds to the stationary points $\{\vect{F}^{\ast},\vect{D}_0^{\ast}\}$ of the Helmholtz's free energy functional defined as
\begin{equation}
\Pi(\vect{F}^{\ast},\vect{D}_0^{\ast},\vect{E}_0)=
{\underset{\vect{F}}{\text{inf}}}\;
{\underset{\vect{D}_0}{\text{inf}}}
\left\{
e(\vect{F},\vect{D}_0)-\vect{E}_0 \cdot \vect{D}_0
\right\},
\end{equation}
where $\vect{E}_0$ is the externally controlled electric field. Thus, the stationary conditions of the above functional arise as
\begin{equation}\label{eqn: Exp1 stationary points}
\vect{\mathcal{R}_{\vect{F}}}(\vect{F},\vect{D}_0)=\partial_{\vect{F}}e=\vect{0};\qquad 
\vect{\mathcal{R}_{\vect{D}_0}}(\vect{F},\vect{D}_0)=\partial_{\vect{D}_0}e-\vect{E}_0=\vect0.
\end{equation}
The above nonlinear stationary conditions \eqref{eqn: Exp1 stationary points} are solved in terms of unknowns $\vect{F}^{\ast}$ and $\vect{D}_0^{\ast}$ dependent upon the externally controlled electric field $\vect{E}_0$. A similar set-up has been previously used in many references \cite{Gei2018,Galich2017_Shear}, albeit restricted to the condition of plane strain and strict incompressibility. Here, these kinematic assumptions are relaxed and the deformation gradient tensor is left to adopt a more complex expression. Specifically, $\vect{F}$ and $\vect{E}_0$ are formulated as
\begin{equation}
\vect{F}=\left[\begin{matrix}
F_{11}   &     0      & 0\\
   0     &   F_{22}   & 0\\
   0     &     0      & F_{33}\\
\end{matrix}\right];\qquad 
\vect{E}_0=\begin{bmatrix}
0\\
0\\
E_0
\end{bmatrix},
\end{equation}

Typically, $F_{33}$ is constrained through the condition of strict incompressibility, as this permits to obtain closed-form solutions to the problem. Notice that these simplifying assumption does not apply to this study, where the nonlinear stationary conditions \eqref{eqn: Exp1 stationary points} are solved by an iterative Newton-Raphson method. In addition, in order to track the entire equilibrium path beyond the onset of limit points (i.e., snap-through, snap-back),  an arc-length technique is employed. Two constitutive models have been considered in this study. The first restricts to the isotropic case whereas the second considers the case of the transversely isotropic symmetry group $\mathcal{D}_{\infty}$.

\subsubsection{Isotropic case}

For the isotropic case, two energy functionals have been considered. The first, denoted as $\bar{e}_{\text iso,1}(\vect{F},\vect{D}_0)$ is given in equation \eqref{eqn:non-polyconvex isotropic energy} and it has been additively decomposed into two terms. The first represents strain energy of a Mooney-Rivlin model and that of an ideal dielectric elastomer. This entire contribution has been denoted as $\bar{e}_{\text{ID}}(\vect{F},\vect{D}_0)$. The second term in the additive decomposition contains the non-
$\mathcal{A}$-polyconvex invariant $I_8(\vect{F},\vect{D}_0)$.

\begin{equation}\label{eqn:non-polyconvex isotropic energy}
\begin{aligned}
    \bar{e}_{\text{iso},1}(\vect{F},\vect{D}_0)&=\bar{e}_{\text{ID}}(\vect{F},\vect{D}_0) + 
\frac{1}{\varepsilon_2}  I_8(\vect{F},\vect{D}_0);\\
   \bar{e}_{\text{ID}}(\vect{F},\vect{D}_0)& = \frac{\mu_1}{2}I_1^{\text{dev}}(\vect{F}) + \frac{\mu_2}{2}I_2^{\text{dev}}(\vect{F}) + \frac{\lambda}{2}(I_3(\vect{F})-1)^2 + \frac{1}{2\varepsilon_1 }\frac{I_7(\vect{F},\vect{D}_0)}{I_3(\vect{F})} 
   \end{aligned}
\end{equation}
where
\begin{equation}
I_8(\vect{F},\vect{D}_0)=\vect{H}\vect{D}_0\cdot \vect{H}\vect{D}_0
\end{equation}

Furthermore, a second energy functional, denoted as $\bar{e}_{\text iso,1}(\vect{F},\vect{D}_0)$ has been considered. This is also additively decomposed into the $\bar{e}_{\text{ID}}(\vect{F},\vect{D}_0)$ and the additional invariant $I_{8,\text{pol}}$, which results from the polyconvexication of $I_8$. The explicit expression for  $\bar{e}_{\text {iso},1_{\text{pol}}}(\vect{F},\vect{D}_0)$ can be seen in equation \eqref{eqn:polyconvex isotropic energy}, i.e.
\begin{equation}\label{eqn:polyconvex isotropic energy}
\begin{aligned}
    \bar{e}_{\text{iso},1_{\text{pol}}}(\vect{F},\vect{D}_0)&=\bar{e}_{\text{ID}}(\vect{F},\vect{D}_0) +  I_{8,\text{pol}}(\vect{F},\vect{D}_0)  - 12\alpha^2\log(I_3(\vect{F}))
    \end{aligned}
\end{equation}
where 
\begin{equation}
    I_{8,\text{pol}}=\alpha^2I_2(\vect{F}) + \alpha\beta I_8 + \beta^2 I^2_6(\vect{D}_0)
\end{equation}

\begin{table}[htbp!] \label{Tab:MaterialParametersNonpolyconvex_Isotropic}
\centering
\caption{Material properties for constitutive model in equation \eqref{eqn:non-polyconvex isotropic energy}}
     \label{tab:non-polyconvex isotropic}
     \begin{tabular}{cccccc}
     \toprule
     {$\mu_1$} & {$\mu_2$} & {$\lambda$} & {$\varepsilon_1$}& {$\varepsilon_2$}
     \\
     \midrule
     $1\times10^5$ & $1.0\mu_1$ & $10^3\mu_1$ & $4.82\varepsilon_0$& $24\varepsilon_0$ \\
     \bottomrule
\end{tabular}
 \end{table}

\begin{table}[htbp!] \label{Tab:MaterialParametersPolyconvex_Isotropic}
\centering
\caption{Material properties for the constitutive model in equation \eqref{eqn:polyconvex isotropic energy}}
     \label{tab:polyconvex isotropic}
     \begin{tabular}{ccccccccccccc}
     \toprule
   Parameters  & Pol. Mat. 1  & Pol. Mat. 2 & Pol. Mat. 3 & Pol. Mat. 4 & Pol. Mat. 5
     \\
     \midrule
{$\mu_1$} &$10^5$ &$10^5$ &$10^5$ &$10^5$          &$10^5$\\
{$\mu_2$} &$10^5$ &$10^5$ &$10^5$ &$10^5$          &$10^5$\\
{$\lambda$} &$10^8$              &$10^8$              &$10^8$            &$10^8$               &$10^8$\\
{$\varepsilon$} &$4.82\varepsilon_0$ &$4.82\varepsilon_0$ &$4.82\varepsilon_0$ &$4.82\varepsilon_0$   &$4.82\varepsilon_0$\\
{$\alpha$}      &$1.78\times 10^8$   &$2.90\times 10^9$   &$4.91\times 10^9$ &$6.92\times 10^9$    &$8.93\times 10^9$\\ 
{$\beta$}      &$2.52\times 10^2$            &$4.11\times 10^3$             &$6.95\times 10^3$           &$9.80\times 10^3$              &$1.26\times 10^4$\\ 
     \bottomrule
\end{tabular}
 \end{table}

For the non-$\mathcal{A}$-polyconvex model in \eqref{eqn:non-polyconvex isotropic energy}, the values of the material parameters can be found in Table \ref{Tab:MaterialParametersNonpolyconvex_Isotropic}. For the $\mathcal{A}$-polyconvex model in \ref{eqn:polyconvex isotropic energy}, five combinations of material parameters can be found in Table \ref{Tab:MaterialParametersPolyconvex_Isotropic}. It can be seen that the material parameters $\{\mu_1\mu_2,\lambda,\varepsilon\}$ for the polyconvex models have been kept the same as their counterparts $\{\mu_1\mu_2,\lambda,\varepsilon_1\}$, respectively, for the non-polyconvex case. Several values for the two remaining material parameters $\{\alpha,\beta\}$ have been considered in order to see their influence in the equilibrium path of the elastomer. From Figures \ref{fig:isotropic model}$_{a,b}$, the combination of values for $\{\alpha,\beta\}$ that yields a closer response to the non-polyconvex model is that corresponding with the polyconvex material 2 (Pol. Mat. 2), whose values for $\{\alpha,\beta\}$ can be found in the third column of Table \ref{Tab:MaterialParametersPolyconvex_Isotropic}. These values have been determined by performing an optimisation problem. Specifically, for each pair of values $\left(\vect{F},\vect{D}_0\right)$ in the equilibrium path of the non-polyconvex model, we have formulated the following minimisation problem
\begin{equation}\label{eqn:optimisation isotropic}
\min_{\alpha,\beta} \,\,  \left\{ \begin{aligned}
&\mathcal{J}\\
&\text{s.t.} \,\,\,\alpha>0,\,\, \beta>0
\end{aligned}\right.
\end{equation}
where the objective function $\mathcal{J}$ is defined as
\begin{equation}
   \mathcal{J}= \sqrt{\sum_{i=1}^n \vert\vert\partial_{\vect{F}}\bar{e}_{\text{iso},{1_{\text{pol}}}}(\vect{F}_i,\vect{D}_{0i})\vert\vert^2}+ \sqrt{\sum_{i=1}^n\frac{\vert\vert\partial_{\vect{D}_0}\bar{e}_{\text{iso},{1_{\text{pol}}}}(\vect{F}_i,\vect{D}_{0i})-\partial_{\vect{D}_0}\bar{e}_{\text{iso},{1}}(\vect{F}_i,\vect{D}_{0i})\vert\vert^2 }{\vert\vert\partial_{\vect{D}_0}\bar{e}_{\text{iso},{1}}(\vect{F}_i,\vect{D}_{0i})\vert\vert^2 }},   
\end{equation}
where $n$ refers to the number of discrete data pairs $\left(\vect{F},\vect{D}_0\right)$ describing the discrete equilibrium path of the non-polyconvex model.

The similarity between the equilibrium paths of the non-polyconvex model and that whose material parameters $\{\alpha,\beta\}$ have been obtained through the optimisation method described is fairly reasonable up to a $10$ percent in the $F_{11}$ component of the deformation gradient tensor, beyond which both equilibrium paths start diverging (see Figure \ref{fig:isotropic model}). Clearly, this owes to the fact that the polyconvex model contains higher nonlinear terms $I_4^2$ and $I_6^2$, yielding a slightly more stable response to that of its non-polyconvex counterpart. From Figure \ref{fig:isotropic model}, it is possible to observe the regions where the Hessian operator loses positive definiteness and hence, where the loss of convexity occurs. Interestingly, the non-polyconvex model loses in addition ellipticity (see the yellow region in Figure \ref{fig:isotropic model}$_c$). This has been checked by monitoring the least of the minors of the acoustic tensor $\vect{Q}$ in equation \eqref{eqn: Exp1 acoustic tensor}. Evidently, this is not appreciated in any of the five $\mathcal{A}$-polyconvex models in Figures \ref{fig:isotropic model}$_d$-\ref{fig:isotropic model}$_h$.

\begin{figure}[htbp!]
    \centering
    \begin{tabular}{cc}
\includegraphics[width=0.36\textwidth]{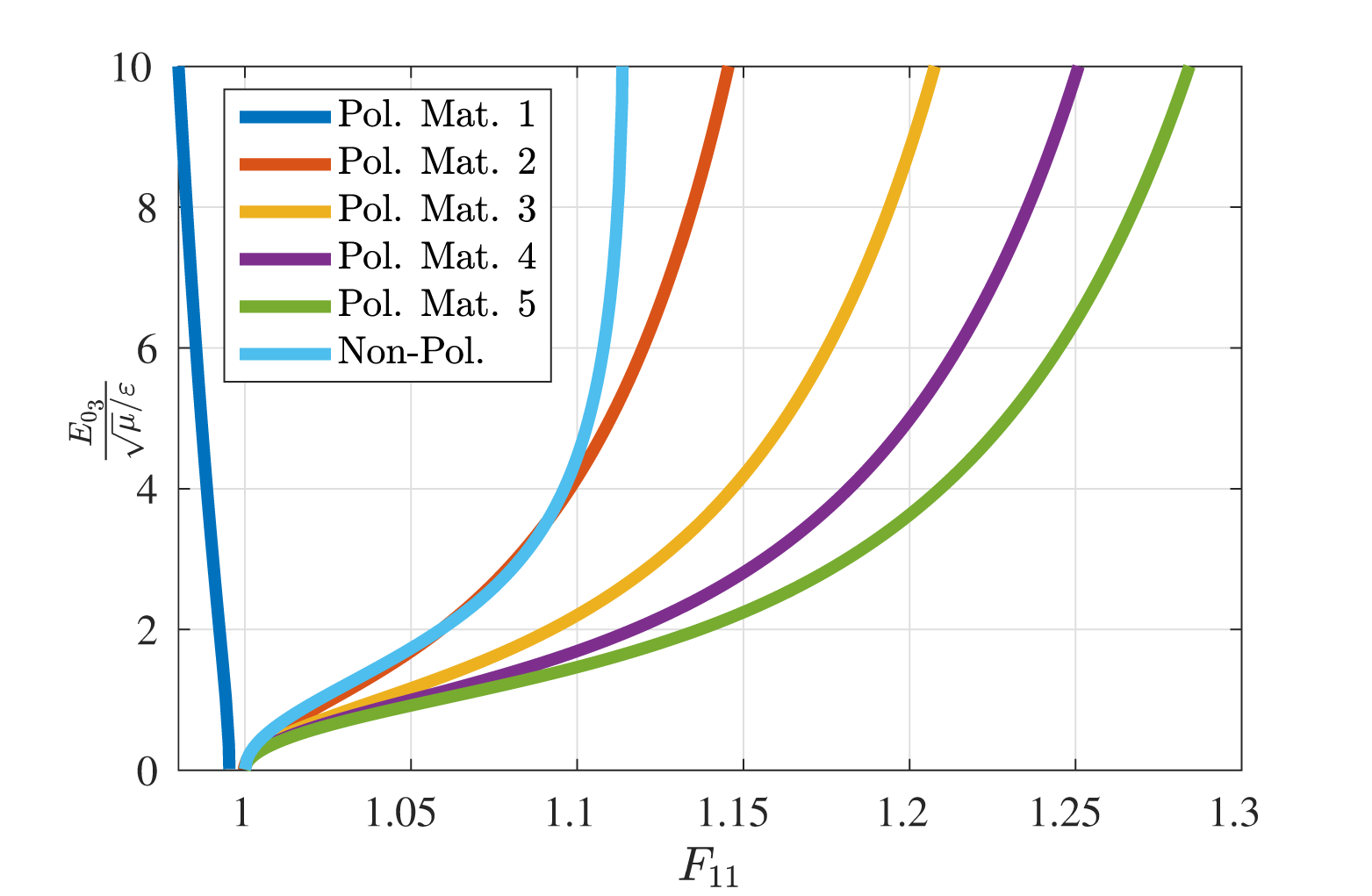}
         &  
\includegraphics[width=0.36\textwidth]{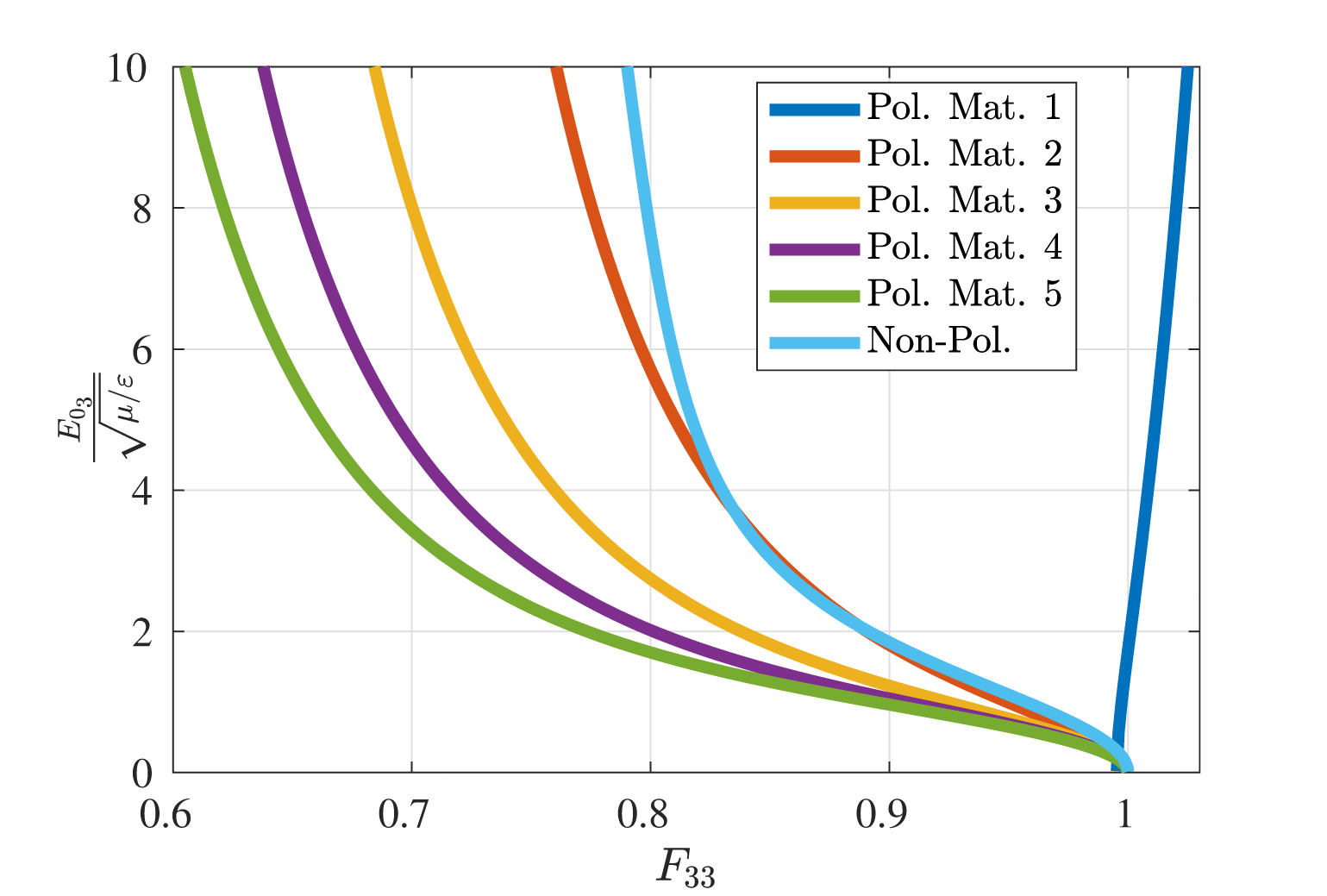}
         \\
(a)         &   (b)\\
\includegraphics[width=0.36\textwidth]{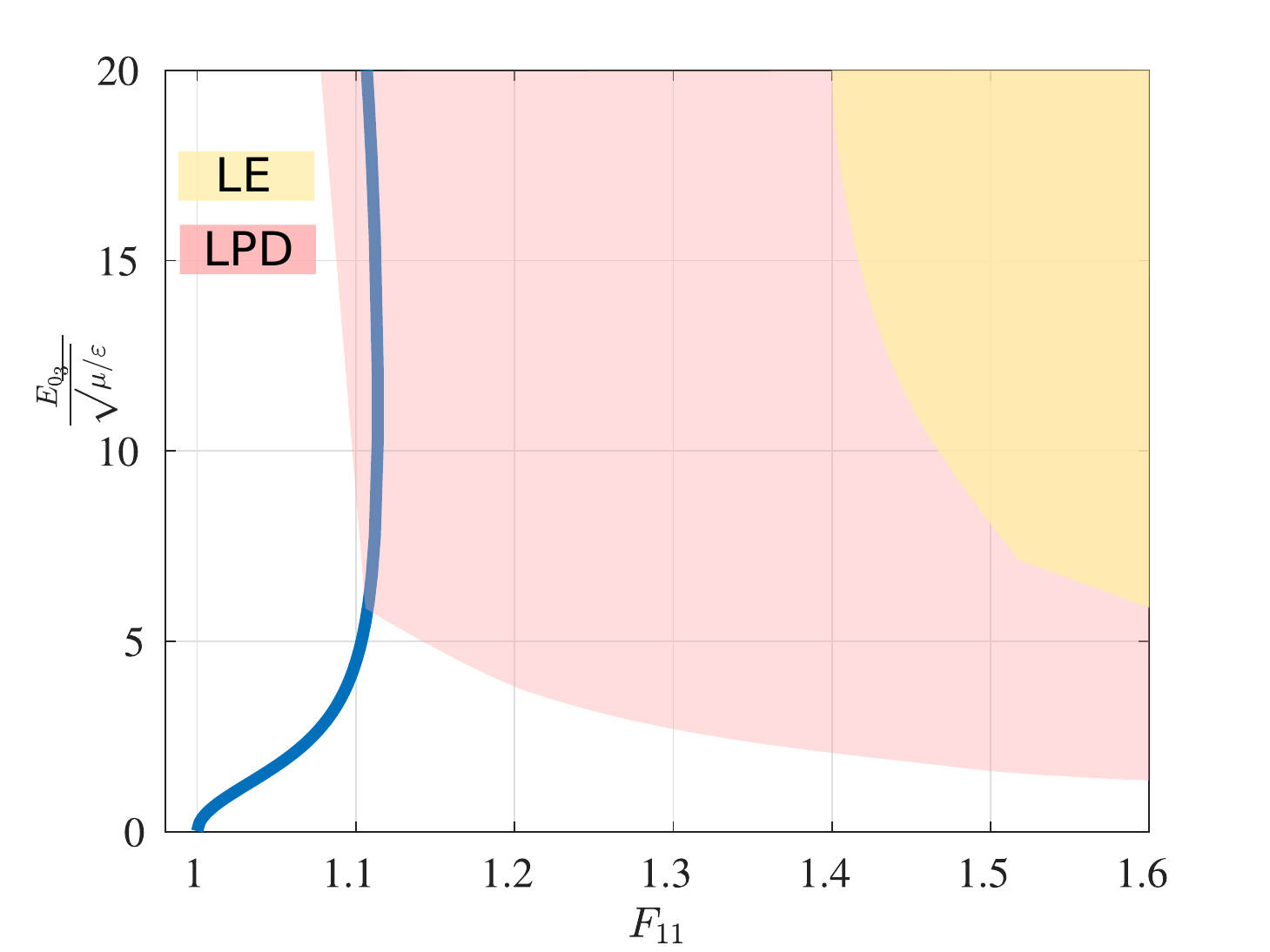}
         &  
\includegraphics[width=0.36\textwidth]{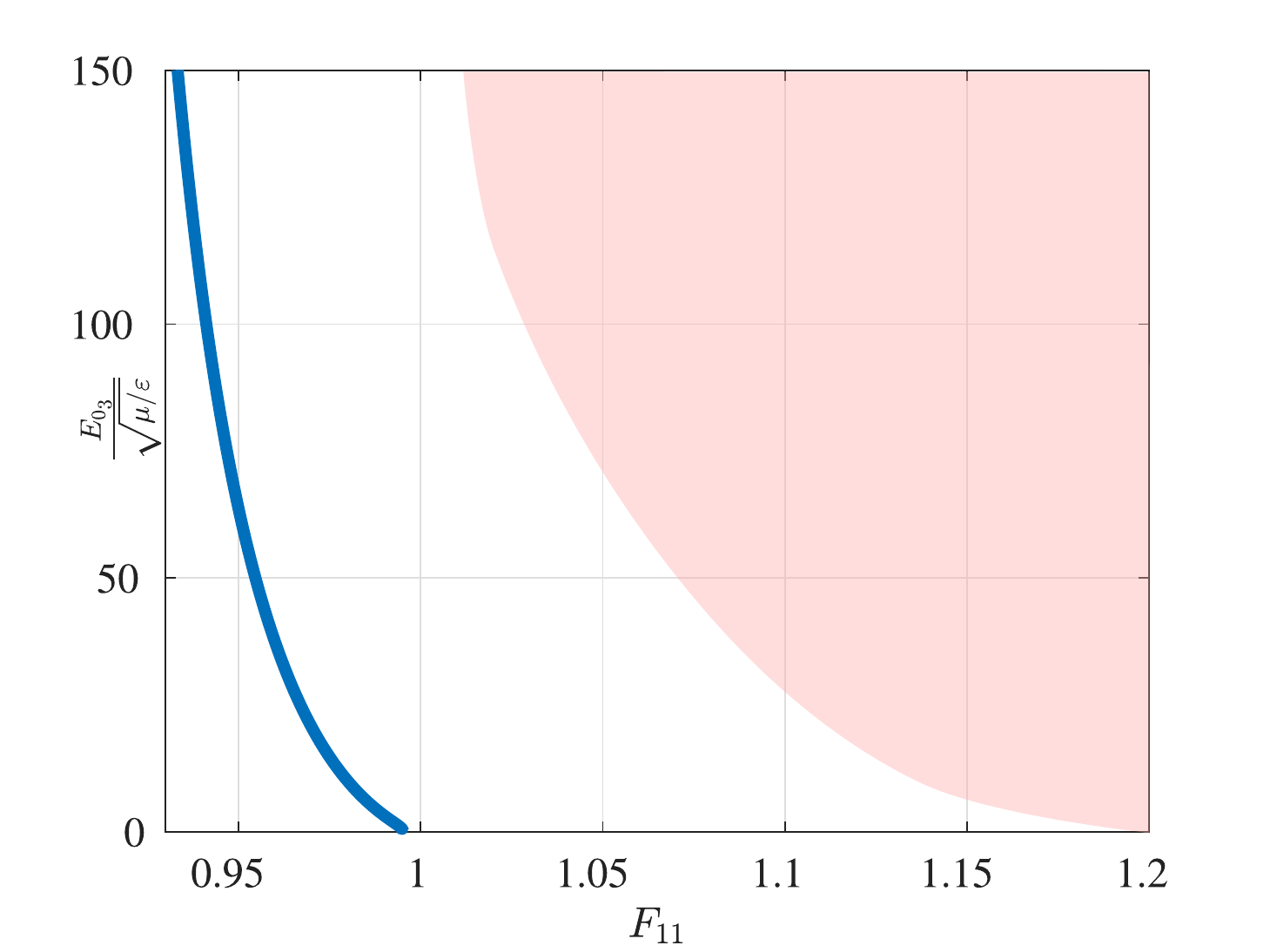}\\
(c)         &   (d)\\
\includegraphics[width=0.36\textwidth]{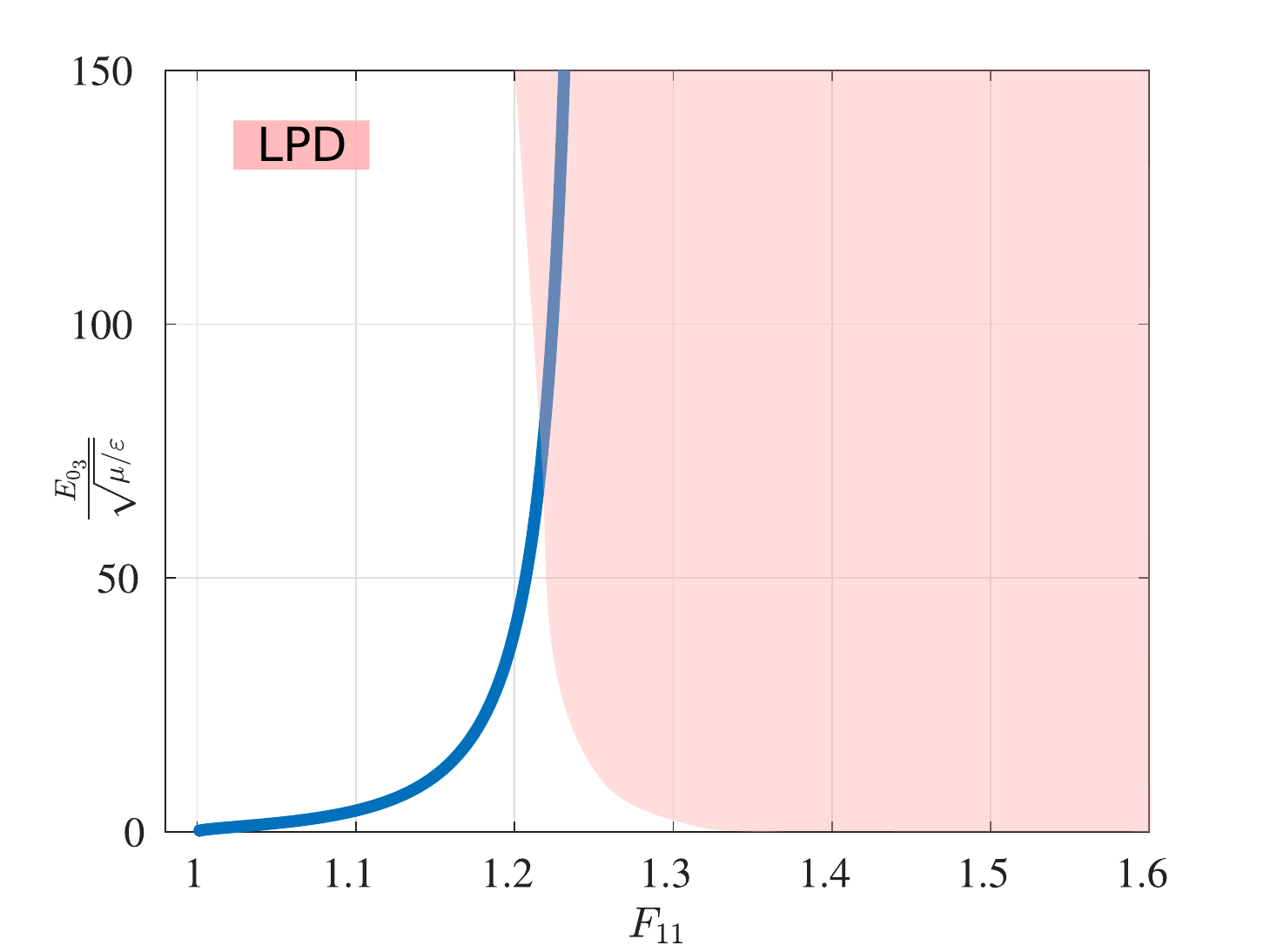}
         &  
\includegraphics[width=0.36\textwidth]{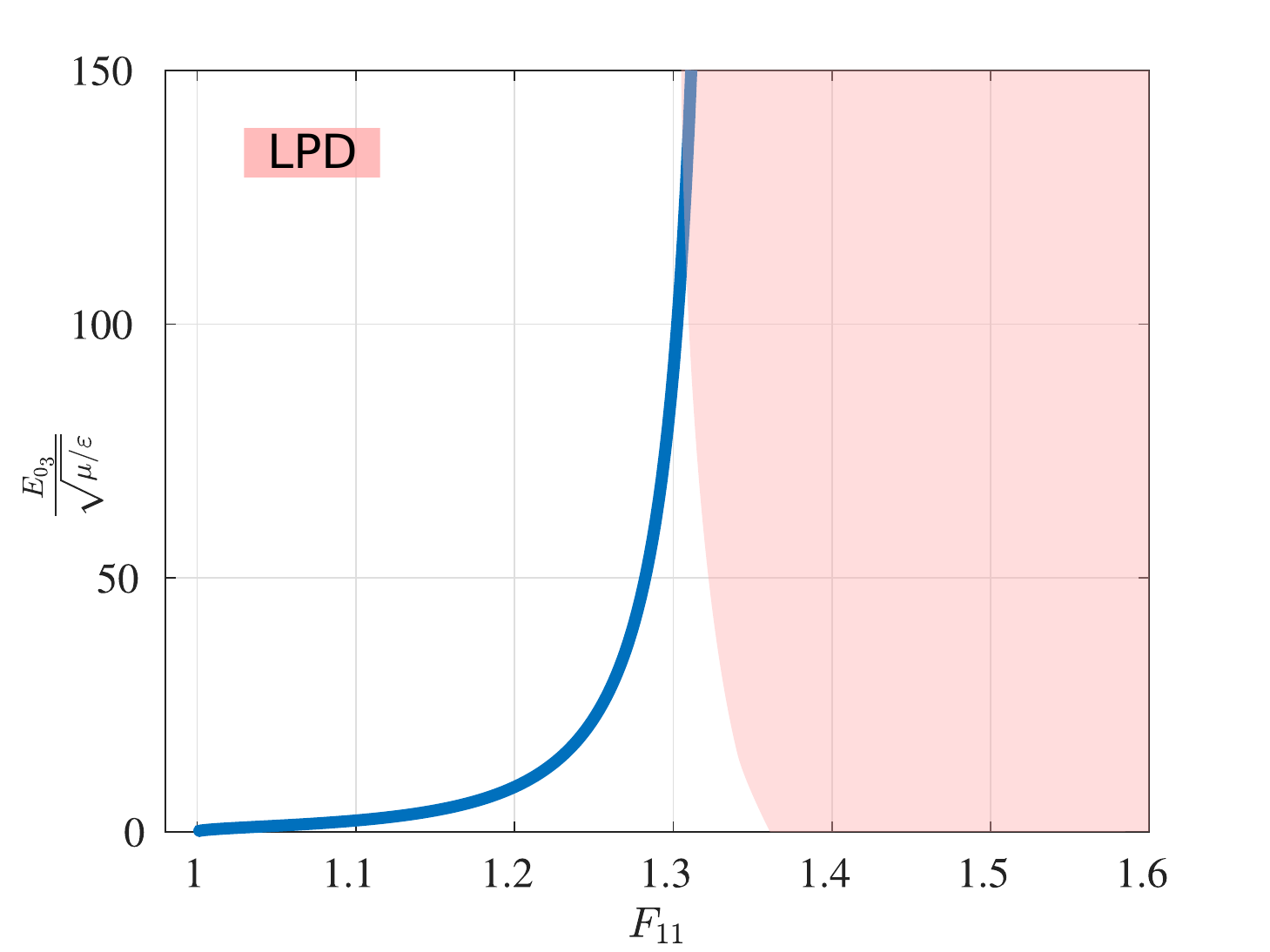}\\
(e)         &   (f)\\
\includegraphics[width=0.36\textwidth]{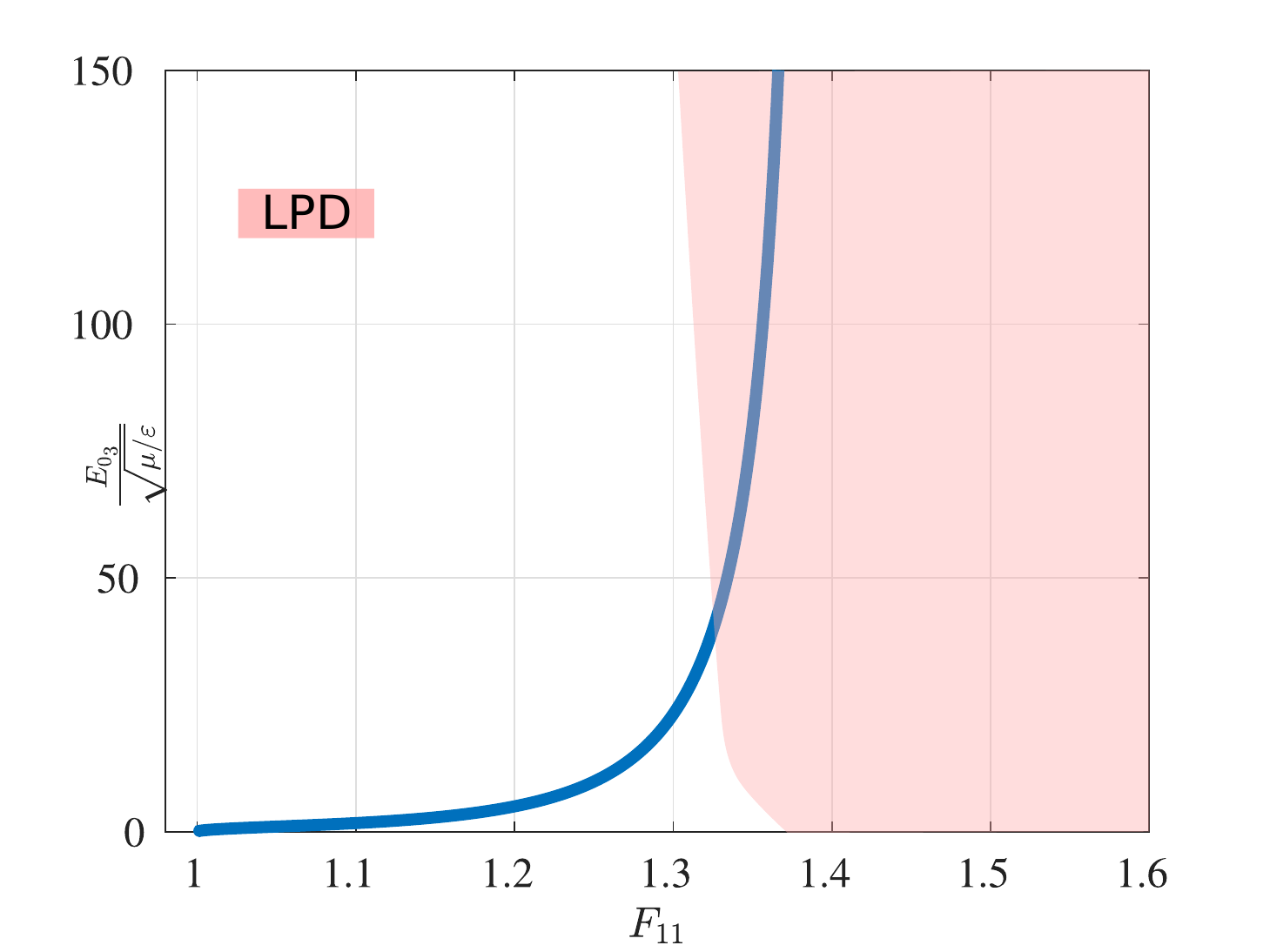}
         &  
\includegraphics[width=0.36\textwidth]{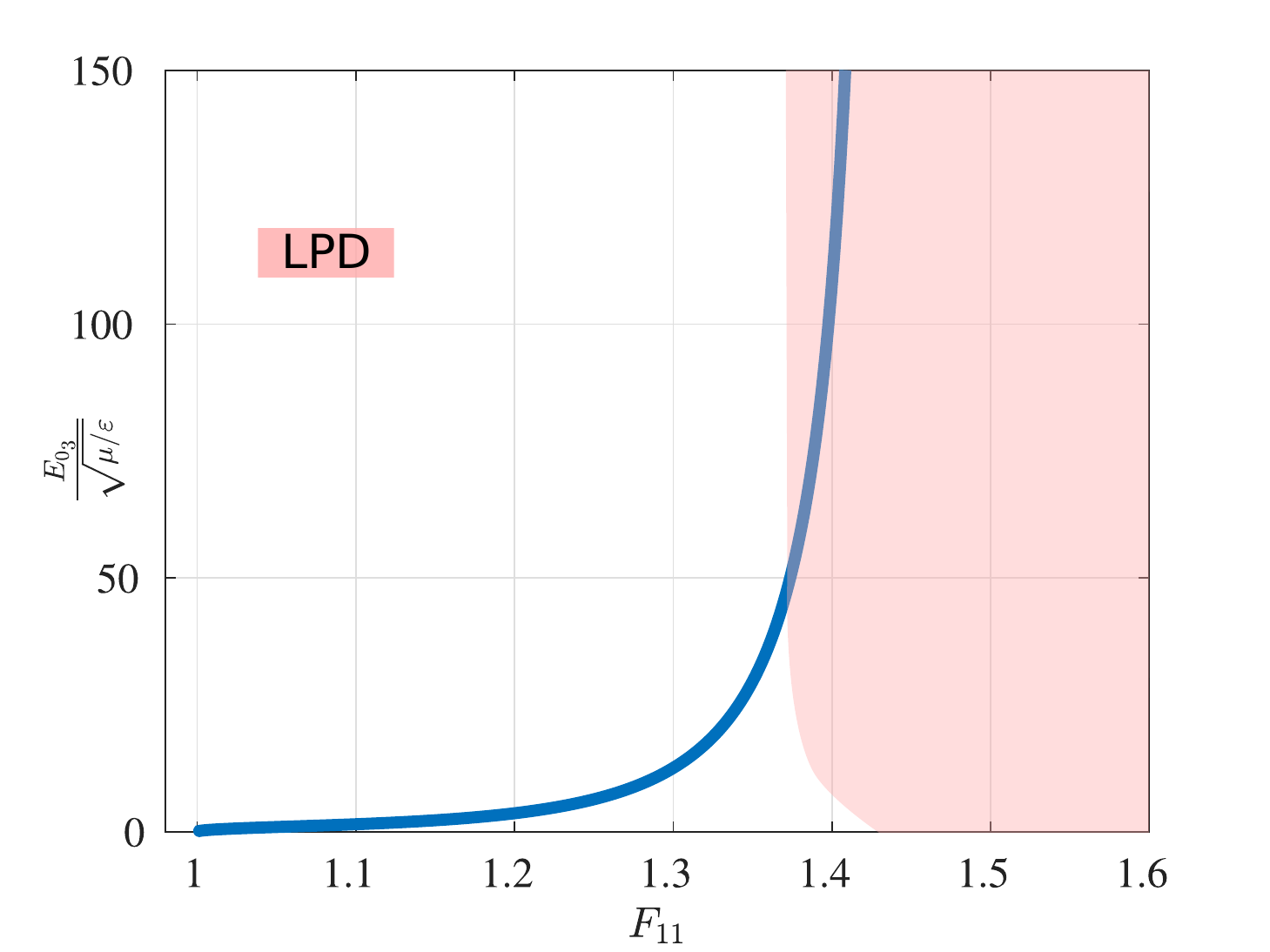}\\
(g)         &   (h)
    \end{tabular}
\caption{(a)-(b) Equilibrium paths for non-polyconvex model in equation \eqref{eqn:non-polyconvex isotropic energy} with material properties in Table \ref{Tab:MaterialParametersNonpolyconvex_Isotropic} and for the five polyconvex models (see equation \eqref{eqn:polyconvex isotropic energy}) with material parameters in Table \ref{Tab:MaterialParametersPolyconvex_Isotropic}. (c)-(h) represent the regions where loss of positive definiteness (LPD) of the Hessian operator and loss of ellipticity (LE) occurs for the non-polyconvex and polyconvex models. The reference values for the dimensionless electric field are $\mu=10^5$ and $\varepsilon=4.8\varepsilon_0$.}
    \label{fig:isotropic model}
\end{figure}

\subsubsection{Anisotropic case}

In this example, we consider the transversely anisotropic group $\mathcal{D}_{\infty h}$. Two energy density functions are considered. The first, denoted as $\bar{e}_{\mathcal{D}_{\infty h,1}}(\vect{F},\vect{D}_0,\vect{M})$, is not $\mathcal{A}$-polyconvex. This is additively decomposed into the isotropic density function $\bar{e}_{\text{ID}}(\vect{F},\vect{D}_0)$ ($\mathcal{A}$-polyconvex) and the non-polyconvex invariant $K_2^{\mathcal{D}_{\infty h,1}}(\vect{F},\vect{D}_0,\vect{M})$. This can be seen in equation \eqref{eqn:anisotropic non polyconvex}.

\begin{equation}\label{eqn:anisotropic non polyconvex}
\begin{aligned}
    \bar{e}_{\mathcal{D}_{\infty h,1}}(\vect{F},\vect{D}_0,\vect{M})&=\bar{e}_{\text{ID}}(\vect{F},\vect{D}_0) + \frac{1}{\varepsilon_2}K_2^{\mathcal{D}_{\infty h}}(\vect{F},\vect{D}_0,\vect{M}),
   \end{aligned}
\end{equation}
where
\begin{equation}
K_2^{\mathcal{D}_{\infty h}}(\vect{F},\vect{D}_0)=\left(\vect{d}\cdot \vect{FM}\right)^2
\end{equation}

The material parameters $\{\mu_1,\mu_2,\lambda,\varepsilon_1,\varepsilon_2\}$ featuring in the resulting non-polyconvex model $\bar{e}_{\mathcal{D}_{\infty h,1}}(\vect{F},\vect{D}_0,\vect{M})$ can be found in Table \ref{Tab:MaterialParametersNonpolyconvex_Anisotropic}.

\begin{table}[htb] \label{Tab:MaterialParametersNonpolyconvex_Anisotropic}
\centering
\caption{Material properties for non-polyconvex model in equation \eqref{eqn:anisotropic non polyconvex}}
     \label{tab:Example2}
     \begin{tabular}{cccccc}
     \toprule
     {$\mu_1$} & {$\mu_2$} & {$\lambda$} & {$\varepsilon_1$}& {$\varepsilon_2$}
     \\
     \midrule
     $1\times10^5$ & $1.0\mu_1$ & $10^3\mu_1$ & $4.82\varepsilon_0$& $24\varepsilon_0$ \\
     \bottomrule
\end{tabular}
 \end{table}

The second strain energy density function considered, denoted as $\bar{e}^{\text{pol}}_{\mathcal{D}_{\infty h,1}}(\vect{F},\vect{D}_0,\vect{M})$ can be seen in equation \eqref{eqn:anisotropic polyconvex model}. Notice that, in contrast to equation \eqref{eqn:anisotropic non polyconvex}, the non-polyconvex invariant $K_2^{\mathcal{D}_{\infty h,1}}(\vect{F},\vect{D}_0,\vect{M})$ has been replaced with its polyconvex counterpart $K_{2,\text{pol}}^{\mathcal{D}_{\infty h}}(\vect{F},\vect{D}_0,\vect{M})$ and an additional term (third term on the right hand side of \eqref{eqn:anisotropic polyconvex model}) which guarantees the stress-free condition in the origin, namely
\begin{equation}\label{eqn:anisotropic polyconvex model}
\begin{aligned}
    \bar{e}^{\text{pol}}_{\mathcal{D}_{\infty h,1}}(\vect{F},\vect{D}_0,\vect{M})&=\bar{e}_{\text{ID}}(\vect{F},\vect{H},J,\vect{d}) +  K_{2,\text{pol}}^{\mathcal{D}_{\infty h}}(\vect{F},\vect{D}_0,\vect{M})  + \beta^2\Big(I_5(\vect{F},\vect{M})-\log(I_3(\vect{F}))\Big);\\
K_{2,\text{pol}}^{\mathcal{D}_{\infty h}}(\alpha_1,\alpha_2,\alpha_3,\alpha_4,\vect{F},\vect{D}_0,\vect{M})&=\alpha^2 I^2_7(\vect{F},\vect{D}_0) + \alpha\beta K_2^{\mathcal{D}_{\infty h}}(\vect{F},\vect{D}_0) + \beta^2I^2_4(\vect{F},\vect{M})   \\
    \end{aligned}
\end{equation}

For all the models, the preferred direction $\vect{M}$ was set to $\vect{M}=\begin{bmatrix}
0  &  0&  1\end{bmatrix}^T$.

\begin{table}[htb] \label{Tab:MaterialParametersPolyconvex_Anisotropic}
\centering
\caption{Material parameters for polyconvex model in equation \eqref{eqn:anisotropic polyconvex model}}
     \label{tab:Example2}
     \begin{tabular}{ccccccccccccc}
     \toprule
   Parameters  & Pol. Mat. 1  & Pol. Mat. 2 & Pol. Mat. 3 & Pol. Mat. 4 & Pol. Mat. 5
     \\
     \midrule
{$\mu_1$}       &$3.67\times 10^5$  &$6.30\times 10^5$    &$5.12\times 10^5$ &$0.44$               &$8.33\times 10^5$\\
{$\mu_2$}       &$21.86$            &$9.56\times 10^3$    &$6.25\times 10^3$ &$2.87\times 10^{-2}$ &$0.25$\\
{$\lambda$}     &$10^8$              &$10^8$              &$10^8$            &$10^8$               &$10^8$\\
{$\varepsilon_1$} &$7.48\varepsilon_0$ &$9.6\varepsilon_0$  &$12.36$           &$4.9\varepsilon_0$   &$9.6\varepsilon_0$\\
{$\alpha$}      &$3.97\times 10^6$   &$4.20\times 10^6$   &$6.50\times 10^6$ &$1.27\times 10^7$    &$2.47\times 10^6$\\ 
{$\beta$}      &$110.13$            &$0.232$             &$37.40$           &$193.12$              &$1.42\times 10^{-2}$\\ 
     \bottomrule
\end{tabular}
 \end{table}

For the $\mathcal{A}$-polyconvex model in \ref{eqn:anisotropic polyconvex model}, five combinations of material parameters can be found in Table \ref{Tab:MaterialParametersPolyconvex_Anisotropic}. Unlike in the isotropic case, (where $\{\mu_1,\mu_2,\lambda,\varepsilon_1\}$) kept the same values as the non-polyconvex model, in this case, all the six material parameters $\{\mu_1,\mu_2,\lambda,\varepsilon_1,\alpha,\beta\}$ have been varied in order to observed their influence in the equilibrium path of the resulting polyconvex model. This can be observed in Figure \ref{fig:anisotropic model}$_{a,b}$. From this figure, the combination of values for $\{\mu_1,\mu_2,\lambda,\varepsilon_1,\alpha,\beta\}$ depends on the interval of $\vect{F}$ where the optimisation problem described in that yields a closer response to the non-polyconvex model is that corresponding with the polyconvex material 2 (Pol. Mat. 2), whose values for $\{\alpha,\beta\}$ can be found in the third column of Table \ref{Tab:MaterialParametersPolyconvex_Isotropic}. These values have been determined by performing an optimisation problem. Specifically, for each pair of values $\left(\vect{F},\vect{D}_0\right)$ in the equilibrium path of the non-polyconvex model, we have formulated the following minimisation problem \eqref{eqn:optimisation isotropic} was carried out. This entails that the minimisation problem in this case can be reformulated as
\begin{equation}
\min_{\mu_1,\mu_2,\lambda,\varepsilon_1,\alpha,\beta} \,\,  \left\{ \begin{aligned}
&\mathcal{J}\\
&\text{s.t.} \,\,\,\{\mu_1,\mu_2,\lambda,\epsilon_1,\alpha,\beta\}>0
\end{aligned}\right.
\end{equation}
where the objective function $\mathcal{J}$ is defined as
\begin{equation}
   \mathcal{J}= \sqrt{\sum_{i=n_1}^{n_2} \vert\vert\partial_{\vect{F}}\bar{e}^{\text{pol}}_{\mathcal{D}_{\infty h,1}}(\vect{F}_i,\vect{D}_{0i})\vert\vert^2}+ \sqrt{\sum_{i=n_1}^{n_2}\frac{\vert\vert\partial_{\vect{D}_0}\bar{e}^{\text{pol}}_{\mathcal{D}_{\infty h,1}}(\vect{F}_i,\vect{D}_{0i})-\partial_{\vect{D}_0}\bar{e}_{\mathcal{D}_{\infty h,1}}(\vect{F}_i,\vect{D}_{0i})\vert\vert^2 }{\vert\vert\partial_{\vect{D}_0}\bar{e}^{\text{pol}}_{\mathcal{D}_{\infty h,1}}(\vect{F}_i,\vect{D}_{0i})\vert\vert^2 }},   
\end{equation}
where $n_1$ and $n_2$ refer to the initial and final elements within the set $\vect{F}_i,\vect{D}_{0i}$, describing the discrete equilibrium path of the non-polyconvex model, which are considered for the optimisation problem. It can be seen that case corresponding with the polyconvex material 4 (Pol. Mat. 4 in Figure \ref{fig:anisotropic model}$_{a,b}$, with material parameters in the fifth column of Table \ref{Tab:MaterialParametersPolyconvex_Anisotropic}) fifts the non-polyconvex model extremely well for values of $F_{11}$ close to $1$. However, this agreeement disappears beyond this point and both models are extremely disimilar past this point. On the other hand, the case corresponding with the polyconvex material 1 (Pol. Mat. 1 in Figure \ref{fig:anisotropic model}$_{a,b}$, with material parameters in the second column of Table \ref{Tab:MaterialParametersPolyconvex_Anisotropic}), only fits very well the equilibrium path of the non-polyconvex model for high values of $F_{11}$.

From Figure \ref{fig:anisotropic model}, it is possible to observe the regions where the Hessian operator loses positive definiteness and hence, where the loss of convexity occurs. Interestingly, the non-polyconvex model loses in addition ellipticity (see the yellow region in Figure \ref{fig:anisotropic model}$_c$). This has been checked by monitoring the least of the minors of the acoustic tensor $\vect{Q}$ in equation \eqref{eqn: Exp1 acoustic tensor}. Evidently, this is not appreciated in any of the five $\mathcal{A}$-polyconvex models in Figures \ref{fig:anisotropic model}$_d$-\ref{fig:anisotropic model}$_h$.

\begin{figure}[htbp!]
    \centering
    \begin{tabular}{cc}
\includegraphics[width=0.36\textwidth]{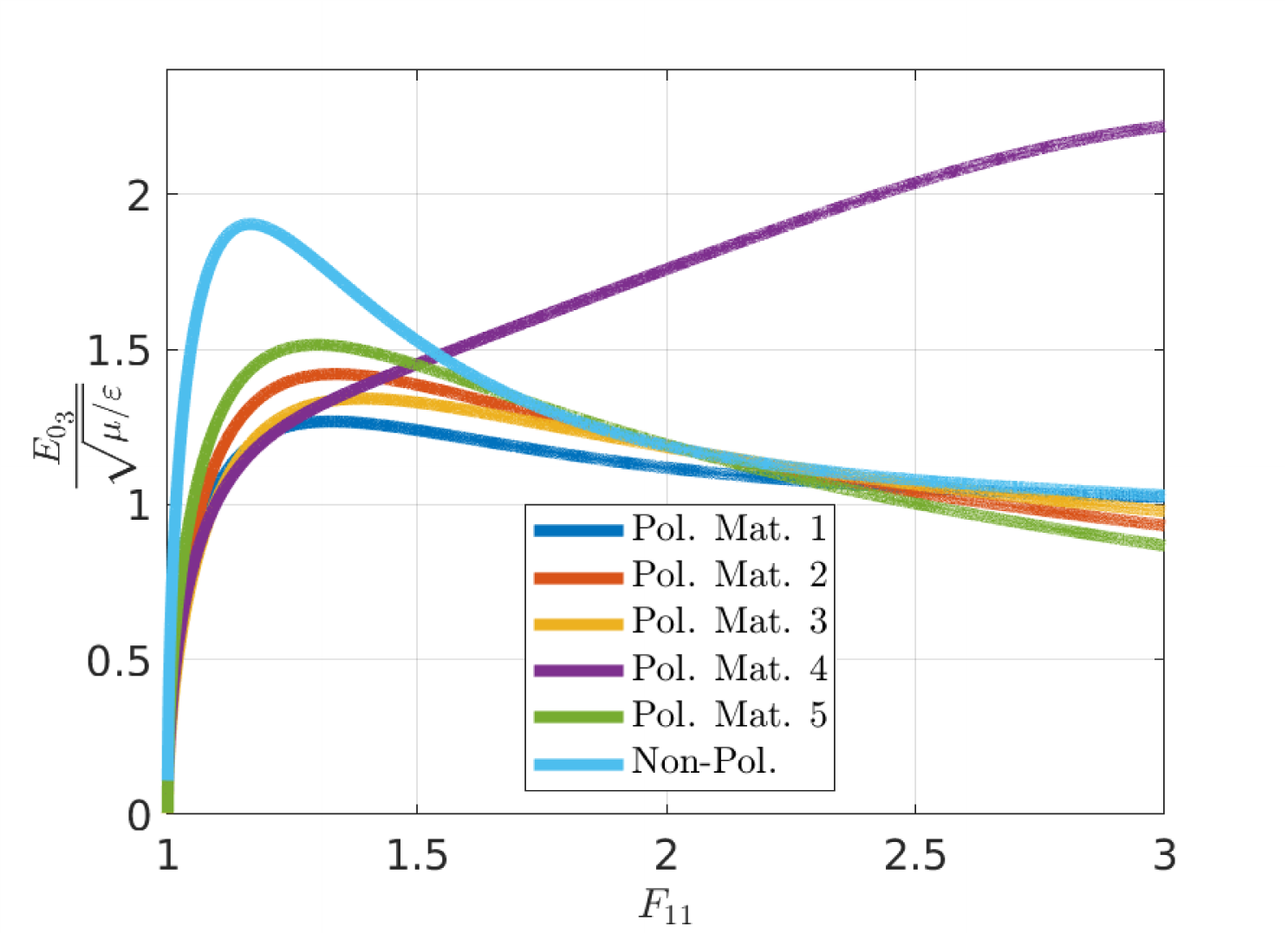}
         &  
\includegraphics[width=0.36\textwidth]{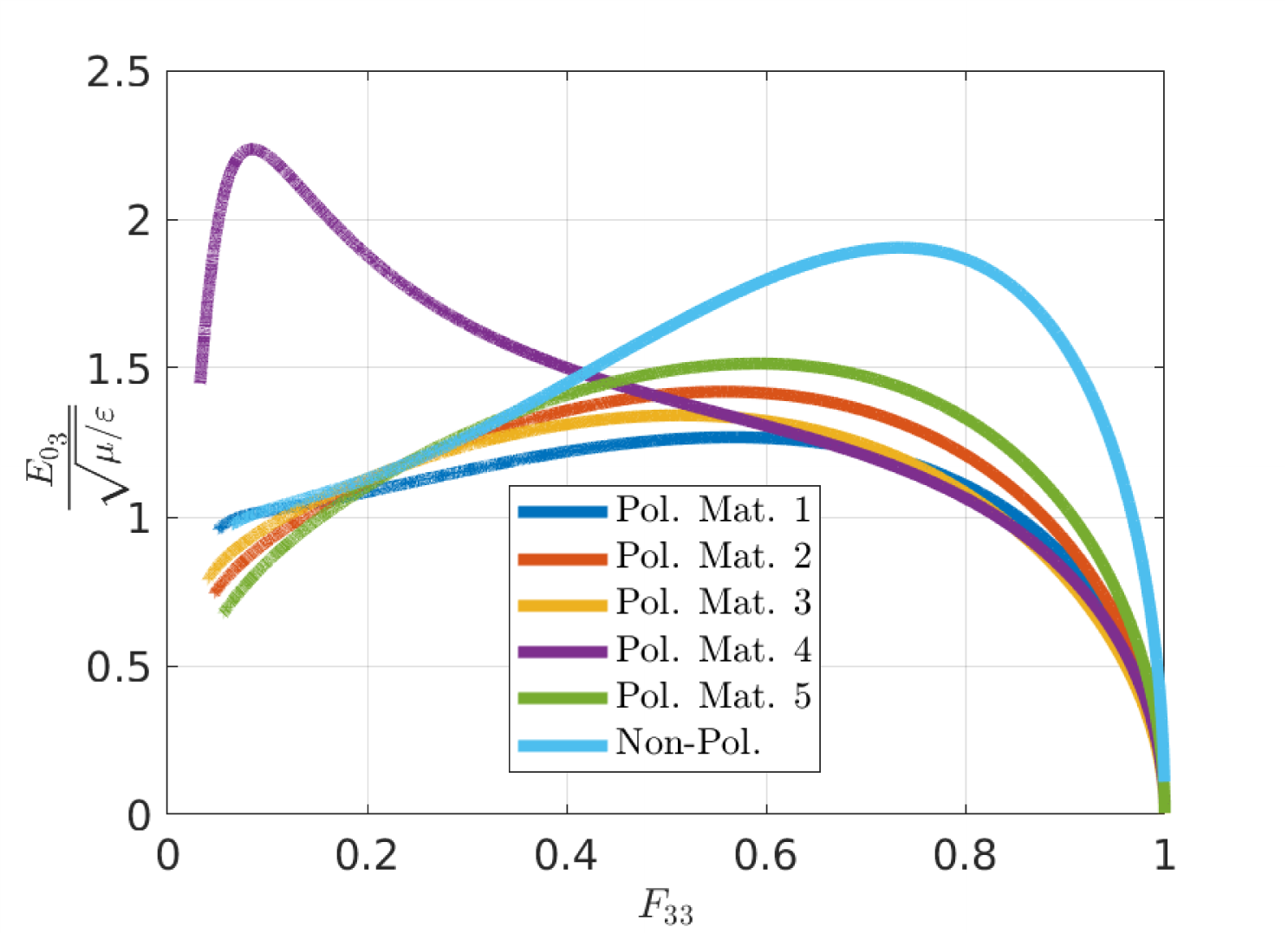}
         \\
(a)         &   (b)\\
\includegraphics[width=0.36\textwidth]{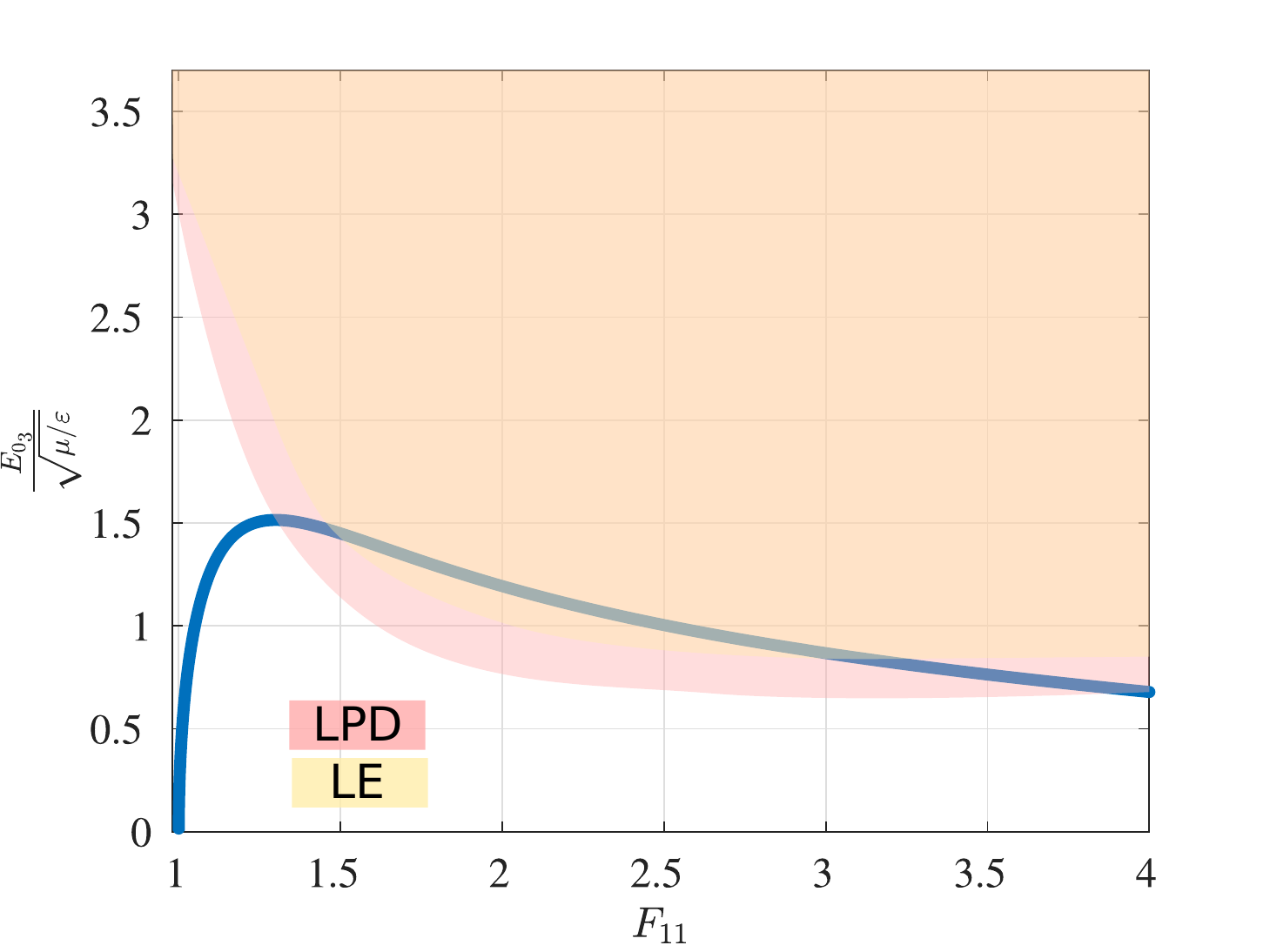}
         &  
\includegraphics[width=0.36\textwidth]{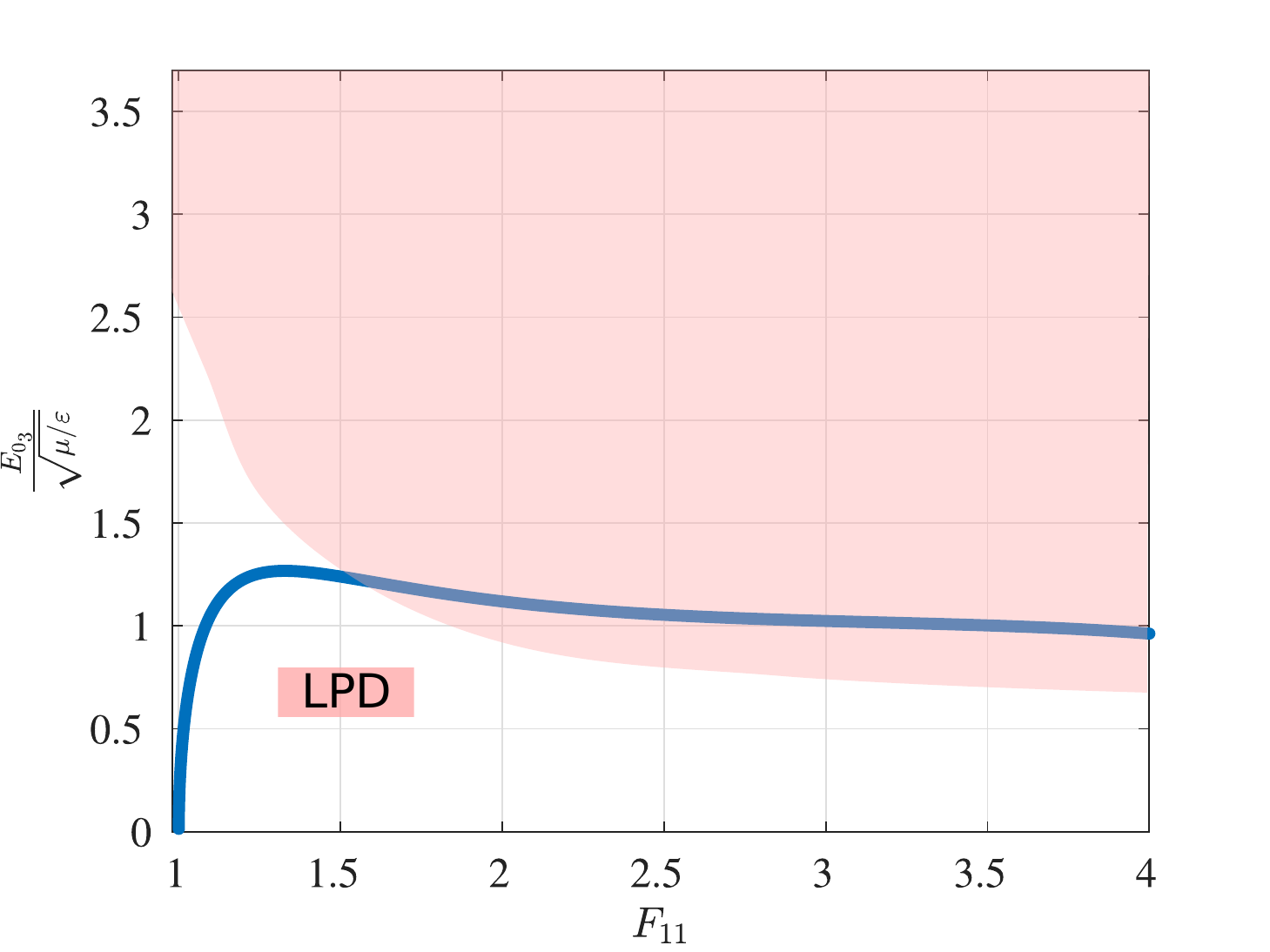}\\
(c)         &   (d)\\
\includegraphics[width=0.36\textwidth]{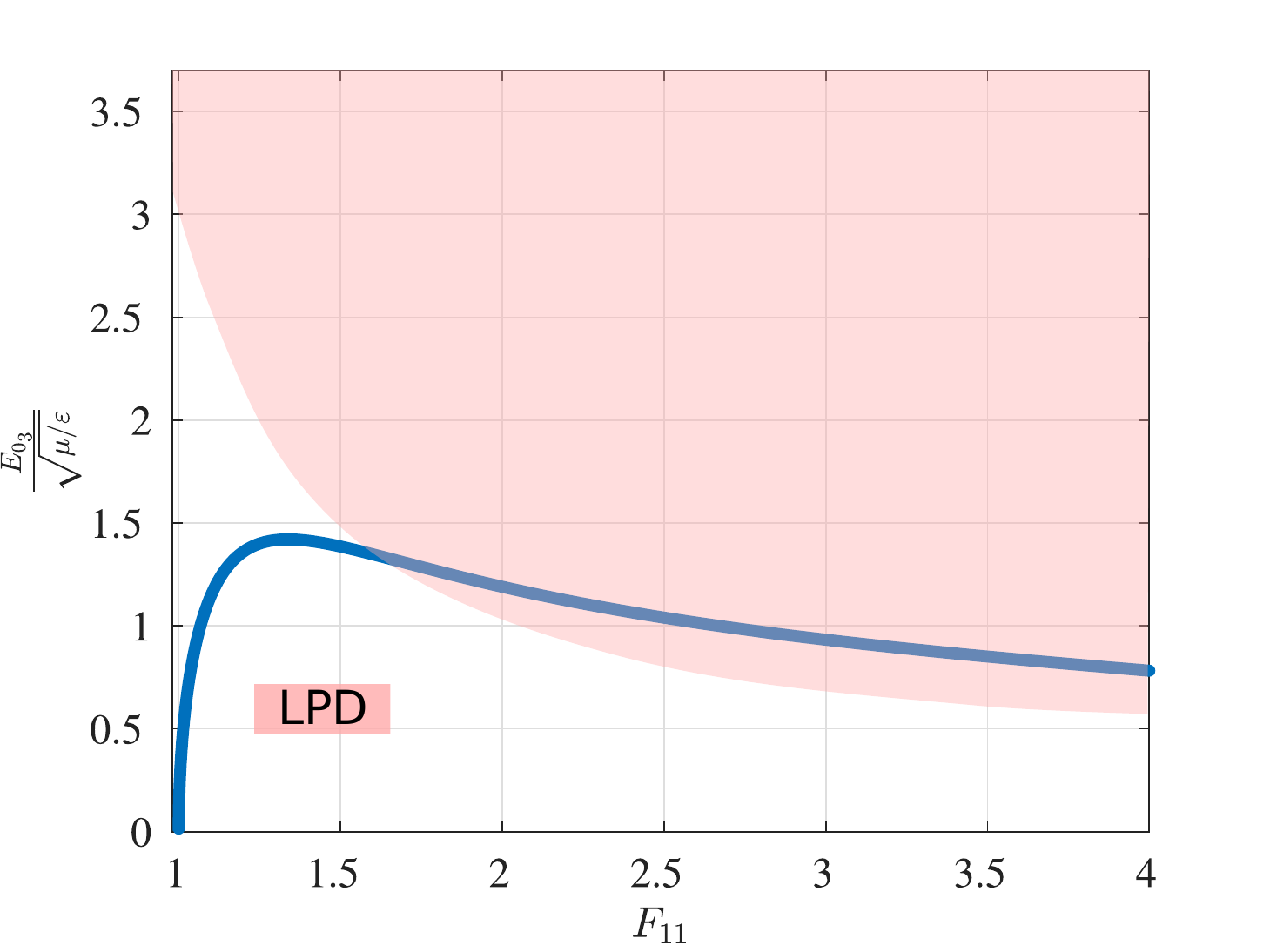}
         &  
\includegraphics[width=0.36\textwidth]{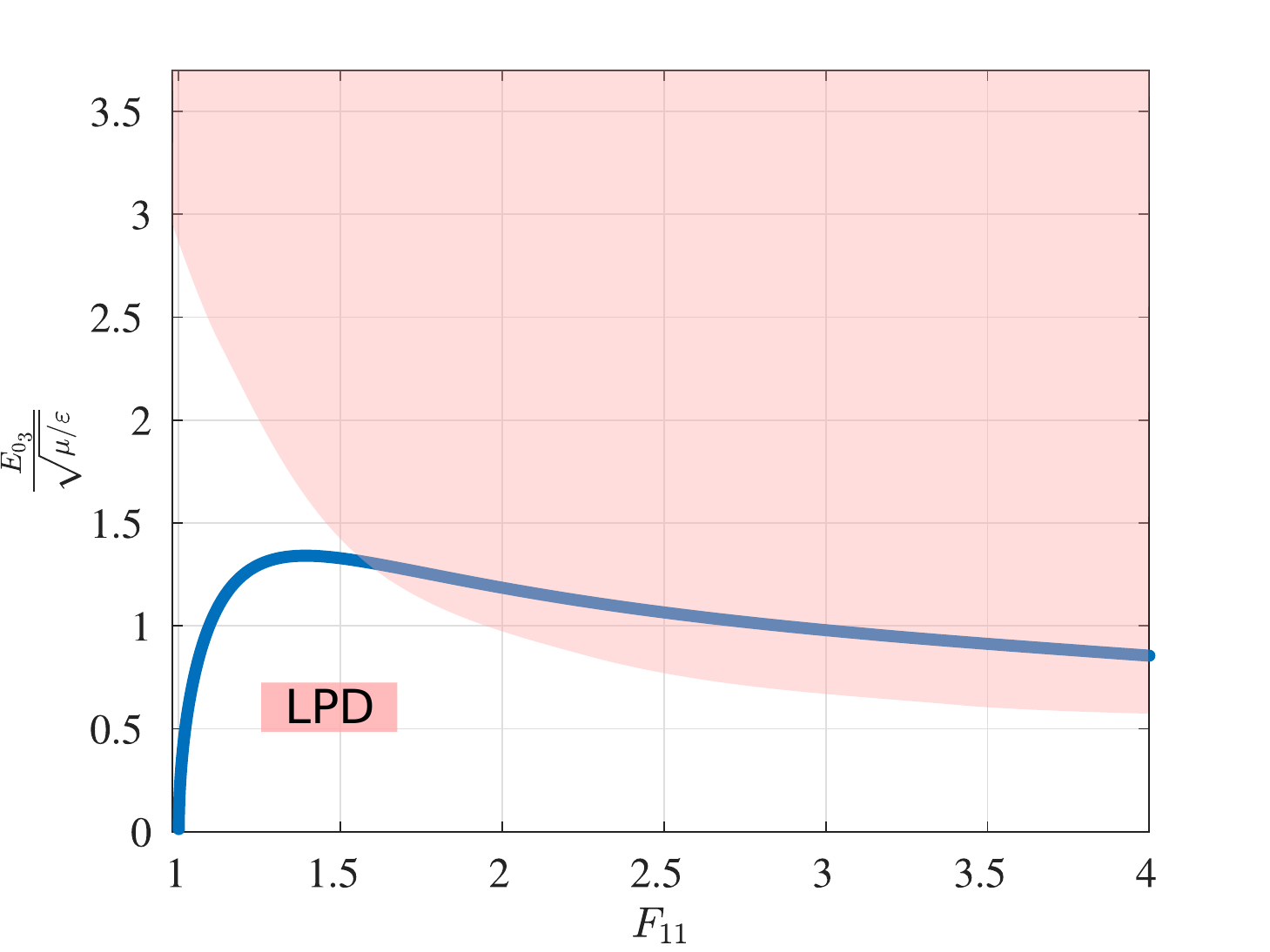}\\
(e)         &   (f)\\
\includegraphics[width=0.36\textwidth]{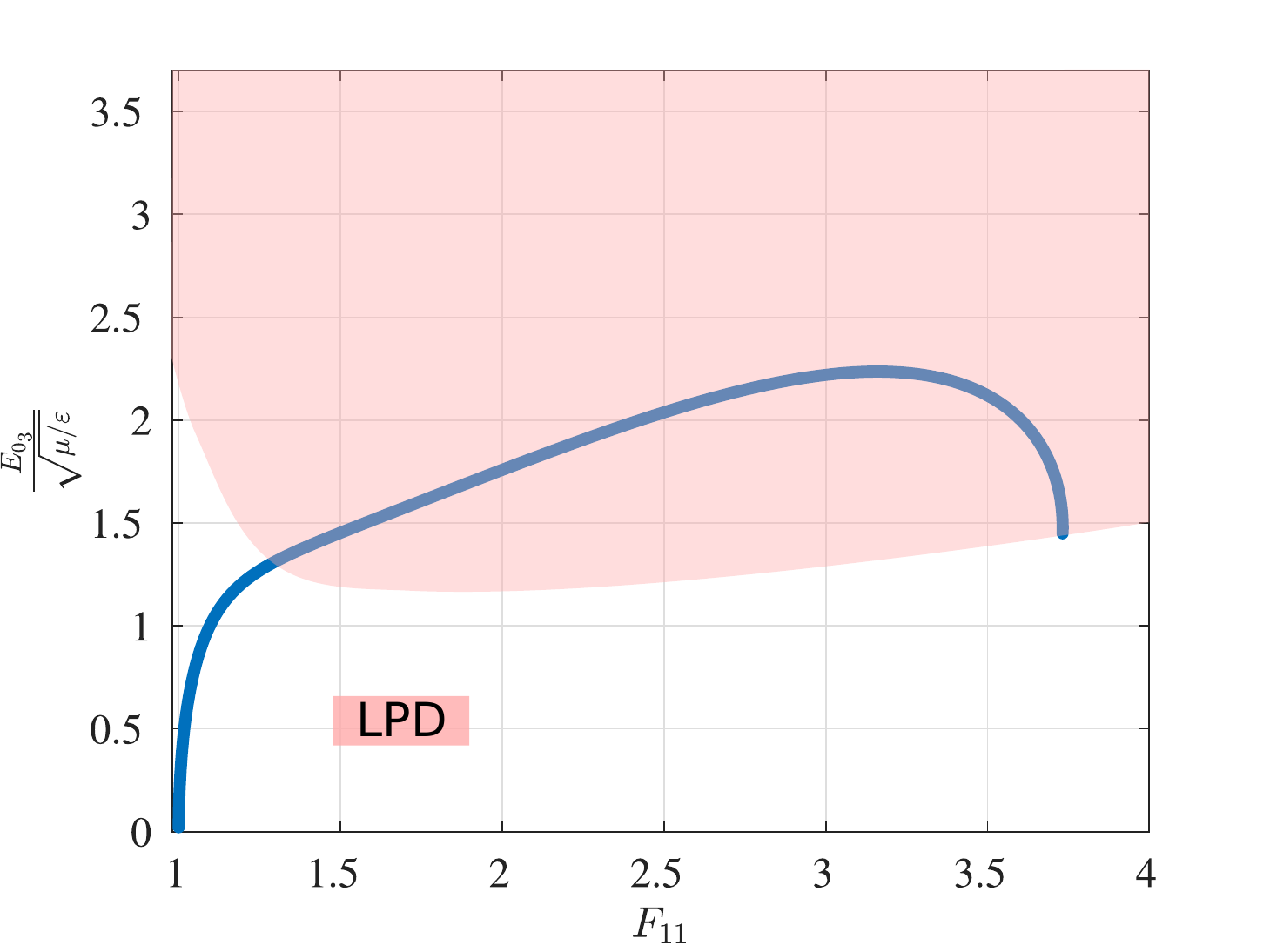}
         &  
\includegraphics[width=0.36\textwidth]{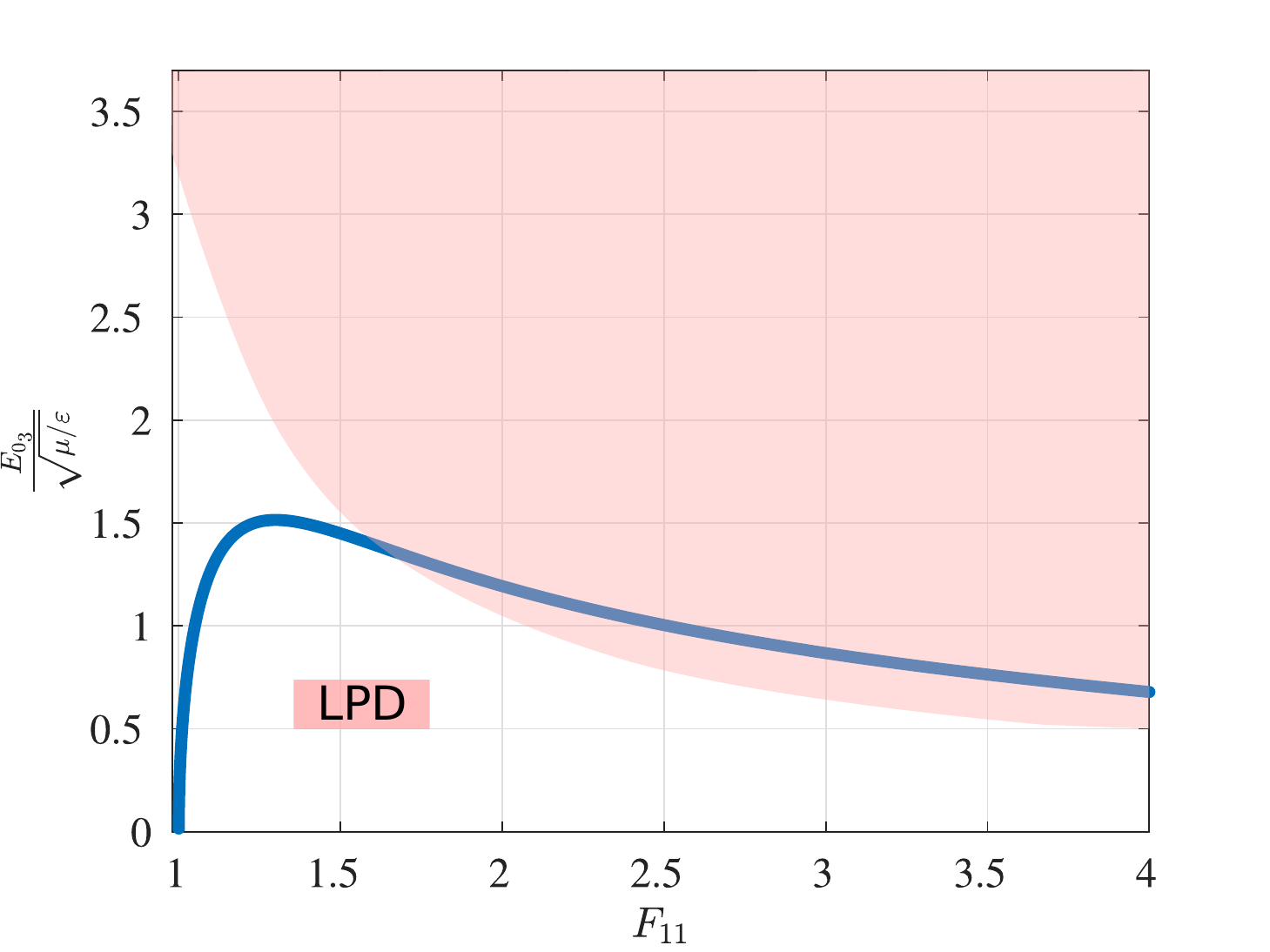}\\
(g)         &   (h)

    \end{tabular}
\caption{(a)-(b) Equilibrium paths for non-polyconvex model in equation \eqref{eqn:anisotropic non polyconvex} with material properties in Table \ref{Tab:MaterialParametersNonpolyconvex_Anisotropic} and for the five polyconvex models (see equation \eqref{eqn:anisotropic polyconvex model}) with material parameters in Table \ref{Tab:MaterialParametersPolyconvex_Anisotropic}. (c)-(h) represent the regions where loss of positive definiteness (LPD) of the Hessian operator and loss of ellipticity (LE) occurs for the non-polyconvex and polyconvex models. The reference values for the dimensionless electric field are $\mu=10^5$ and $\varepsilon=4.8\varepsilon_0$.}
    \label{fig:anisotropic model}
\end{figure}
\subsection{Numerical example 2}
The objective of this example are to:
\begin{itemize}
\item\label{Exp2_obj1} Demonstrate the influence of the fibre orientation on the mode of deformation of the EAP in the case of bending of an EAP in a realistic  three-dimensional setting.
\item\label{obj Exp1: arc-length} Study the loss of the generalized rank-one convexity.
\end{itemize}
Note that a similar example exploring the effect of fibre orientation on bending and twisting actuation of a fibre-reinforced magneto-elastomer has been studied in \cite{stanier2016fabrication}.
\begin{figure}[!h]	
    \centering
	\begin{tabular}{c}
    \includegraphics[width=0.60\textwidth,]{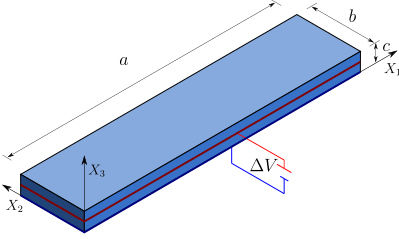}	
\end{tabular}
	\caption{Problem setting}
	\label{fig:elMechCantilever_problemSetting}
\end{figure}
 The ratio between length $a$, width $b$, and height $c$ of the EAP are taken as $c = a/50$, $b = a/6$. The sample, depicted in Figure \ref{fig:elMechCantilever_problemSetting}, is clamped at one end and is loaded by the application of electric potential in  the  electrodes  in the bottom and in the middle of the height. $\Delta \varphi = \Lambda h\sqrt{\mu_1/\varepsilon_1}$ where $\Lambda$ is a loading parameter, and $\mu_1$ and $\varepsilon$ are material parameters, see the model below and Table \ref{Tab:MaterialParametersModel1}. The calculations are performed using  the  second, i.e., mixed  Hu–Washizu variational  principle, see Appendix \ref{appendix:FEM} for more details. In addition, two different material models are considered. Symmetry $\mathcal{D}_{\infty h}$, applicable to electroactive polymers, is assumed in both cases.
\subsubsection{Model 1:  $\mathcal{A}$-polyconvex constitutive law}
 The energy is constructed to be  $\mathcal{A}$-polyconvex meaning that it is composed of $\mathcal{A}$-polyconvex invariants only. The proposed model is defined through the energy written as 
\begin{subequations}\label{eqn:the model in example 2}
\begin{align}
\bar{e}_{\mathcal{D}_{\infty h},1}(\vect{F},\vect{D}_0,\vect{M}) &= \bar{e}_1(\vect{F}) + \bar{e}_2(\vect{F},\vect{M}) + \bar{e}_3(\vect{F},\vect{D}_0)+ \bar{e}_4(\vect{F},\vect{D}_0,\vect{M})
\\
\bar{e}_1(\vect{F}) &= \frac{\mu_1}{2}I_1^{\text{dev}}(\vect{F}) + \frac{\mu_2}{2}I_{2,\text{pol}}^{\text{dev}}(\vect{F}) + \frac{\lambda}{2}(I_3(\vect{F})-1)^2
\\
\bar{e}_2(\vect{F},\vect{M}) &= \frac{\mu_3}{2\alpha}I_4^\alpha(\vect{F},\vect{M}) + \frac{\mu_3}{2\beta}I_5^\beta(\vect{F},\vect{M}) - \mu_3 {\rm ln}I_3(\vect{F})
\\
\bar{e}_3(\vect{F}, \vect{D}_0) &= \frac{1}{2 \varepsilon_1} \frac{I_7(\vect{F},\vect{D}_0)}{I_3(\vect{F})} + \frac{1}{2\varepsilon_2} I_6(\vect{D}_0)
\\
\bar{e}_4(\vect{D}_0) &= \frac{1}{2\varepsilon_3}K_{1}^{\mathcal{D}_{\infty h}}(\vect{D}_0)
\end{align}
\end{subequations}

\begin{figure}[!h]	
\centering
\begin{tabular}{ccc}
\includegraphics[width=0.20\textwidth]{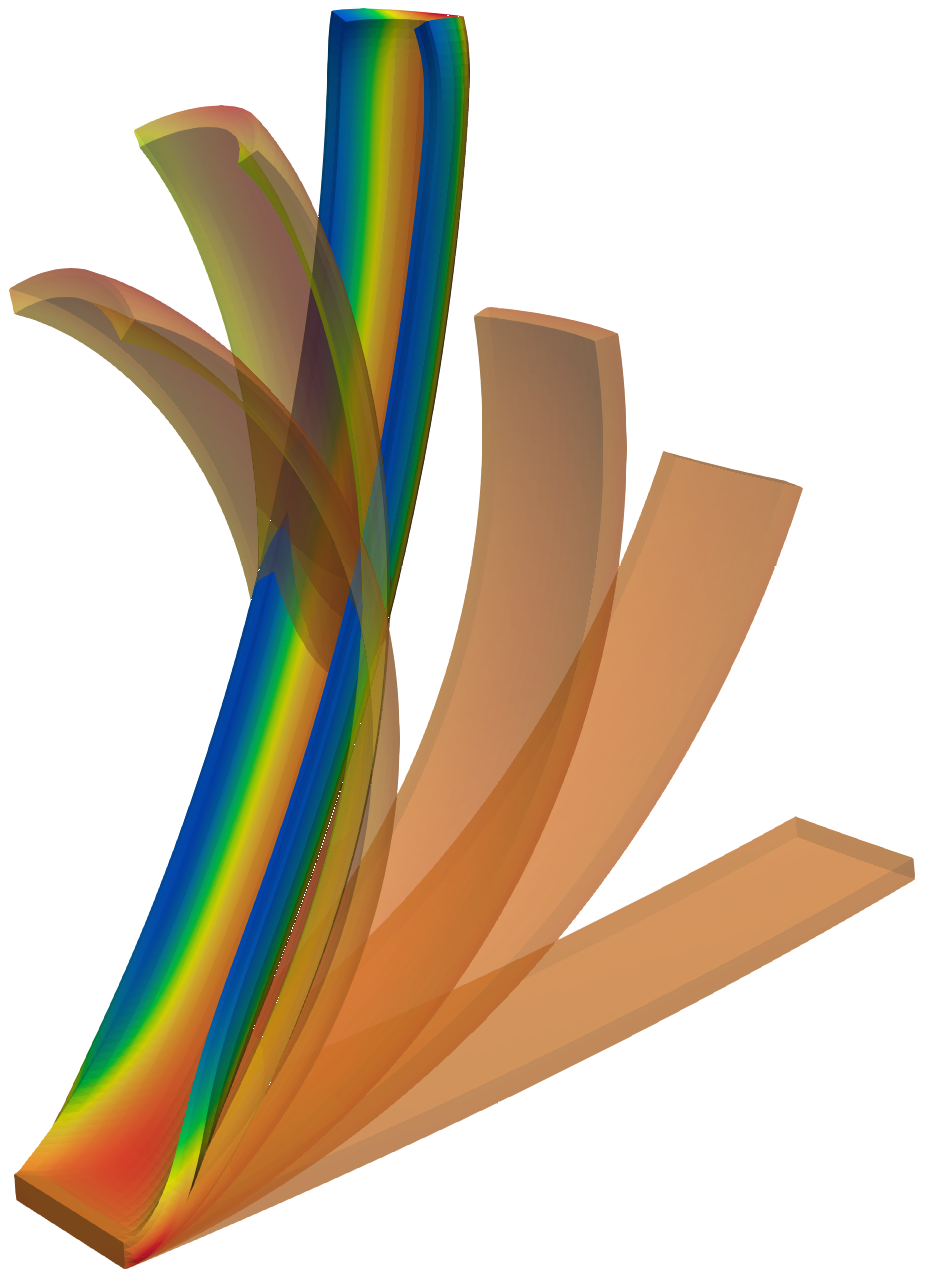}
&
\includegraphics[width=0.20\textwidth]{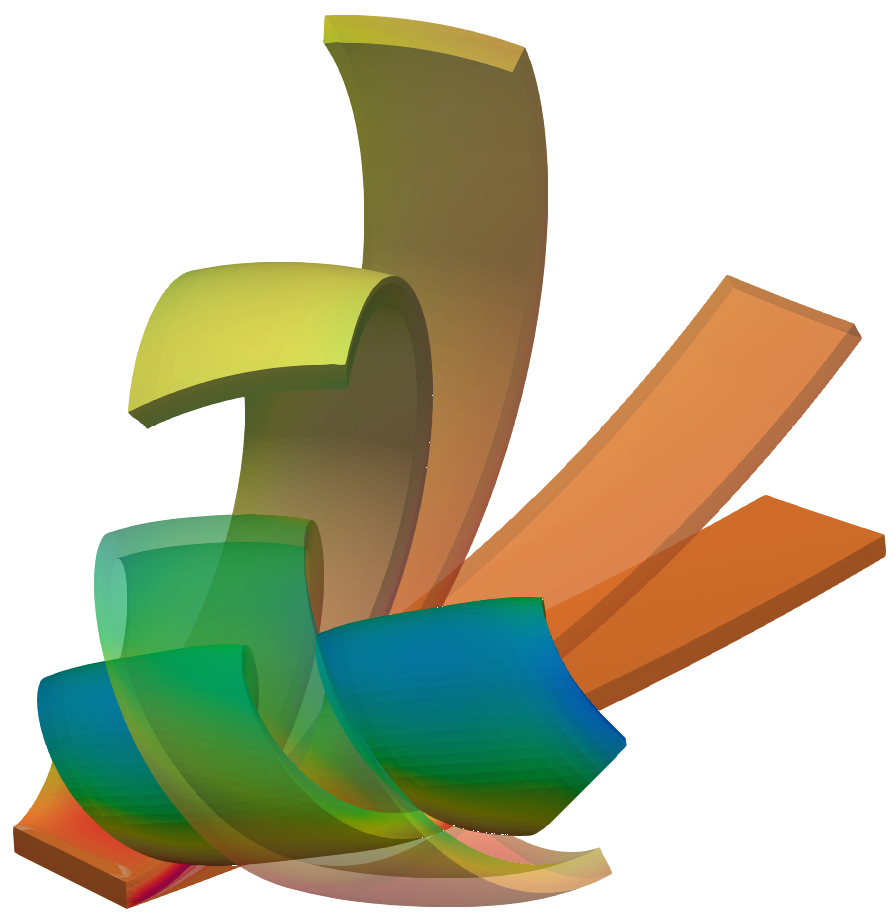}
&
\includegraphics[width=0.20\textwidth]{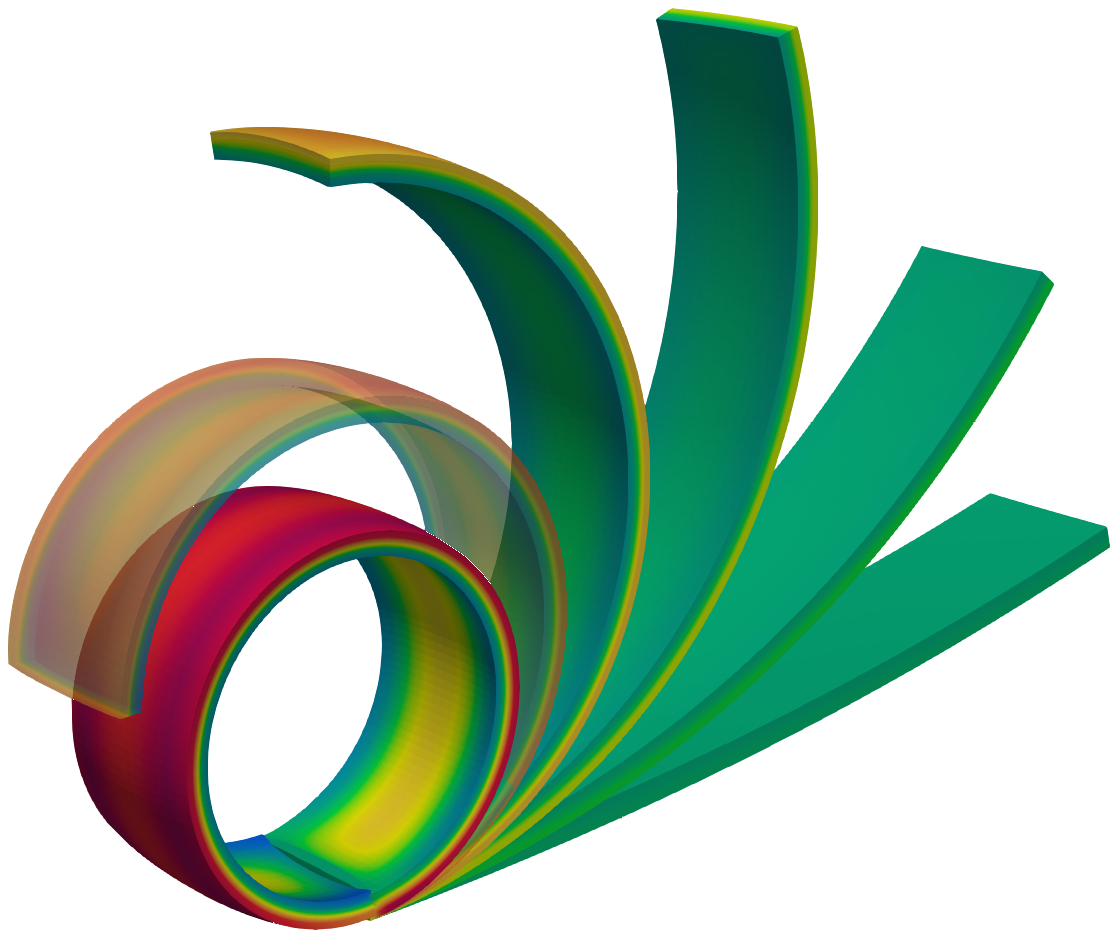}
\\
\includegraphics[width=0.20\textwidth]{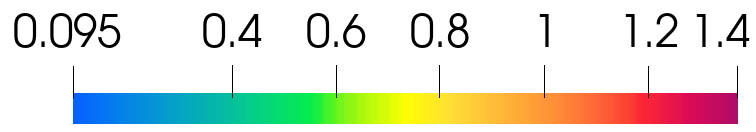}
&
\includegraphics[width=0.20\textwidth]{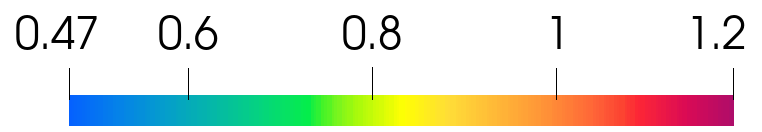}
&
\includegraphics[width=0.20\textwidth]{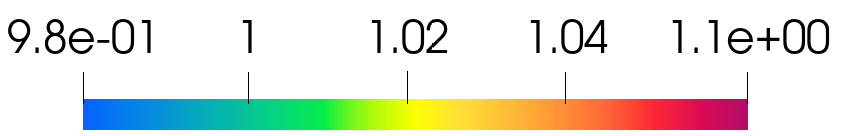}
\\
\\
\\
\includegraphics[width=0.20\textwidth]{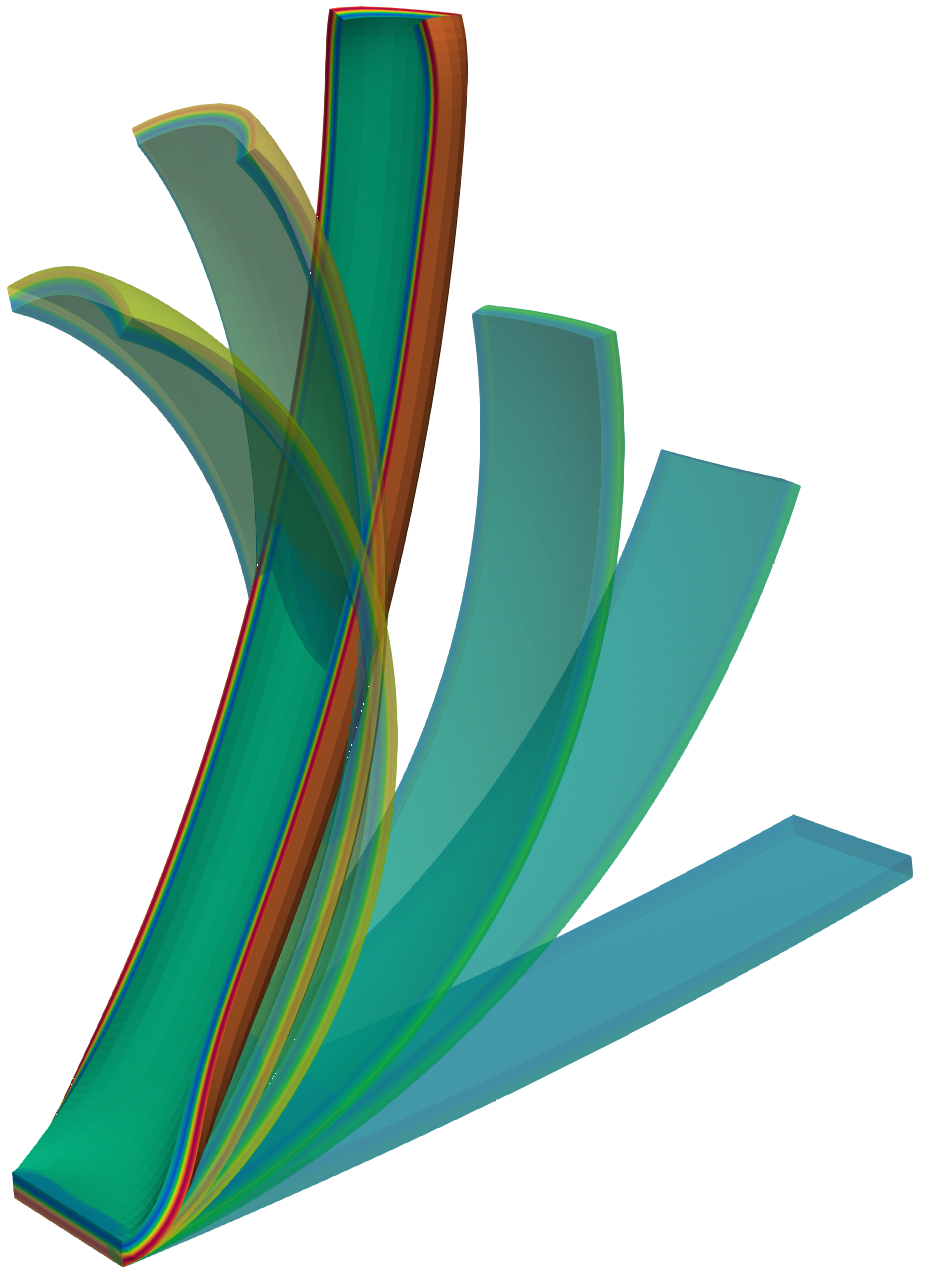}
&
\includegraphics[width=0.20\textwidth]{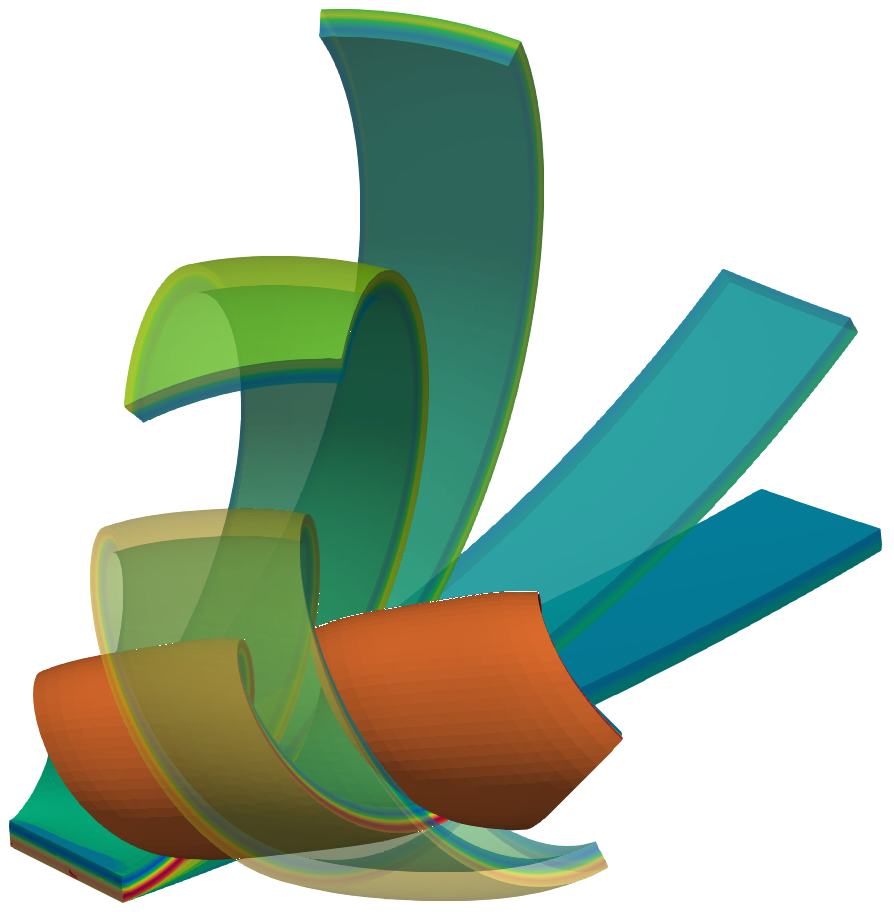}
&
\includegraphics[width=0.20\textwidth]{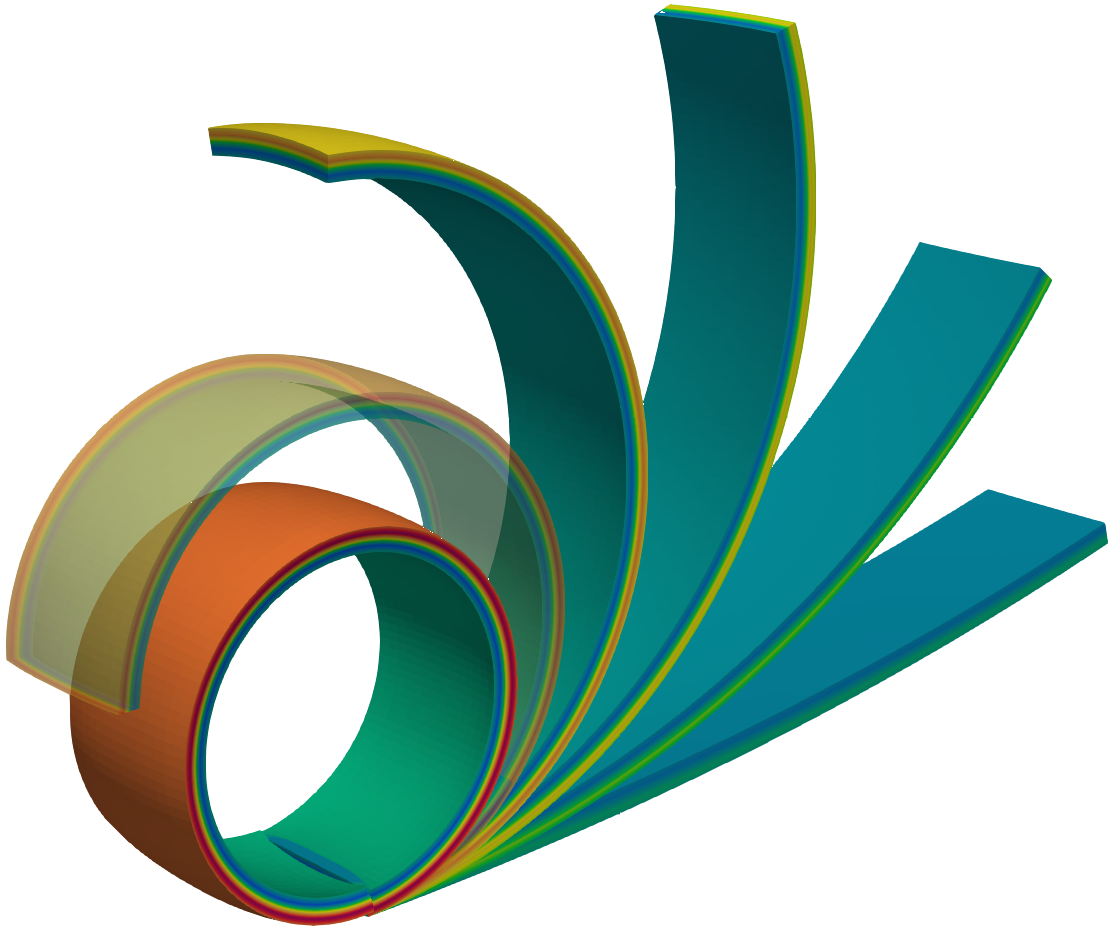}
\\
\includegraphics[width=0.20\textwidth]{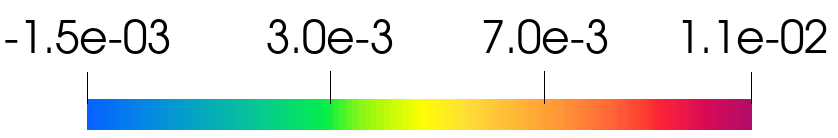}
&
\includegraphics[width=0.20\textwidth]{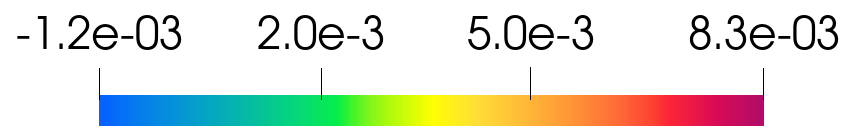}
&
\includegraphics[width=0.20\textwidth]{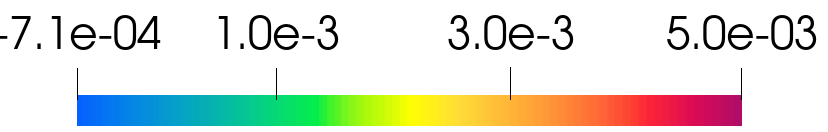}
\\
\\
\\
\includegraphics[width=0.20\textwidth]{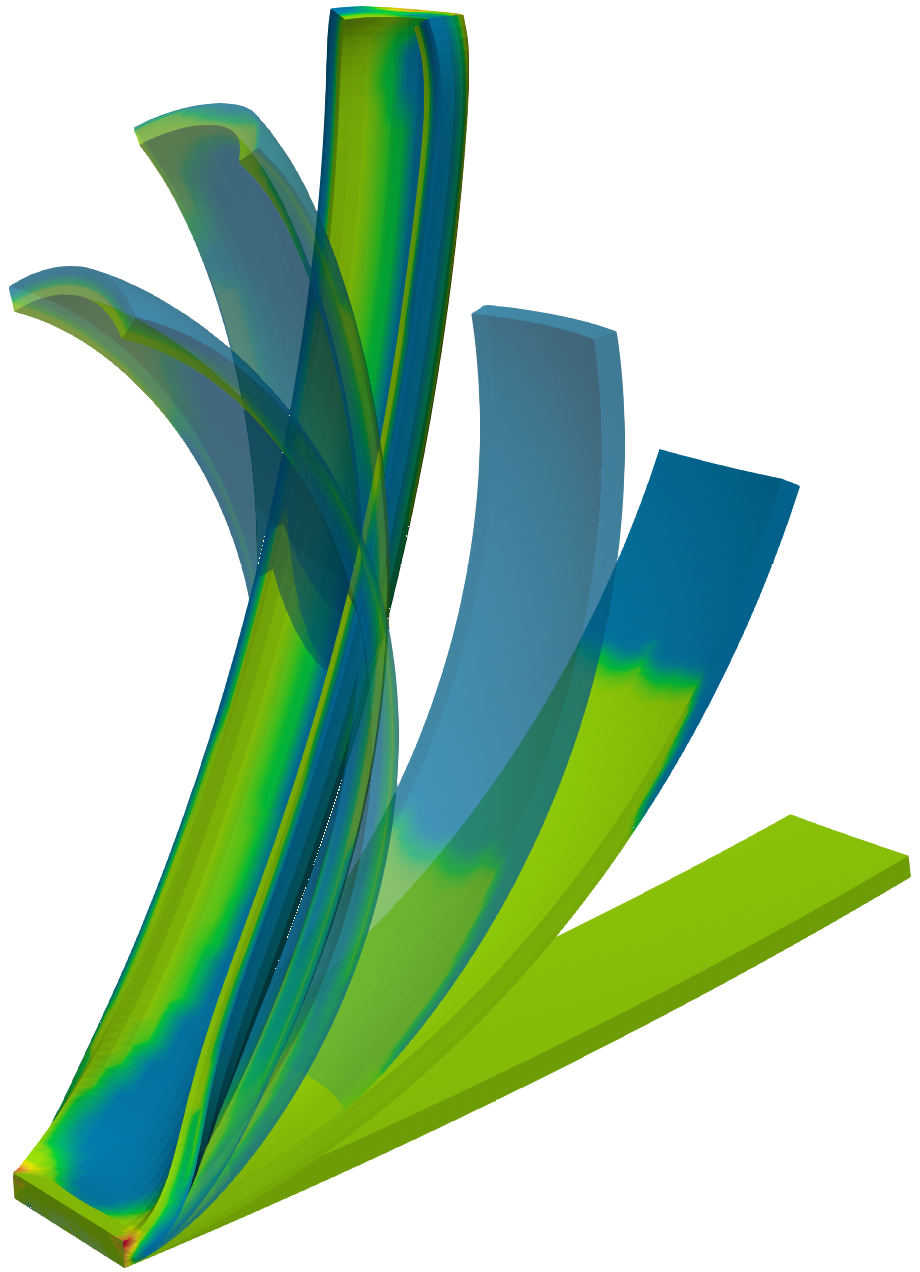}
&
\includegraphics[width=0.20\textwidth]{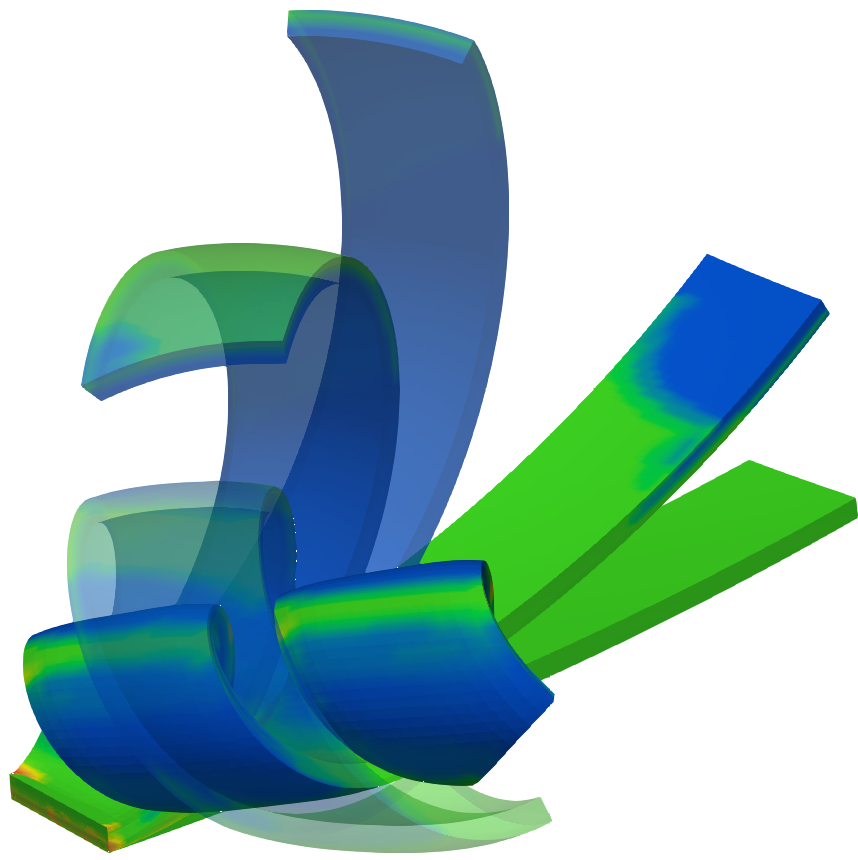}
&
\includegraphics[width=0.20\textwidth]{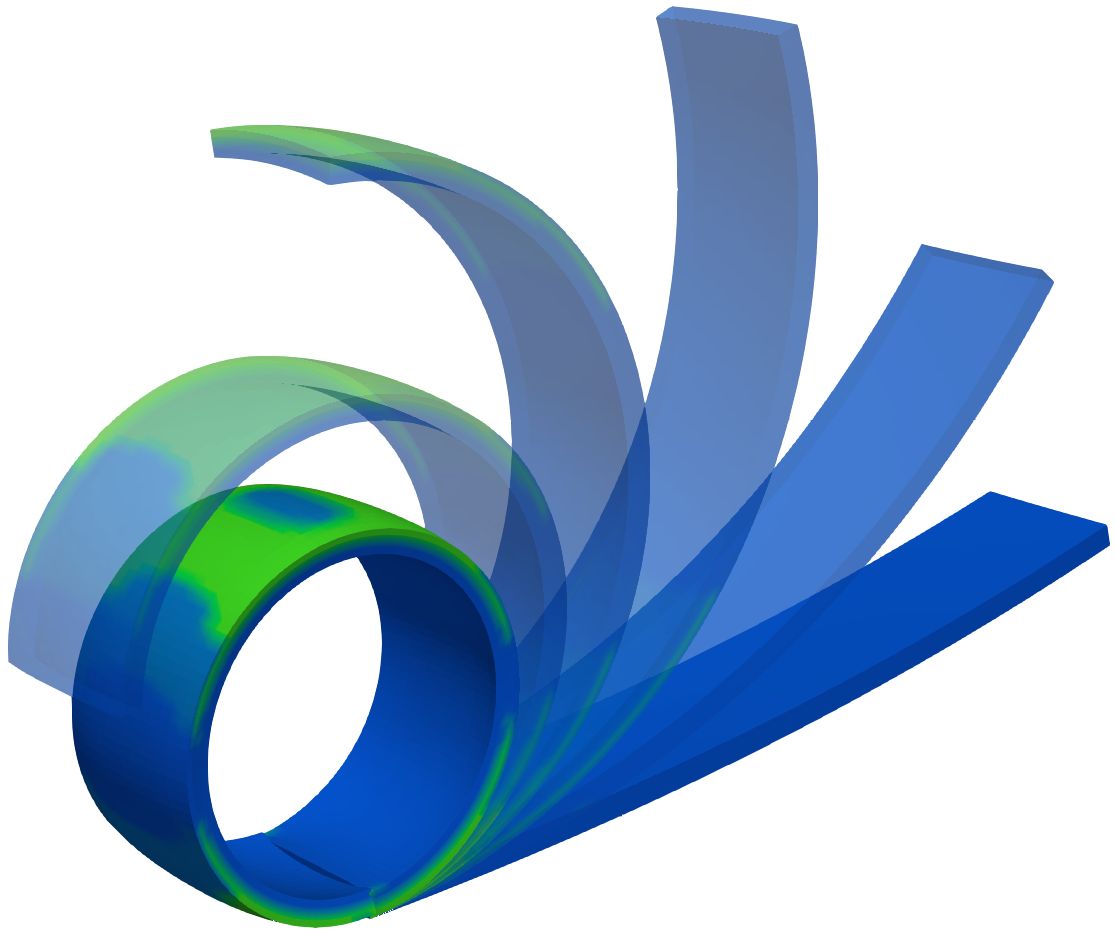}
\\
\includegraphics[width=0.20\textwidth]{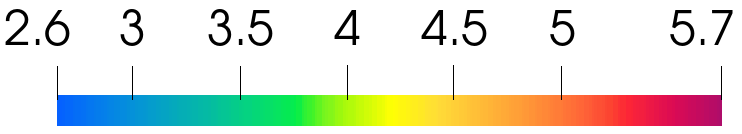}
&
\includegraphics[width=0.20\textwidth]{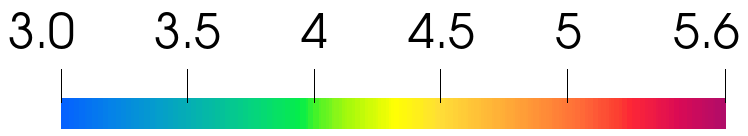}
&
\includegraphics[width=0.20\textwidth]{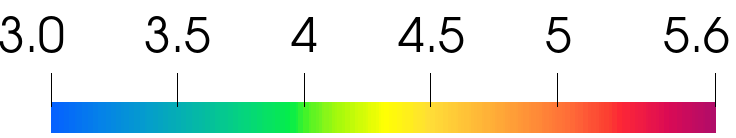}
\end{tabular}
\caption{Deformation of an EAP with different fiber alignment, columns from left to right correspond to $\vect{M} = \begin{bmatrix}1 & 0 & 0  \end{bmatrix}^T$, with the snapshots corresponds to loading parameter $\Lambda = (0,10,15,30,60)$; $\vect{M} = \begin{bmatrix}1/\sqrt{2} & 1/\sqrt{2} & 0  \end{bmatrix}^T$, $\Lambda = (0,5,15,20,30,45)$, and $\vect{M} = \begin{bmatrix}0 & 1 & 0  \end{bmatrix}^T$, $\Lambda = (0,5,10,15,20,30)$. Contour plots in the first row displays deformation gradient component $F_{22}$, the second row displays electric displacement component $(D_0)_z$, and the last row displays minimal value of minor of acoustic tensor normalized by $\mu_1$.}
\label{fig:example_3angles}
\end{figure}
Of course, to assure $\mathcal{A}$-polyconvexity of the model, positive material constants have to be chosen, which is a reasonable physical requirement leading also to coercivity.  The material constants used in the following simulation are presented in Table \ref{tab:Example2}.

\begin{table}[htb] \label{Tab:MaterialParametersModel1}
\centering
\caption{Material properties for Material model 1}
     \label{tab:Example2}
     \begin{tabular}{cccccccc}
     \toprule
     {$\mu_1$} & {$\mu_2$} & {$\mu_3$} & {$\lambda$} & {$\varepsilon_1$}& {$\varepsilon_2$}& {$\varepsilon_3$}
     \\
     \midrule
     $1\times10^5$ & $1.0\mu_1$ & $1.0\mu_1$ & $10^3\times\mu_1$ & $9.6\varepsilon_0$& $2.88\varepsilon_0$& $1.92\varepsilon_0$& \\
     \bottomrule
\end{tabular}
 \end{table}

We study three samples with fibre alignment of $0^{\circ}$, $45^{\circ}$, and $90^{\circ}$ and evaluate the influence of the fibre orientation  upon the electrical loading. The three aforementioned fiber orientations corresponds to $\vect{M}_0 = \begin{bmatrix}1 & 0 & 0  \end{bmatrix}^T$, $\vect{M}_{45} = \begin{bmatrix}1/\sqrt{2} & 1/\sqrt{2} & 0  \end{bmatrix}^T$, and $\vect{M}_{90} =\begin{bmatrix} 0 & 1 & 0  \end{bmatrix}^T$ respectively. The results of simulation are depicted in Figure \ref{fig:example_3angles} showing clearly that design of fibers alignment substantially influence response of an EAP leading to bending, twisting, or combined modes of deformation. Therefore, the computational approach can be used to carefully tailor fiber orientation to obtain response of an EAP required for a specific application.
\subsubsection{Model 2: Non  $\mathcal{A}$-polyconvex constitutive law}
In the previous example, we have examined the behaviour of a fibre-reinforced EAP with a constitutive law composed of $\mathcal{A}$-polyconvex invariants. However, the model does not contain electro-mechanical coupling of the fibres described by invariant $K_2^{\mathcal{D}_{\infty h}}$. The model from the previous section is enhanced as
\begin{align}\label{eq:ed_no}
\bar{e}_{\mathcal{D}_{\infty h},2}(\vect{F},\vect{D}_0,\vect{M}) = \bar{e}_{\mathcal{D}_{\infty h},1}(\vect{F},\vect{D}_0,\vect{M})  + \frac{1}{2\varepsilon_4} K_2^{\mathcal{D}_{\infty h}}(\vect{F},\vect{D}_0,\vect{M})
\end{align}
Note that coupled electro-mechanical behaviour of fibres is often neglected see, e.g., \cite{ahmadi2020nonlinear,allahyari2021fiber}. 

Since the model contains a non $\mathcal{A}$-polyconvex contribution, it is difficult to prove the $\mathcal{A}$-polyconvex of the model. Therefore, a possible loss  of  rank-one convexity is investigated for the case with fixed fiber orientation defined by $\vect{M}=\begin{bmatrix} 1/\sqrt{3}& 1/\sqrt{3} & 1/\sqrt{3} \end{bmatrix}^T$, and material parameter $\varepsilon_4 = 19.2\varepsilon_0$. 

The loss of rank-one convexity, i.e., ellipticity can be detected through evaluation of the  minors of the  acoustic  tensor (\ref{eqn: Exp1 acoustic tensor}). However, evaluation of ellipticity loss is computationally demanding and not practical for real three-dimensional simulations. Note that several efficient strategies for the detection of loss of ellipticity have been proposed, see, e.g., \cite{al2020general} and references therein.

It is well known that homogenization of fibre-reinforced composites with the matrix and fibres described by polyconvex  material laws does not generally lead to a polyconvex model at the macroscale, i.e., polyconvexity is not preserved by homogenization \cite{debotton2006neo, barchiesi2007loss}. It can be demonstrated that the loss of polyconvexity of fibre-reinforced elastomers has a clear physical meaning of a long wave-length instability \cite{slesarenko2017microscopic}.

The results of the simulation for three levels of $\Delta \varphi$ are shown in Figure \ref{fig:example_acousticTensor}. The contour plots display values of the minimal minor of the acoustic tensor normalized by $\mu_1$. The results in case (a) show only positive values of the minor. The value is approaching zero in one element in case (b), and negative values can be detected in case (c), meaning that the rank-one convexity has been lost. The loss of rank-one convexity proves that the proposed energy (\ref{eq:ed_no}) is not $\mathcal{A}$-polyconvex. Even though the loss of rank-one convexity indicates that the governing equations of the electro-mechanical problem become ill-posed and the simulation should be terminated, the evolution of loss of ellipticity is further demonstrated in Figures \ref{fig:example_acousticTensor}(d). 
\begin{figure}[!h]	
\centering
\begin{tabular}{cc}
(a) &(b)
\\
\\
\includegraphics[width=0.25\textwidth]{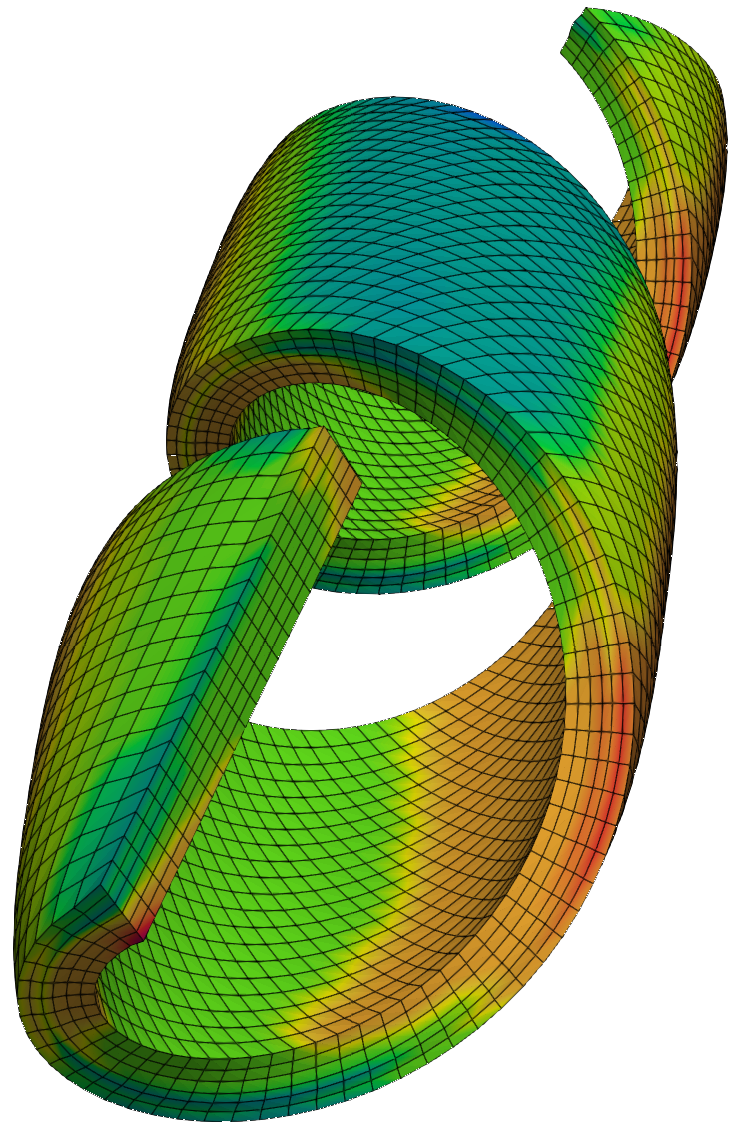}
&
\includegraphics[width=0.25\textwidth]{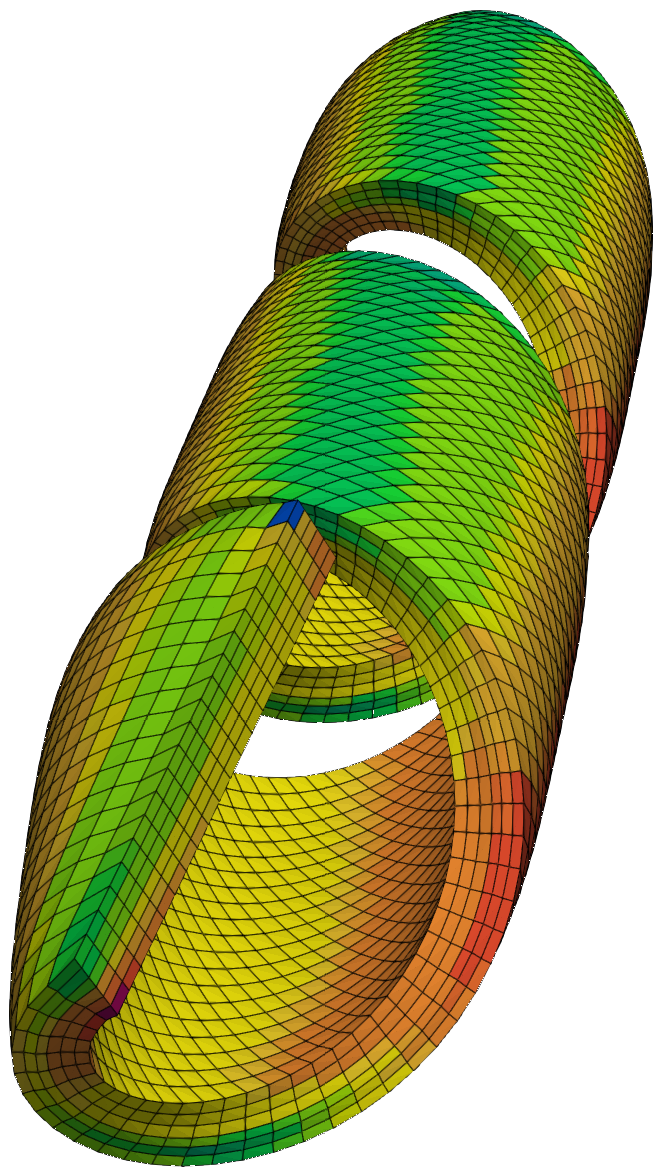}
\\
\includegraphics[width=0.25\textwidth]{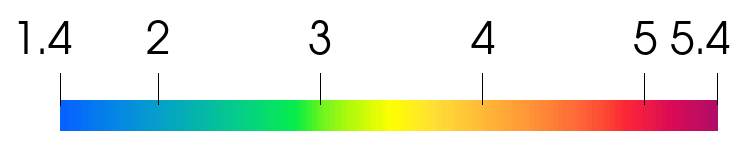}
&
\includegraphics[width=0.25\textwidth]{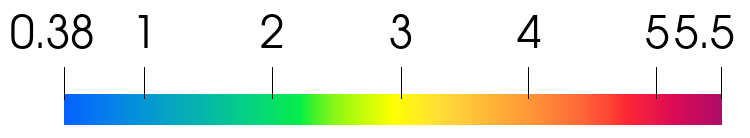}
\\
\\
(c) &(d)
\\
\\
\includegraphics[width=0.25\textwidth]{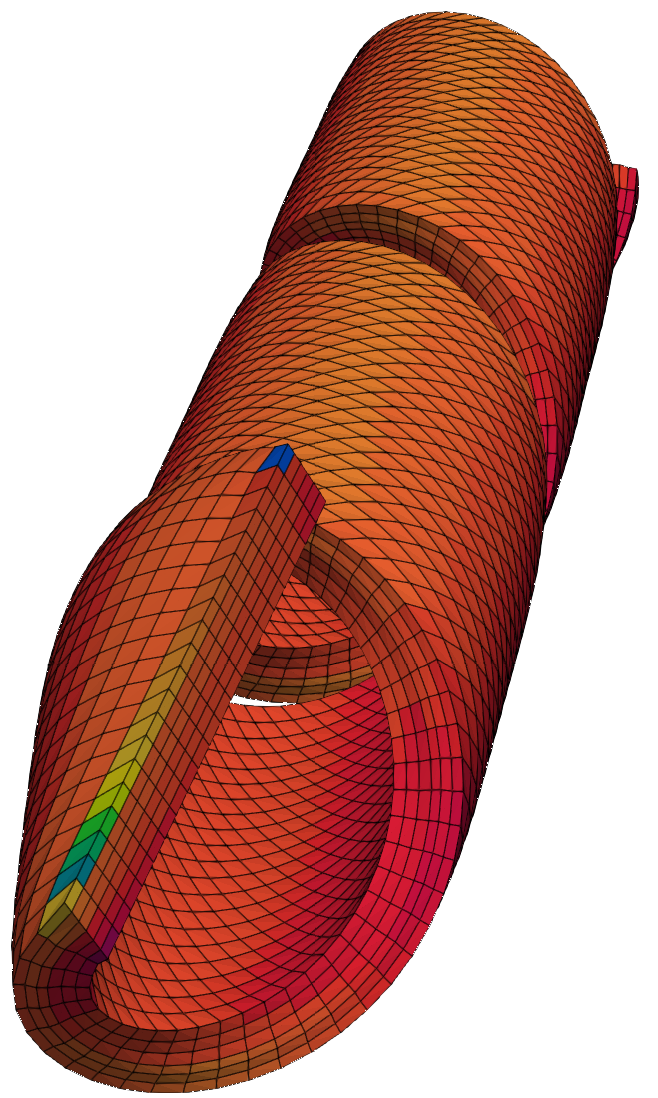}
&
\includegraphics[width=0.25\textwidth]{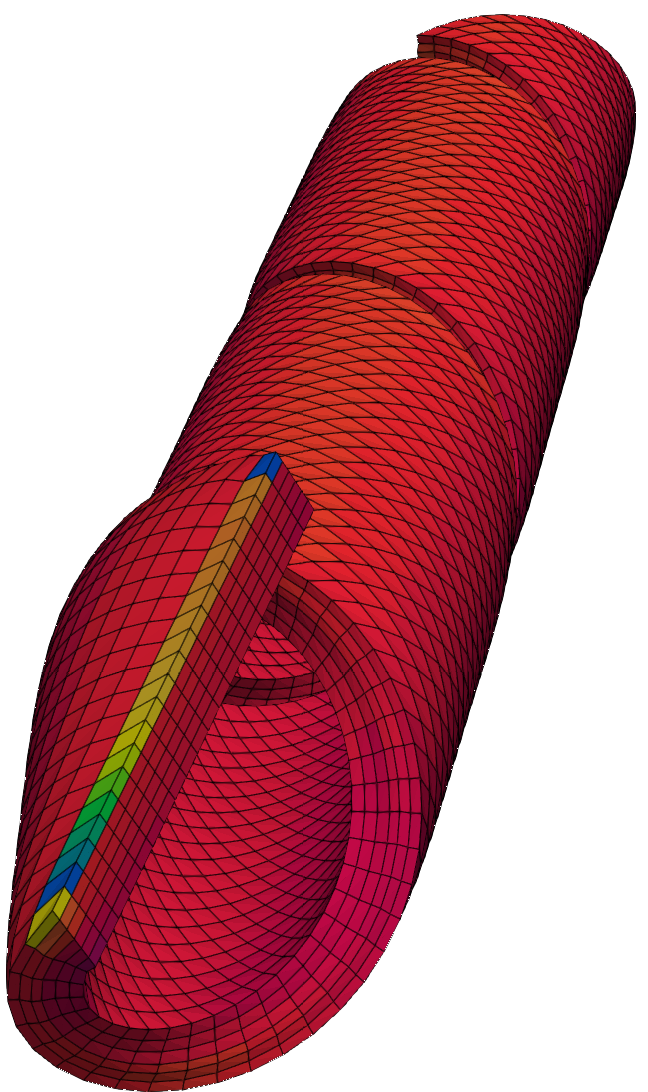}
\\
\includegraphics[width=0.25\textwidth]{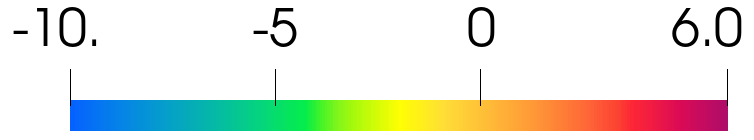}
&
\includegraphics[width=0.3\textwidth]{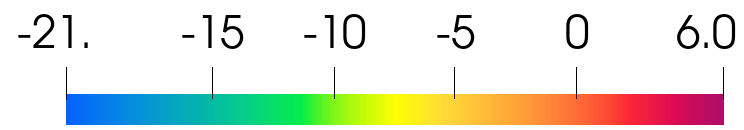}
\end{tabular}
\caption{Deformation of an EAP reinforced with fibers in direction $\vect{M} = \begin{bmatrix} 1/\sqrt{3} & 1/\sqrt{3} & 1/\sqrt{3}\end{bmatrix}^T$ for four levels of loading parameter, namely, $\Lambda = (40,44,47,49)$. Contour plots display minimal minor of the acoustic tensor normalized by $\mu_1$. The negative value of a minor of the acoustic tensor means that rank-one convexity has been lost.}
\label{fig:example_acousticTensor}
\end{figure}

\subsection{Numerical example 3}
The objectives of this example are to:

\begin{itemize}

\item\label{obj Exp3: buckling}  Detect the onset of instabilities on a rank-one DE laminated composite square membrane.
	
\item\label{obj Exp3: pattern} Observe the different buckling patterns in terms of the preferred orientation of transverse isotropy.

\end{itemize}


The geometry for this numerical example is given by a square membrane of side $l=0.06$ m and thickness $h=0.001$ m, clamped along all its side faces, as represented in Figure \ref{fig: Exp3 boundary conditions}. The membrane is subjected to a prescribed electric surface charge on its base while grounded to zero potential on its topside. Geometrical and Finite Element simulation parameters are presented in Table \ref{tab: Exp3 Geometry}. $Q3$ Finite Elements are used to interpolate both displacement and electric potential, being $104,188$ the total number of degrees of freedom, and the surface charge is applied incrementally, being $\Lambda \in [0,1]$ the load factor. The two-field variational formulation $\Pi_{\Psi}\left(\vect{\phi},\varphi\right)$ in equation \eqref{eqn:displacement based formulation} has been considered using the aforementioned Finite Element discretisation. A volumetric force acting along the positive $0X_3$ axis of value $10$, $\text{N}\cdot \text{kg}^{-1}$ is considered.

\begin{figure}[h!]
	\centering
	\begin{tabular}{c}
		\hspace{-0.6cm}		\includegraphics[width=0.5\textwidth]{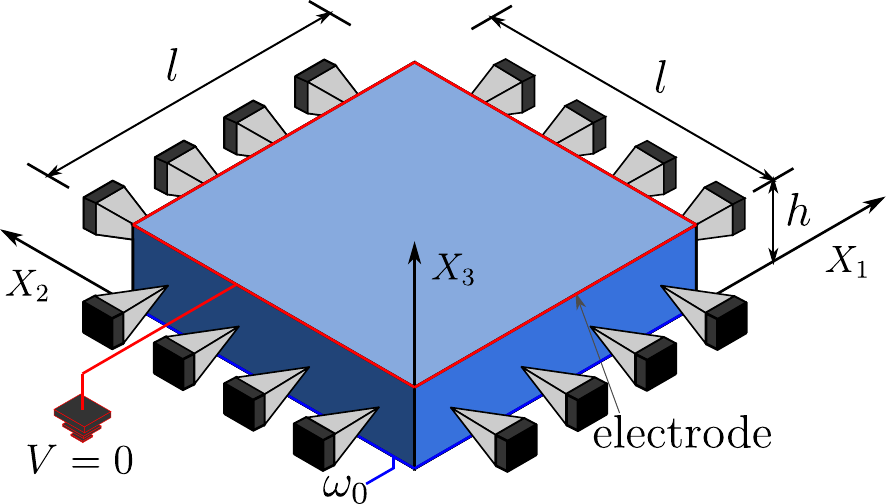}  
	\end{tabular}
	\caption{Numerical example 4. Geometry and boundary conditions. }
	\label{fig: Exp3 boundary conditions}
\end{figure}

\begin{table}[h!]
	\centering
	\begin{tabular}{l c c l l c c l}
		\toprule
		Geometrical parameters &  $l$ & $0.06$    & $\text{m}$  & Simulation parameters  & $N_x$ & $60$ &\\
		&  $h$ & $0.001$   & $\text{m}$  &                        & $N_y$ & $60$ &\\
		&      &           &             &                        & $N_z$ & $2$ &\\
		\vspace{1mm}
		Electric charge	&  $\omega_0$ & $0.02$ & $\text{C}/\text{m}^2$ & Newton tolerance &   & $10^{-6}$ & \\
		\noalign{\smallskip}\hline
	\end{tabular}
	\caption{Numerical example 4. Geometrical and simulation parameters. }
	\label{tab: Exp3 Geometry}
\end{table}

A similar study was conducted in \cite{MARIN_1,MARIN_2}, where a squared membrane, made of a rank-one laminated DE, was subjected to electrical stimuli in order to observe the development of instabilities as it deforms.  In this work, we consider the transversely isotropic constitutive model given in equation \eqref{eqn:the model in example 2}.
The material parameters featuring in the model can be found in Table \ref{Tab:MaterialParametersModel example 3}.

\begin{table}[htb] \label{Tab:MaterialParametersModel example 3}
\centering
\caption{Material properties for example 4}
     \label{tab:Example2}
     \begin{tabular}{cccccccc}
     \toprule
     {$\mu_1$} & {$\mu_2$} & {$\mu_3$} & {$\lambda$} & {$\varepsilon_1$}& {$\varepsilon_2$}& {$\varepsilon_3$}
     \\
     \midrule
     $1\times10^5$ & $1.0\mu_1$ & $1.0\mu_1$ & $10^3\times \mu_1$ & $9.6\varepsilon_0$& $2.88\varepsilon_0$& $1.92\varepsilon_0$& \\
     \bottomrule
\end{tabular}
 \end{table}

With regards to the preferred direction $\vect{M}$, six values have been considered (see Figure \eqref{fig:example_Buckling}). The electrically induced deformation pattern attained by the elastomeric material is indeed highly influenced by the preferred direction $\vect{M}$, as it can be clearly appreciated from Figure \ref{fig:example_Buckling}. Naturally, this influence is also reflected in the contour plot pressure distribution over the elastomeric material, as shown in Figure \ref{fig:example_Buckling pressure}.

\begin{figure}[htbp!]	
\centering
\begin{tabular}{cc}
\includegraphics[width=0.45\textwidth]{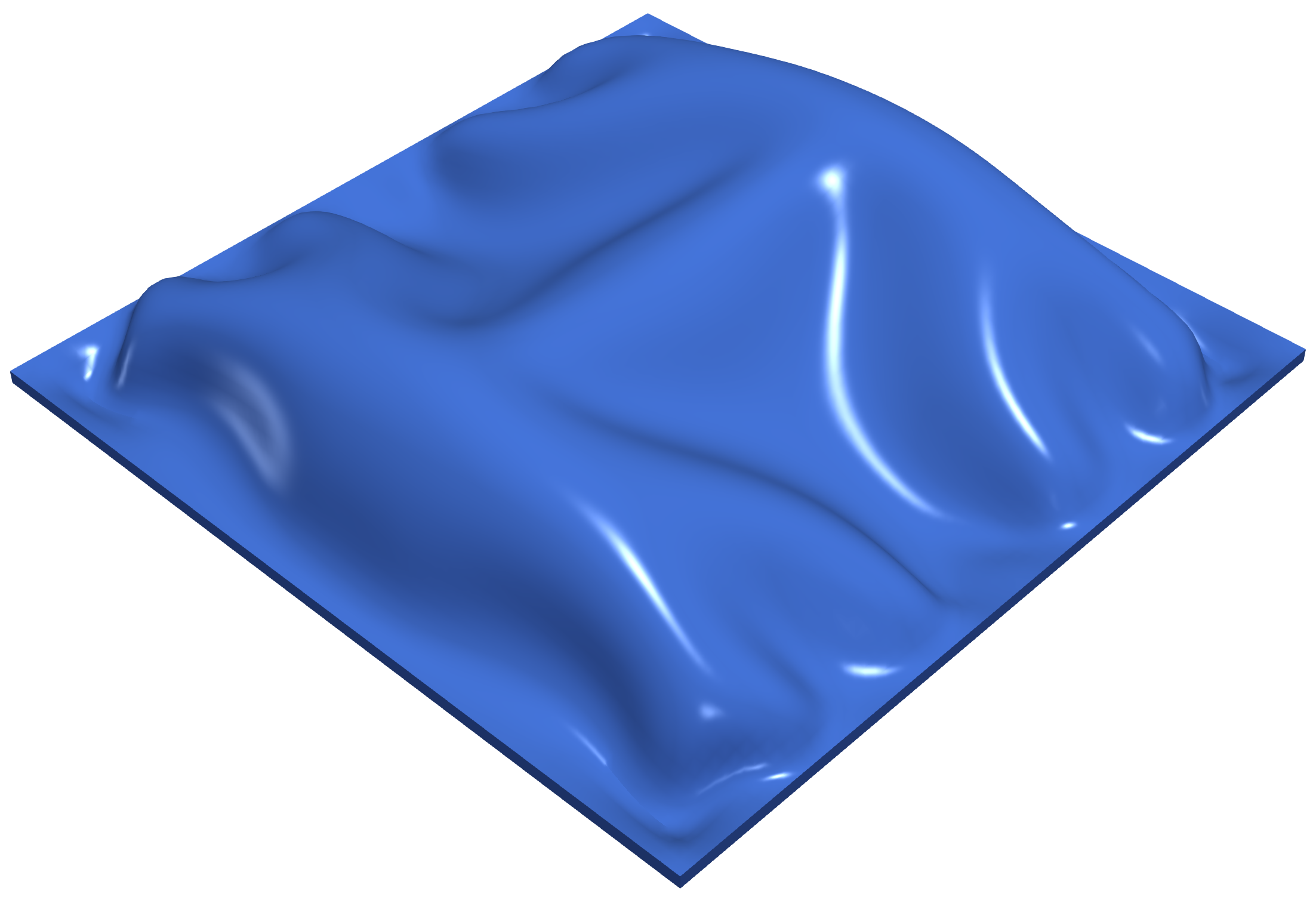}
&
\includegraphics[width=0.45\textwidth]{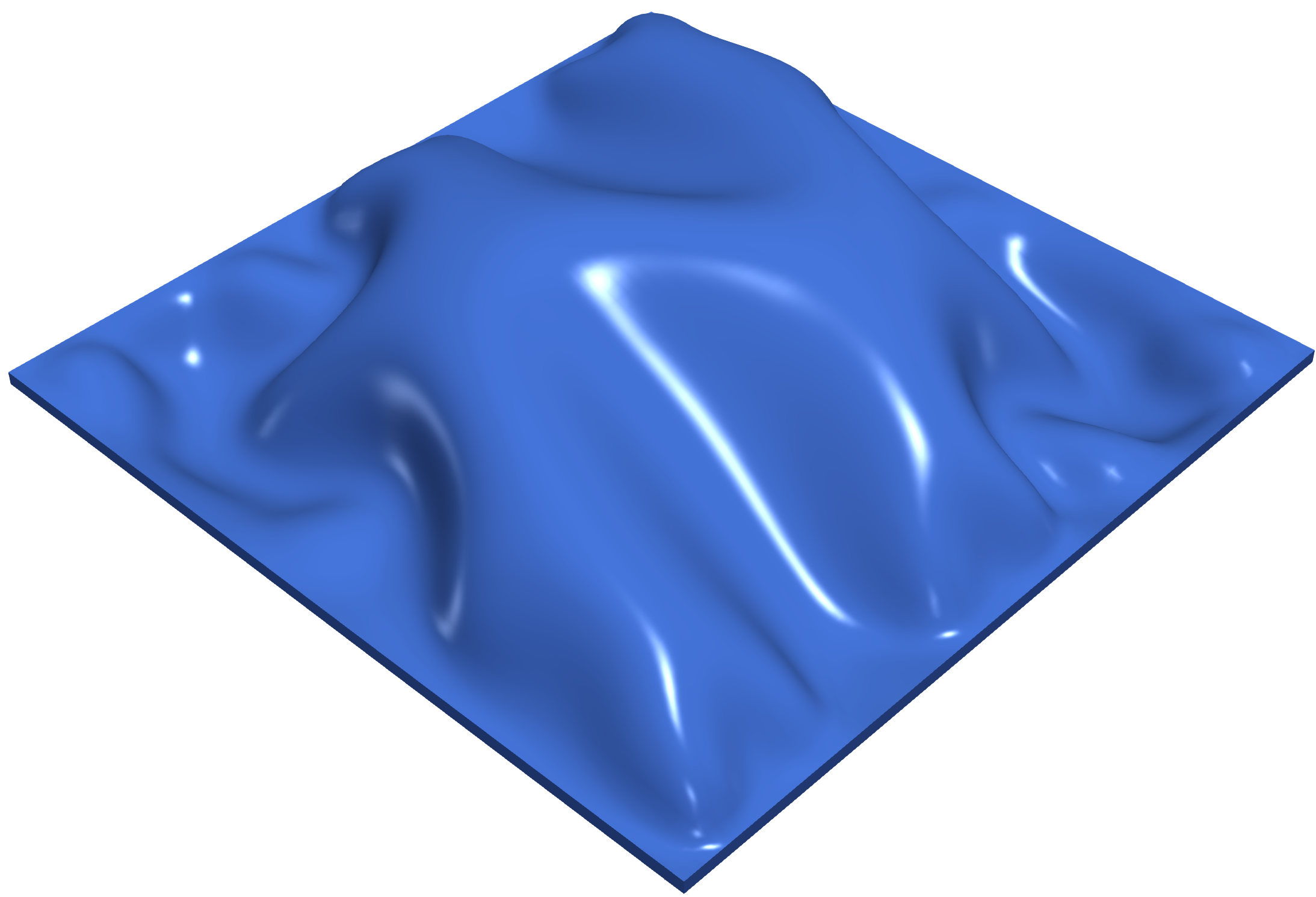}\\
(a)  &  (b)\\

\includegraphics[width=0.45\textwidth]{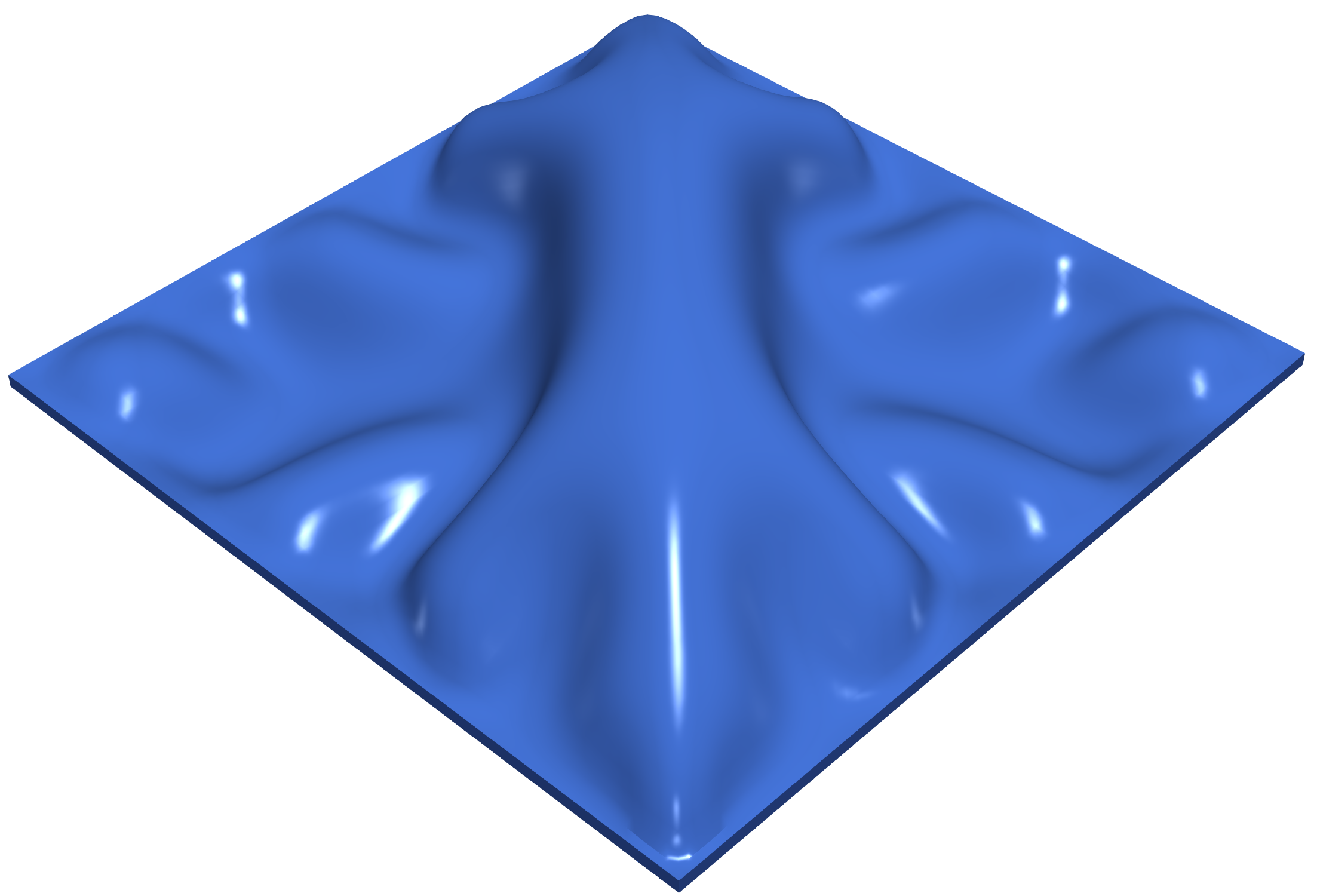}&

\includegraphics[width=0.45\textwidth]{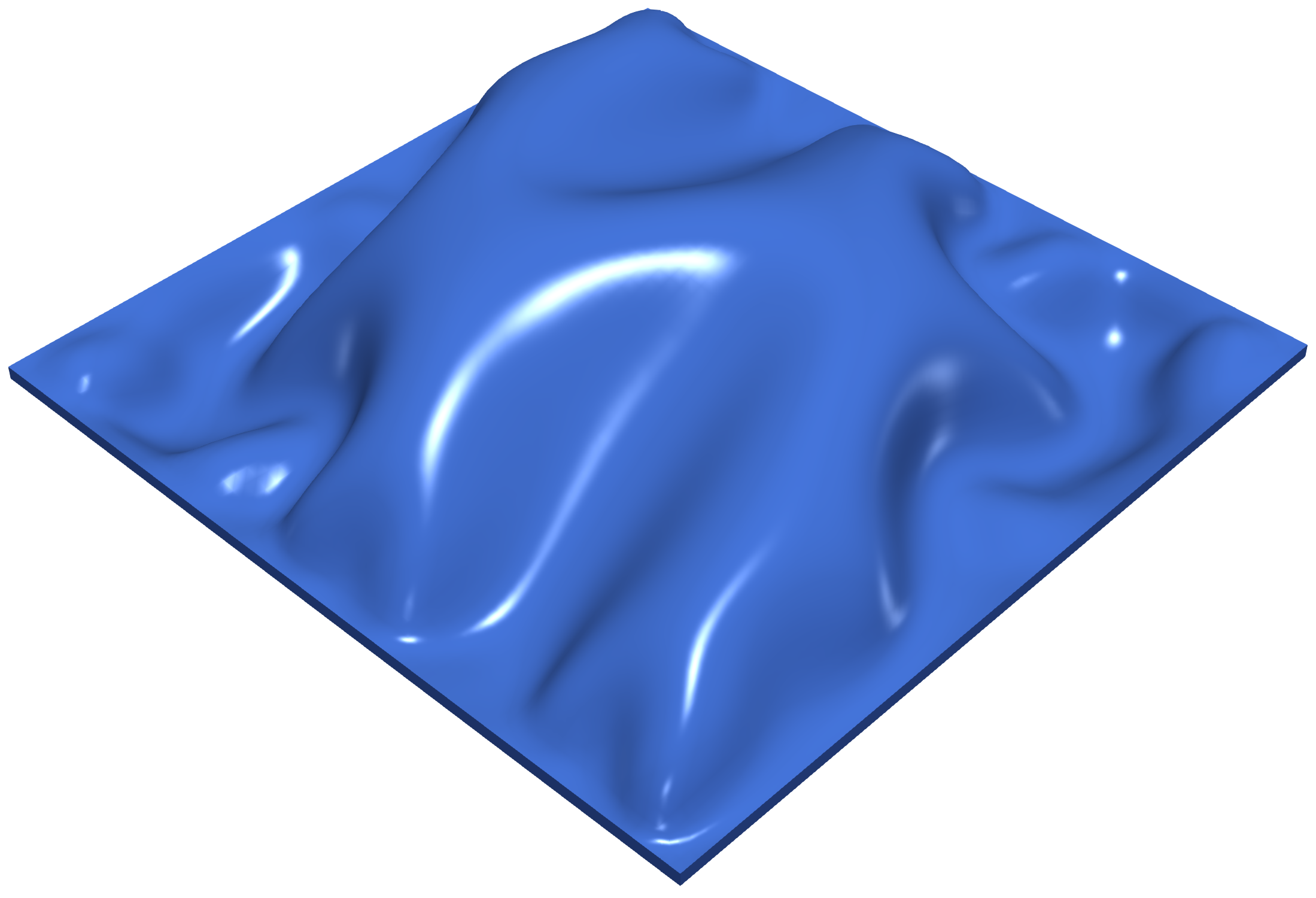}\\
(c)  &  (d)\\%
\includegraphics[width=0.45\textwidth]{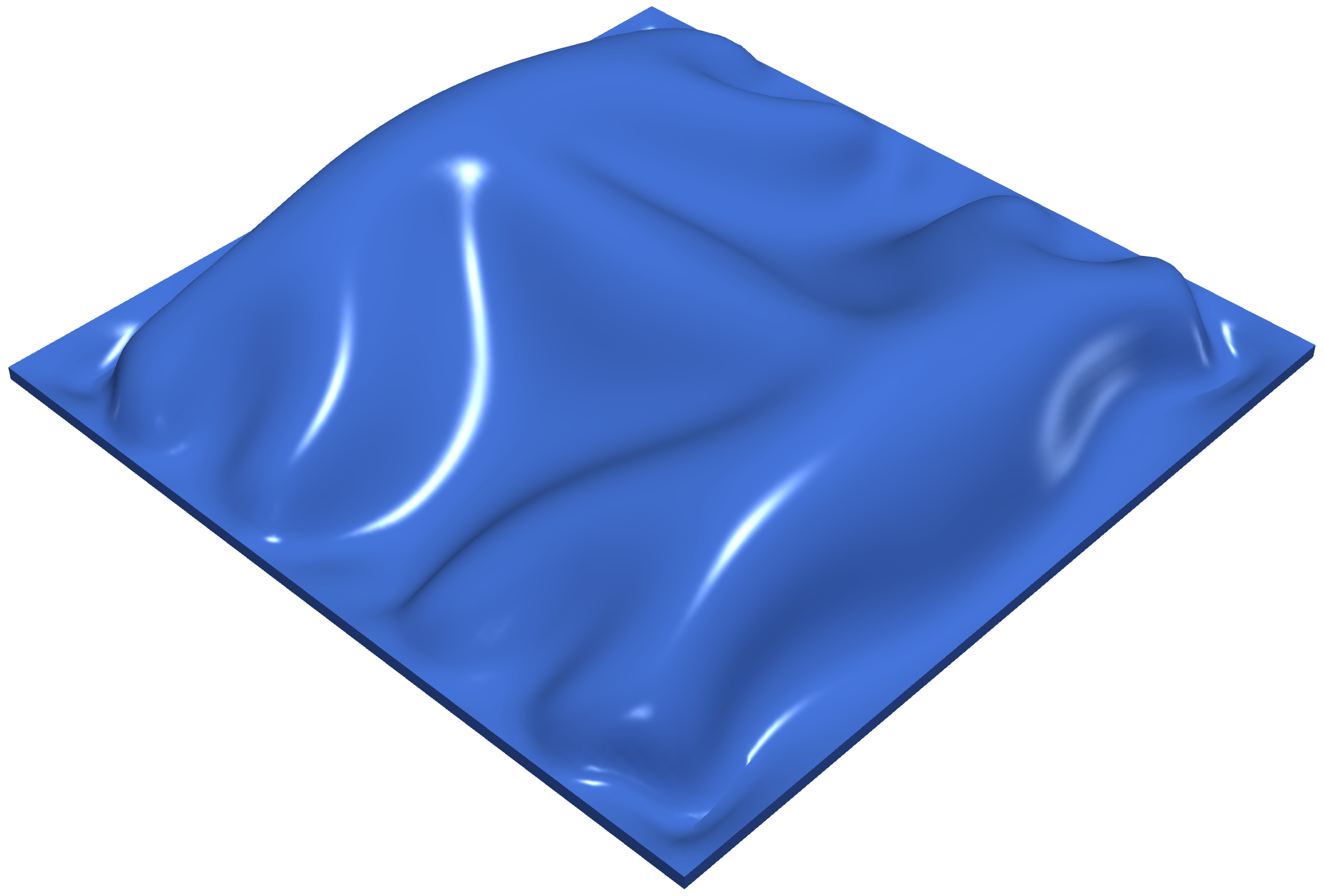}
&
\includegraphics[width=0.45\textwidth]{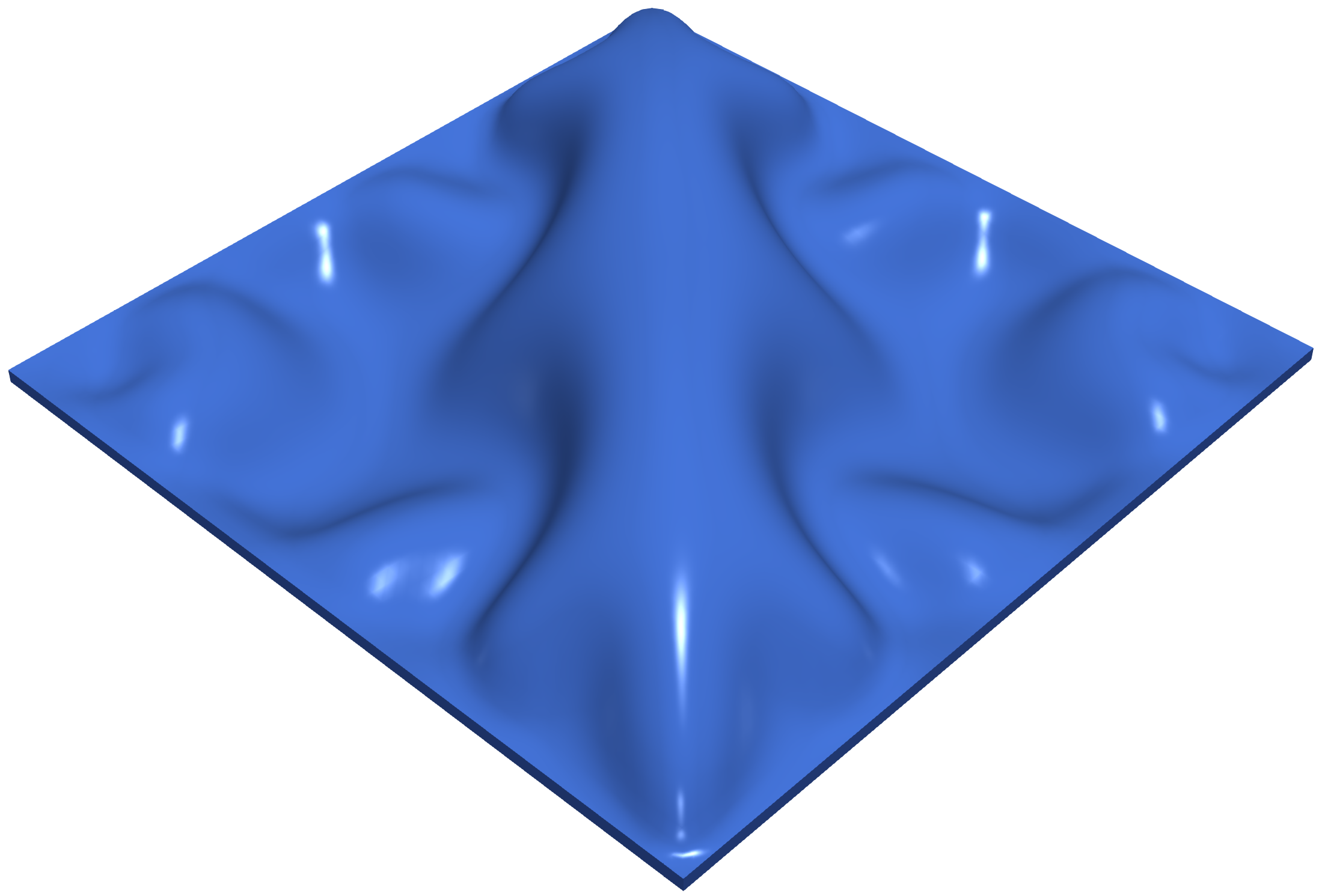}\\
(e)  &  (f)
\end{tabular}
\caption{Electrically induced buckling for numerical example 4 for a value of the accumulated load factor $\Lambda$ of $\Lambda=0.4$. Different preferred directions $\vect{M}$, spherically parametrised according to $\vect{M}=\begin{bmatrix}\cos{\theta}\sin(\phi)& \sin{\theta}\sin(\phi)&\cos(\phi)\end{bmatrix}^T$, with $(\theta,\phi)$: (a) $(0,0)$; (b) $(\pi/8,0)$; (c) $(\pi/4,0)$; (d) $(3\pi/8,0)$; (e) $(\pi/2,0)$; (f) $(\pi/4,\pi/4)$.}
\label{fig:example_Buckling}
\end{figure}

\begin{figure}[htbp!]	
\centering
\begin{tabular}{cc}
\includegraphics[width=0.45\textwidth]{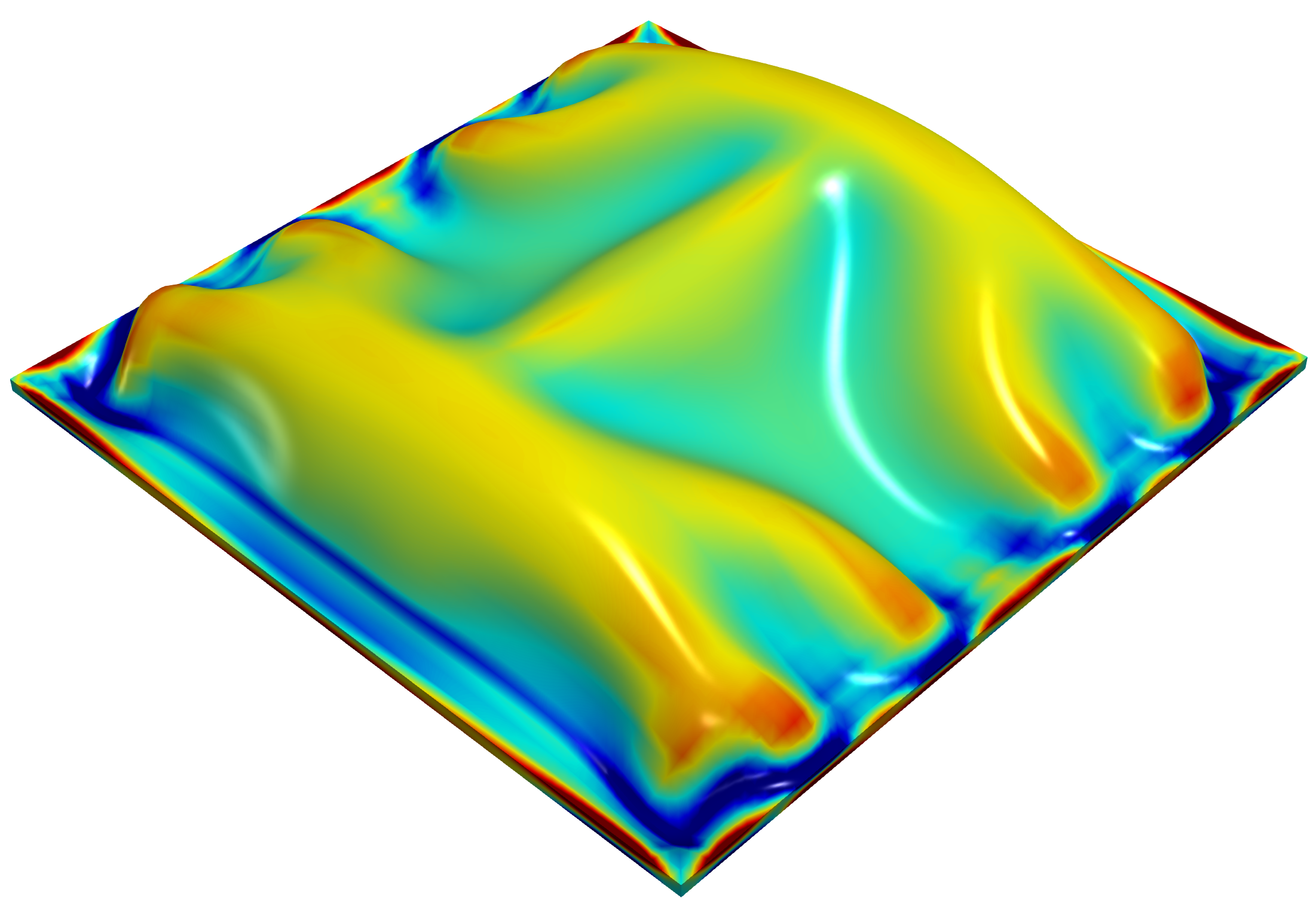}
&
\includegraphics[width=0.45\textwidth]{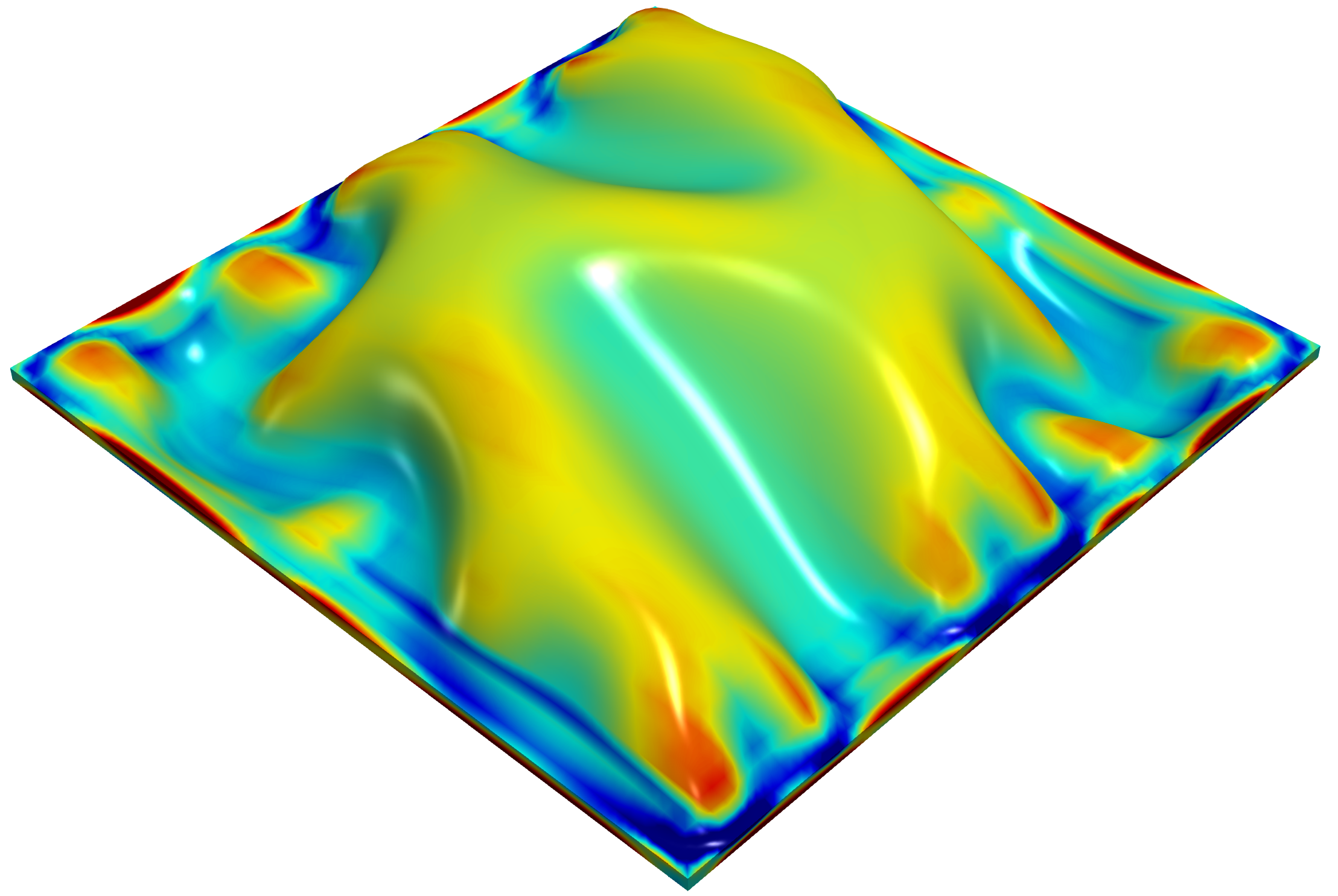}\\
(a)  &  (b)\\

\includegraphics[width=0.45\textwidth]{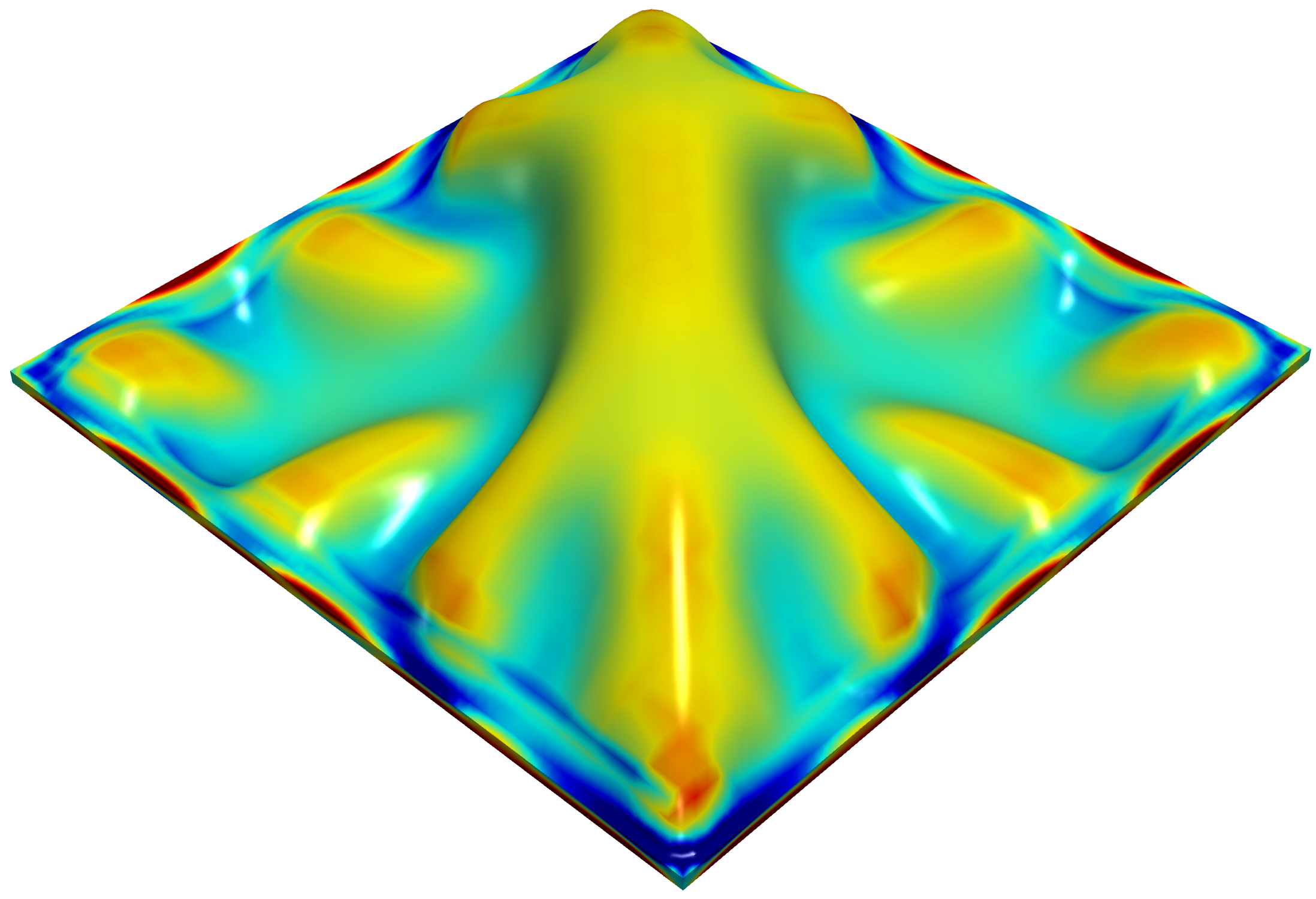}&

\includegraphics[width=0.45\textwidth]{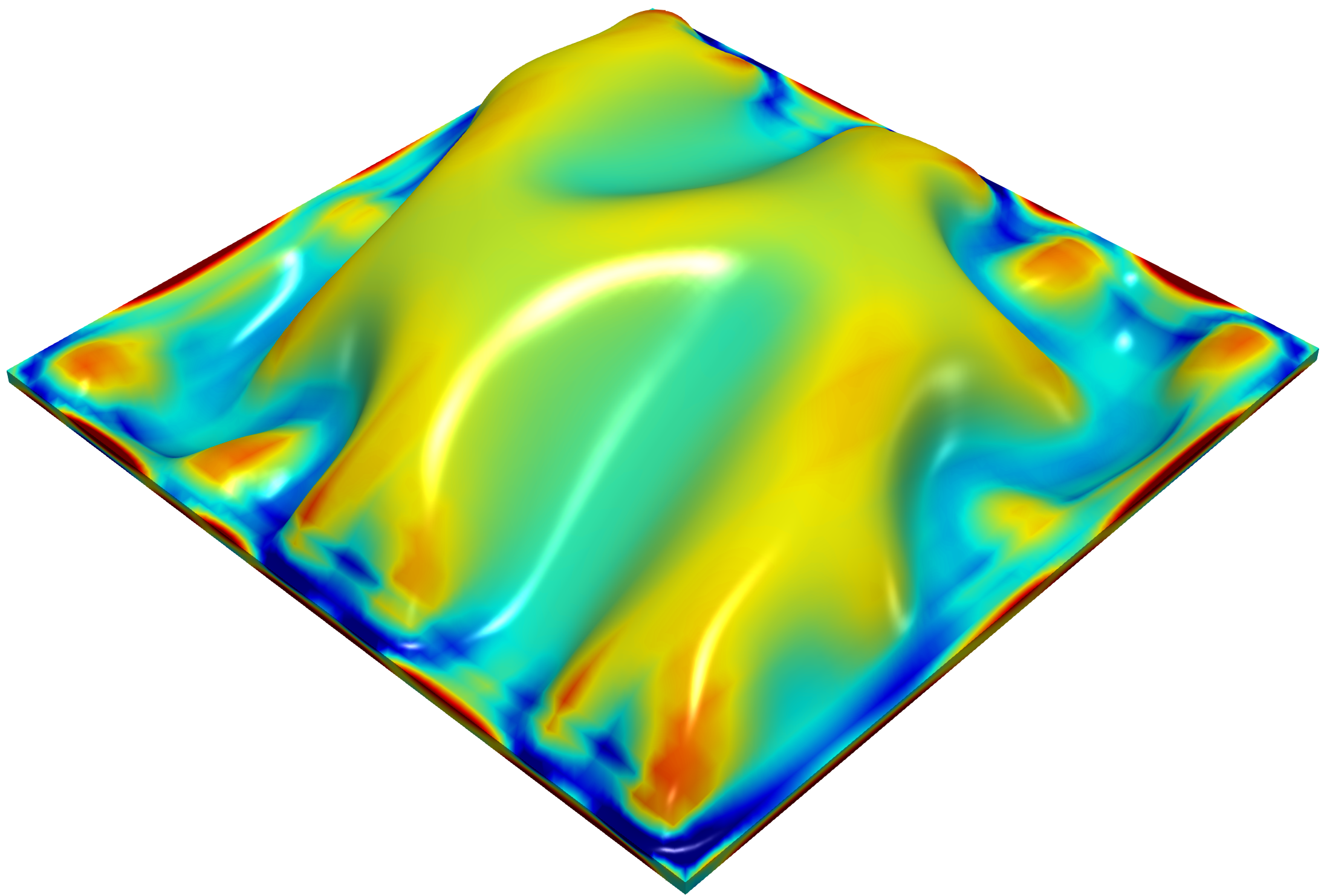}\\

(c)  &  (d)\\
\includegraphics[width=0.45\textwidth]{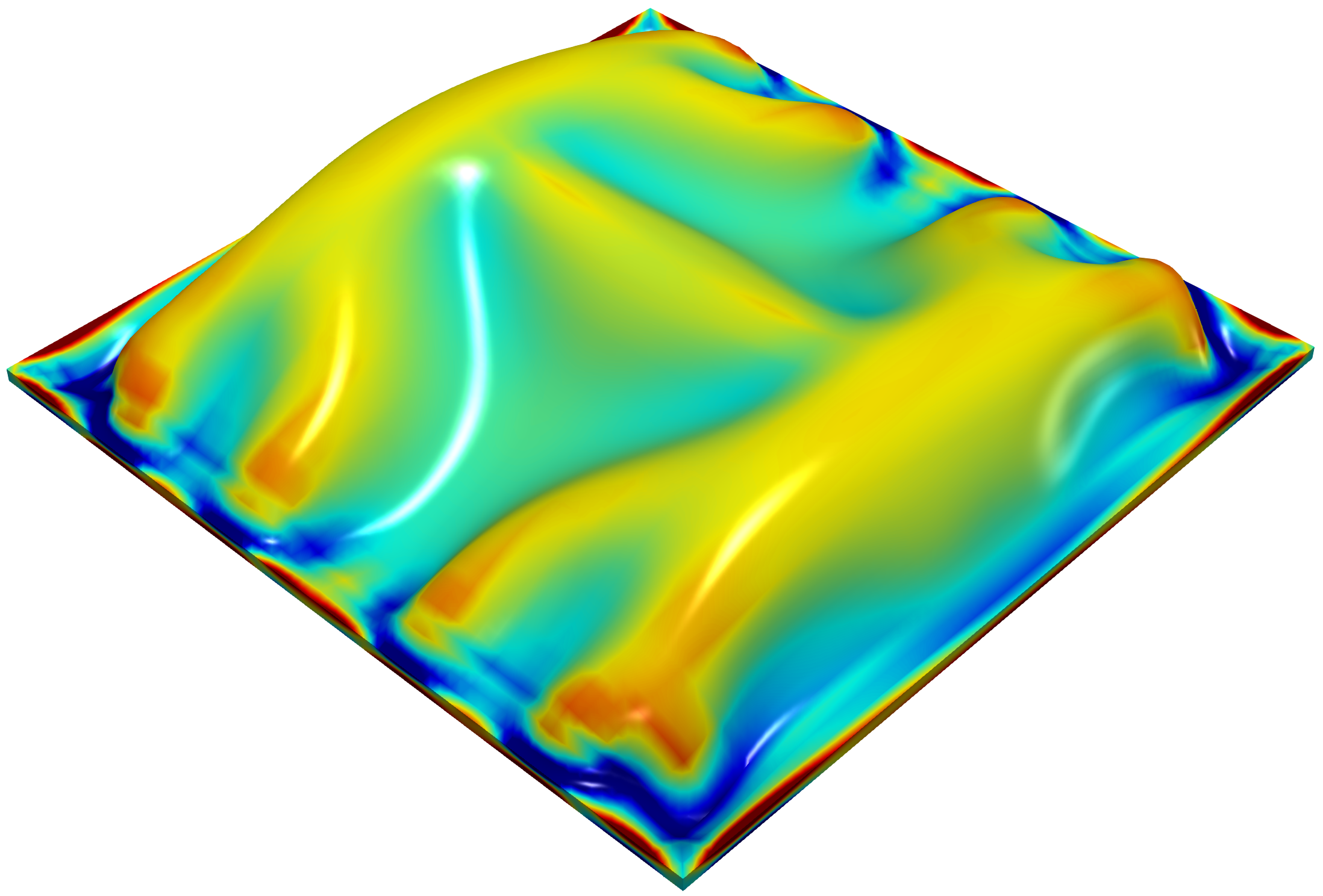}
&
\includegraphics[width=0.45\textwidth]{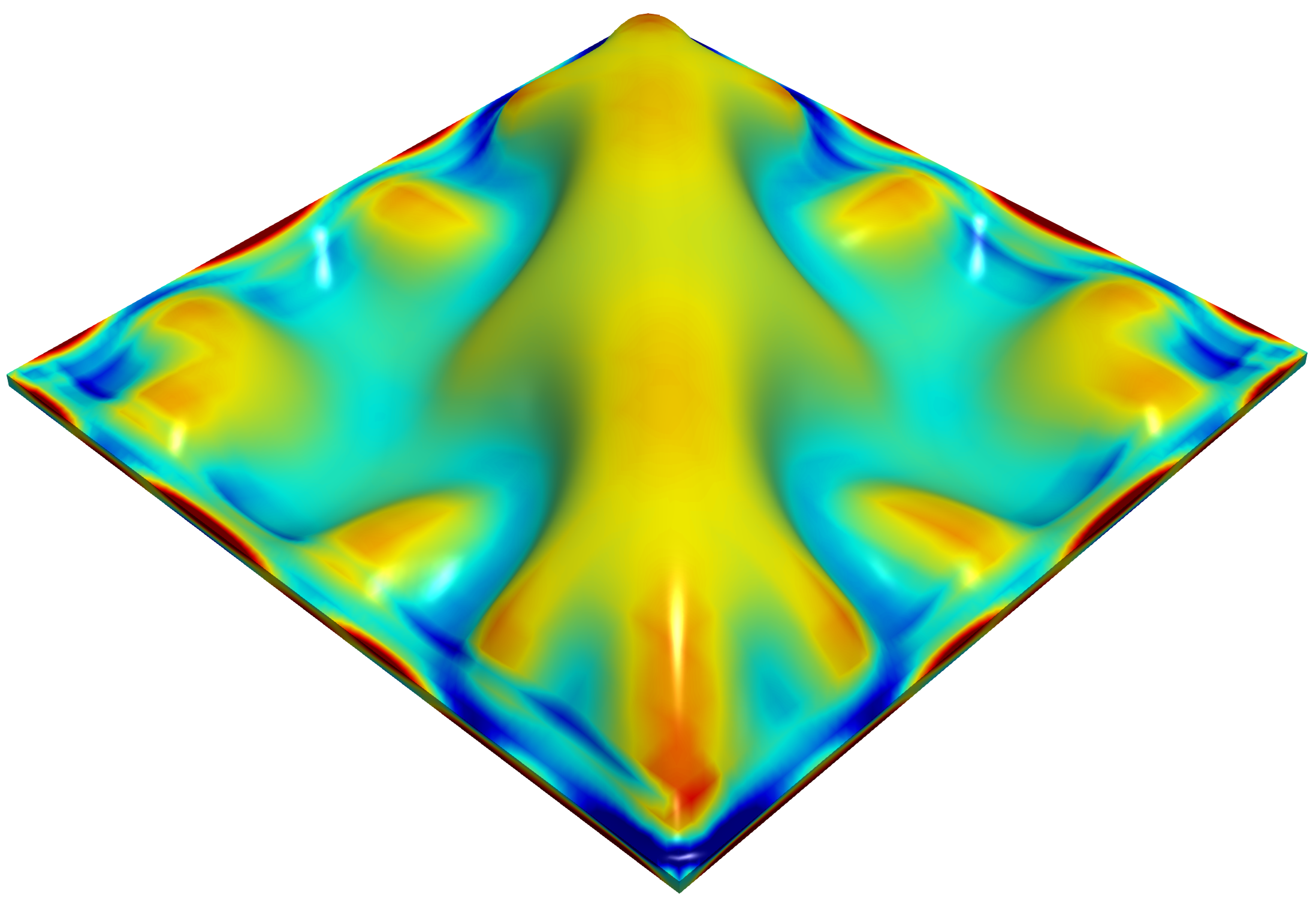}\\
(e)  &  (f)
\end{tabular}
\includegraphics[width=0.45\textwidth]{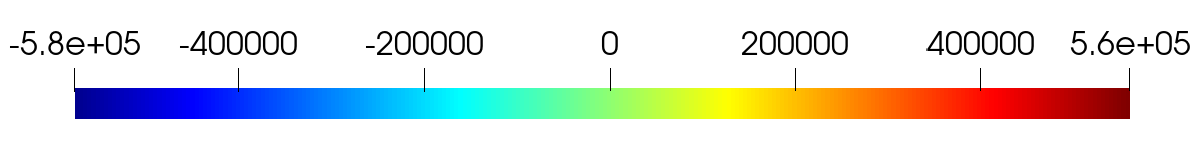}
\caption{Electrically induced buckling for numerical example 4 for a value of the accumulated load factor $\Lambda$ of $\Lambda=0.4$. Contour plot distribution of hydrostatic pressure $p=\text{tr}(J^{-1}\vect{P}\vect{F}^T)$. Different preferred directions $\vect{M}$, spherically parametrised according to $\vect{M}=\begin{bmatrix}\cos{\theta}\sin(\phi)& \sin{\theta}\sin(\phi)&\cos(\phi)\end{bmatrix}^T$, with $(\theta,\phi)$: (a) $(0,0)$; (b) $(\pi/8,0)$; (c) $(\pi/4,0)$; (d) $(3\pi/8,0)$; (e) $(\pi/2,0)$; (f) $(\pi/4,\pi/4)$.}
\label{fig:example_Buckling pressure}
\end{figure}

\section{Conclusions}\label{sec:conclusions}
This paper has presented a phenomenological invariant-based polyconvex transversely isotropic framework for the simulation of EAPs at large strains. Taking inspiration from the pioneering work by Schr\"{o}eder and Neff \cite{schroder2003invariant}, the research builds on previous work by some of authors \cite{Gil2016_NewFramework} for the description of polyconvex isotropic EAPS and enhances it for the modelling of EAPs equipped with an internal transversely isotropic crystallographic structure of the selected types $\mathcal{D}_{\infty h}$ and $\mathcal{C}_{\infty}$ \cite{Vera_thesis}. The paper also bridges the gap between the work developed by \v{S}ilhavý \cite{Silhavy2017} regarding the existence of minimisers and other key computational requirements, such as, rank-one convexity and ellipticity \cite{Gil2016_NewFramework}, the latter necessary to prevent the appearance of spurious mesh-dependent results. 

Three numerical examples are included in order to demonstrate the effect that the anisotropic orientation and the contrast of material properties, as well as the level of deformation and electric field, have upon the response of the EAP when subjected to large three-dimensional stretching, bending, and torsion, including the possible development of wrinkling. Whilst the first example focuses on an academic type of problem, that is, homogeneous deformation, the latter two explore the use of Finite Element discretisations to monitor the response of the framework in realistic three-dimensional in-silico simulations. This manuscript paves the way towards the in-silico simulation of novel soft robotics components made up of predominantly dominant transversely isotropic Electro-Active Polymers (EAPs). Crucially, this type of phenomenologically based model can be enhanced and informed via the use of laboratory data experiments, the up-scaling of computational homogeneisation of complex micro-architecture EAPs, or even the use of Artificial Intelligence type of approaches. These will be the focus of our future work. \\~\\
\noindent \textbf{Declaration of competing interests}\\
The authors declare that they have no known competing financial interests or personal relationships that could have appeared to
influence the work reported in this paper.\\~\\
\noindent \textbf{Acknowledgments}\\
MH and MK acknowledge support by the Czech Science Foundation (EXPRO project No. 19-26143X), AJG acknowledges the financial support received through the European Training Network Protechtion (Project ID: 764636).
\section{Appendices}
\subsection{Ellipticity and acoustic tensor}\label{appendix:acoustic_tensor}
As shown in \cite{Ortigosa2016_NewFramework_ConservationLaws}, the unknown fields $\vect{\phi}$ and $\vect{D}_0$ can be written as a perturbation with respect to equilibrium states $\vect{\phi}^{\text{eq}}$  and $\vect{D}_0^{\text{eq}}$, respectively, by means of the addition of transient travelling (plane) wave functions as
\begin{equation}
\vect{\phi}=\vect{\phi}^{\text{eq}}+\vect{u}f(\alpha);\qquad
\vect{D}_0=\vect{D}_0^{\text{eq}}+\vect{V}_{\perp}g(\alpha);\qquad \alpha=\vect{X}\cdot\vect{V}-ct,
\end{equation}
where $\vect{V}$ represents the polarisation vector of the travelling wave, $c$ the associated speed of propagation of the perturbation with amplitudes $\vect{u}$ and $\vect{V}_{\perp}$ and $f$ and $g$ two arbitrary $\alpha$-scalar functions. Substitution of above ansatzs into \eqref{eqn:definition of F} and  \eqref{eq:DivD_a}, respectively, results in
\begin{subequations}
\begin{align}
\vect{F}&=\underbrace{\vect{\nabla}_0 \vect{\phi}^{\text{eq}}}_{=\vect{F}^{\text{eq}}}+(\vect{u}\otimes \vect{V}) f'(\alpha);\\
\text{DIV}\vect{D}_0-\rho_0&=\underbrace{\text{DIV}\vect{D}^{\text{eq}}_0-\rho_0}_{=0}+\left(\vect{V}_{\perp}\cdot\vect{V}\right)g^{\prime}(\alpha) = 0,
\end{align}
\end{subequations}
hence, $\vect{V}_{\perp}$ must be orthogonal to $\vect{V}$. Linearisation of the first Piola-Kirchhoff stress tensor $\vect{P}$ and the electric field $\vect{E}_0$ about equilibrium states $\vect{P}^{\text{eq}}=\vect{P}(\boldsymbol{\mathcal{U}}^{\text{eq}})$ and $\vect{E}_0^{\text{eq}}=\vect{E}_0(\boldsymbol{\mathcal{U}}^{\text{eq}})$, with $\boldsymbol{\mathcal{U}}^{\text{eq}}=(\boldsymbol{F}^{\text{eq}},\boldsymbol{D}_0^{\text{eq}})$, gives
\begin{equation}\label{eqn:linearisation}
\left[\begin{matrix}
\vect{P}\\
\vect{E}_0
\end{matrix}\right]=\left[\begin{matrix}
\vect{P}^{\text{eq}}\\
\vect{E}_0^{\text{eq}}
\end{matrix}\right]+
\left[\begin{matrix}
\vect{\mathcal{C}}_e & \vect{\mathcal{Q}}^T\\
\vect{\mathcal{Q}} & \vect{\mathcal{\theta}}\\
\end{matrix}\right]\left[\begin{matrix}
:(\vect{u}\otimes \vect{V}) f'(\alpha)\\
\vect{V}_{\perp} g(\alpha)
\end{matrix}\right].
\end{equation}

Substitution of \eqref{eqn:linearisation} into \eqref{eq:DivP} and into Faraday's law \eqref{eqn:faraday}\footnote{In this case, use is made of the more generic expression for Faraday's law, namely $\text{CURL} \vect{E}_0=\vect{0}$.} gives
\begin{subequations}
\begin{align}
\text{DIV}\vect{P}+\vect{b}_0&=
\underbrace{\text{DIV}\vect{P}^{\text{eq}}+\vect{b}_0}_{=\vect{0}}+ \underbrace{\vect{\mathcal{C}_{e,\vect{V}\vect{V}}} \;\vect{u} f''(\alpha)+\vect{\mathcal{Q}_{\vect{V}}}^T \vect{V}_{\perp} g'(\alpha)}_{\rho_R c^2 \vect{u}f''(\alpha)};\label{eq:equil_waves}\\
\text{CURL}\vect{E}_0&=\underbrace{\text{CURL}\vect{E}_0^{\text{eq}}}_{=\vect{0}}+\boldsymbol{\mathcal{E}_{V}}\left[\vect{\mathcal{Q}_{\vect{V}}} \;\vect{u}f''(\alpha)+\vect{\theta}\;\vect{V}_{\perp}g'(\alpha)\right]=\vect{0},\label{eq:Faraday_wave}
\end{align}
\end{subequations}
where 
\begin{equation}
\left[\vect{\mathcal{C}_{e,\vect{V}\vect{V}}}\right]_{ij}=\mathcal{C}_{e,iIjJ}V_{I}V_{J},\qquad
\left[\vect{\mathcal{Q}_{\vect{V}}}\right]_{Ij}=\mathcal{Q}_{IjJ}V_{J},\qquad
\left[\boldsymbol{\mathcal{E}_{V}}\right]_{IJ}=\mathcal{E}_{IJK}V_K,
\end{equation}
where $\mathcal{E}_{IJK}$ denotes the Levi-Civita third-order tensor. Notice that the last underbraced term on the right-hand side of \eqref{eq:equil_waves} is identified with the inertial term corresponding to the acceleration effect with $\rho_R$ the density of the electroactive material. Naturally, a satisfaction of \eqref{eq:Faraday_wave} requires the vector within squared brackets to be collinear to $\vect{V}$, which permits to obtain
\begin{equation}\label{eqn:vperp}
\vect{V}_{\perp}g'(\alpha)=\vect{\theta}^{-1}\left[\beta_{\vect{V}} \vect{V}-\vect{\mathcal{Q}_{\vect{V}}} \;\vect{u}f''(\alpha)\right]
\end{equation}
with $\beta_{\vect{V}}$ a proportionality constant, which can be easily computed from \eqref{eqn:vperp} by projecting both sides of the equation against $\vect{V}$, rendering
\begin{equation}\label{eqn:beta}
\beta_{\vect{V}}=\frac{\vect{V}\cdot \left(\vect{\theta}^{-1}\vect{\mathcal{Q}_{\vect{V}}}\right)\vect{u}}{\vect{V}\cdot \vect{\theta}^{-1} \vect{V}}f''(\alpha).
\end{equation}

Substitution of \eqref{eqn:beta} into \eqref{eqn:vperp} and after some algebraic manipulation gives
\begin{equation}\label{eqn:vperp2}
\vect{V}_{\perp}g'(\alpha)=\vect{\theta}^{-1}\vect{\Omega}\vect{\mathcal{Q}_{\vect{V}}}f''(\alpha);\qquad \vect{\Omega}=\left[\frac{\vect{V}\otimes \vect{\theta}^{-1} \vect{V}}{\vect{V}\cdot \vect{\theta}^{-1} \vect{V}}-\vect{I}\right].
\end{equation}

Substitutions of above equation \eqref{eqn:vperp2} into \eqref{eq:equil_waves}, results in
\begin{equation}\label{eqn:equil_waves_2}
\vect{Q}\vect{u}=\rho c^2 \vect{u};\qquad
\vect{Q}=\vect{\mathcal{C}_{e,\vect{V}\vect{V}}} +\vect{\mathcal{Q}_{\vect{V}}}^T\vect{\theta}^{-1}\vect{\Omega}\vect{\mathcal{Q}_{\vect{V}}},
\end{equation}
where $\vect{Q}$ is the so-called electro-mechanical acoustic tensor which is a function of the constitutive tensors $\boldsymbol{\mathcal{C}}_e$, $\boldsymbol{\mathcal{Q}}$, $\vect{\theta}$ (dependent upon the state of electro-deformation), as well as the polarisation orientation $\vect{V}$. A reasonable physical requirement is that of the existence of real wave speeds, which requires the positive semi-definiteness of the acoustic tensor $\vect{Q}$, that is,
\begin{equation}
\vect{u}\cdot \vect{Q} \vect{u}\geq 0;\qquad \forall\, \boldsymbol{\mathcal{U}},\vect{u},\vect{V}.
\end{equation}
\subsection{Polyconvex invariants}\label{appendix:polyconvex_invariants}
The purpose here is prove that invariant $I_8^{\text{pol}}$ is convex with respect to $\vect{H}$ and $\vect{D}_0$. For that, let us compute $D^2I_8^{\text{pol}} [\delta\vect{H},\delta\vect{D}_0;\delta\vect{H},\delta\vect{D}_0]$, i.e.
\begin{equation}
    \begin{aligned}
    D^2I_8^{\text{pol}} [\delta\vect{H},\delta\vect{D}_0;\delta\vect{H},\delta\vect{D}_0]& =4\vert\vert \vect{H}\vert\vert^2\vert\vert \delta\vect{H}\vert\vert^2 + 8(\vect{H}:\delta\vect{H})^2\\ &+2\vert\vert\delta\vect{H}\vect{D}_0+\vect{H}\delta\vect{D}_0\vert\vert^2+4\vect{H}\vect{D}_0\cdot \delta\vect{H}\delta\vect{D}_0\\
    &+4\vert\vert\vect{D}_0\vert\vert^2\vert\vert\delta\vect{D}_0\vert\vert^2 + 8(\vect{D}_0\cdot \delta\vect{D}_0)^2\\
    &\geq 4\vert\vert \vect{H}\vert\vert^2\vert\vert \delta\vect{H}\vert\vert^2   +4\vect{H}\vect{D}_0\cdot \delta\vect{H}\delta\vect{D}_0+4\vert\vert\vect{D}_0\vert\vert^2\vert\vert\delta\vect{D}_0\vert\vert^2
    \end{aligned}
\end{equation}
Application of the Cauchy-Schwarz inequality in the previous equation yields
\begin{equation}
    \begin{aligned}
    D^2I_8^{\text{pol}} [\delta\vect{H},\delta\vect{D}_0;\delta\vect{H},\delta\vect{D}_0]& 
    \geq 4\Big(\vert\vert \vect{H}\vert\vert^2\vert\vert \delta\vect{H}\vert\vert^2   -\vert\vert\vect{H}\vect{D}_0\vert\vert \vert\vert\delta\vect{H}\delta\vect{D}_0\vert\vert+\vert\vert\vect{D}_0\vert\vert^2\vert\vert\delta\vect{D}_0\vert\vert^2\Big)
    \end{aligned}
\end{equation}
and by definition, the norm of a matrix $\vect{H}$ verifies that
\begin{equation}
    \vert\vert\vect{H}\vect{v}\vert\vert\leq \vert\vert\vect{H}\vert\vert \vert\vert\vect{v}\vert\vert
\end{equation}
for any vector $\vect{v}$. Therefore, it is possible to conclude that
\begin{equation}
    \begin{aligned}
    D^2I_8^{\text{pol}} [\delta\vect{H},\delta\vect{D}_0;\delta\vect{H},\delta\vect{D}_0]& 
    \geq 4\Big(\vert\vert \vect{H}\vert\vert^2\vert\vert \delta\vect{H}\vert\vert^2   -\vert\vert\vect{H}\vert\vert\vert\vert\vect{D}_0\vert\vert \vert\vert\delta\vect{H}\vert\vert\vert\vert\delta\vect{D}_0\vert\vert+\vert\vert\vect{D}_0\vert\vert^2\vert\vert\delta\vect{D}_0\vert\vert^2\Big)\\
    &\geq 4\Big(\vert\vert \vect{H}\vert\vert^2\vert\vert \delta\vect{H}\vert\vert^2   -2\vert\vert\vect{H}\vert\vert\vert\vert\vect{D}_0\vert\vert \vert\vert\delta\vect{H}\vert\vert\vert\vert\delta\vect{D}_0\vert\vert+\vert\vert\vect{D}_0\vert\vert^2\vert\vert\delta\vect{D}_0\vert\vert^2\Big)\\
    &=4\Big(\vert\vert \vect{H}\vert\vert\vert\vert \delta\vect{H} \vert\vert- \vert\vert\vect{D}_0\vert\vert\vert\vert\delta\vect{D}_0\vert\vert\Big)^2\geq 0
    \end{aligned}
\end{equation}
which therefore, concludes the proof. A similar derivation can be carried out in order to proof convexity with respect to $\vect{d}$ and $\vect{F}$ of the polyconvex invariant $K_{2,\text{pol}}^{\mathcal{D}_{\infty}}$.

\subsection{Finite Element implementation}\label{appendix:FEM}
This Appendix briefly summarises the Finite Element spatial discretisation procedure implemented for the second and third numerical examples. Specifically, we focused on the mixed Hu-Washizu type of variational principle $\tilde{\Pi}_e$ \eqref{eqn:mixed variational principle II}, whose stationary conditions yield
\begin{equation}\label{eq:stationary_1}
\begin{aligned}
D \tilde{\Pi}_e [\delta \vect{\phi}]&=\int_{{\Omega}_0} \partial e_{\vect{F}} : \vect{\nabla}_0 \delta \vect{\phi} \;dV - D \Pi_{\text{ext}} [\delta \vect{\phi}]=0;\\
D \tilde{\Pi}_e [\delta \vect{D}_0]&=\int_{{\Omega}_0} (\partial e_{\vect{D}_0}+ \vect{\nabla}_0 \varphi)\cdot \delta \vect{D}_0\;dV=0;\\
D \tilde{\Pi}_e [\delta \varphi]&=\int_{{\Omega}_0} \vect{D}_0 \cdot \vect{\nabla}_0 \delta \varphi \;dV - D \Pi_{\text{ext}} [\delta \varphi]=0,\\
\end{aligned}
\end{equation}
where $\{\vect{\phi},\vect{D}_0,\varphi\} \in \mathbb{V}^{\vect{\phi}}_{\bar{\vect{\phi}}} \times \mathbb{V}^{\vect{D}_0} \times \mathbb{V}^{\varphi}_{\bar{\varphi}}$ and with admissible variations $\{\delta \vect{\phi}, \delta \vect{D}_0, \delta \varphi\} \in \mathbb{V}^{\vect{\phi}}_{\vect{0}} \times \mathbb{V}^{\vect{D}_0} \times \mathbb{V}^{\varphi}_0$, defined as
\begin{eqnarray}
\nonumber
\mathbb{V}^{\vect{\phi}}_{\bar{\vect{\phi}}} &=&\begin{Bmatrix}\vect{\phi}:\O_0\rightarrow\R^3;\; [\vect{\phi}]_i\in H^1(\O_0);\; J>0;\;\vert\; \vect{\phi} = \bar{\vect{\phi}}\;\text{on}\; \partial\O_0^u  \end{Bmatrix}; \\
\nonumber
\mathbb{V}^{{\varphi}}_{\bar{\varphi}} &=&\begin{Bmatrix}{\varphi}:\O_0\rightarrow\R;\; {\varphi}\in H^1(\O_0)\;\vert\; {\varphi} = \bar{{\varphi}} \;\text{on}\; \partial\O_0^\varphi  \end{Bmatrix};\\
\nonumber
\mathbb{V}^{\vect{D}_0}&=&\begin{Bmatrix}\bm{D}_0:\O_0\rightarrow\R^3;\; \bm{D}_0\in L_2(\O_0) \end{Bmatrix}.
\end{eqnarray}

Further, we apply standard isoparametric Finite Elements and the domain $\O_0$ is divided into elements and the displacement, electric potential and electric displacement fields are approximated by means of approximation finite element spaces $\mathbb{V}^{\vect{\phi}}_{\bar{\vect{\phi}},h} \subset \mathbb{V}^{\vect{\phi}}_{\bar{\vect{\phi}}}$, $V^{\varphi}_{\bar{\varphi},h} \subset V^{\varphi}_{\bar{\varphi}}$, and $V^{\vect{D}_0}_h \subset V^{\vect{D}_0}$ defined as
\begin{eqnarray}
\nonumber
\mathbb{V}^{\vect{\phi}}_{\bar{\vect{\phi}},h} &=&\begin{Bmatrix}\vect{\phi}:\O_0\rightarrow\R^3;\; \vect{\phi} = \sum_{a=1}^{n^{\vect{\phi}}_{nod}}N_{\vect{\phi}}^a \vect{d}^a_{\vect{\phi}}\;\vert\; \vect{\phi} = \bar{\vect{\phi}}\;{\rm on}\; \partial\O_0^u  \end{Bmatrix}; \\
\nonumber
V^{\varphi}_{\bar{\varphi},h} &=&\begin{Bmatrix}{\varphi}:\O_0\rightarrow\R^3;\; \varphi = \sum_{a=1}^{n^\varphi_{nod}}N_\varphi^a d^a_\varphi\;\vert\; \varphi = \bar{\varphi}\;{\rm on}\; \partial\O_0^\varphi  \end{Bmatrix}; \\
\nonumber
V^{\vect{D}_0}_h&=&\begin{Bmatrix}\bm{D}_0:\O_0\rightarrow\R^3;\; \bm{D}_0= \sum_{a=1}^{n^D_{nod}}N_{\vect{D}_0}^a \vect{d}^a_{\vect{D}_0} \end{Bmatrix},
\end{eqnarray}
where for any field $i = \begin{Bmatrix}{{\vect{\phi}},\vect{D}_0,\varphi}\end{Bmatrix}$, $n^i_{nod}$ denotes number of interconnected nodes, $N_i^a$ is the $a$-th basis function and $\bm{d}_i^a$ is the $a$-th node associated with the appropriate field. Use of above Finite Element interpolation spaces into \eqref{eq:stationary_1} leads to a system of nonlinear algebraic equations, whose solution is obtained via a consistent Newton-Raphson linearisation iterative procedure. In order to reduce the computational
cost of the proposed formulation, a piecewise discontinuous interpolation of the field $\vect{D}_0$ is followed. A standard static condensation procedure \cite{Ortigosa2016_NewFramework_FiniteElements} is used to condense out the degrees of freedom of the field $\vect{D}_0$, yielding a formulation
with a cost comparable to that of a two-field $(\vect{\phi},\varphi)$ formulation.

\def\cprime{$'$} \def\ocirc#1{\ifmmode\setbox0=\hbox{$#1$}\dimen0=\ht0
  \advance\dimen0 by1pt\rlap{\hbox to\wd0{\hss\raise\dimen0
  \hbox{\hskip.2em$\scriptscriptstyle\circ$}\hss}}#1\else {\accent"17 #1}\fi}

%
\end{document}